\newcommand{\hq }{{/\kern -.185em/}}
\newtheorem{Fact}{Fact}
\newcommand{\bbZ}{{\mathbb Z}}
\newcommand{\bbR}{{\mathbb R}}
\newcommand{\vx}{\mathbf{x}}
\newcommand\Vcalbar{\bar{\cal V}}
\newcommand\Vbar{{\bar V}}
\newcommand{\half}{{\textstyle \frac{1}{2}}}
\newcommand\Cbar{{\bar P}}
\newcommand\Gbar{{\bar \Gamma}}
\newcommand\Omegabar{{\bar \Omega}}
\begin{document}

\title{n-particle quantum statistics on graphs}

\author{J.M. Harrison\inst{1}, J.P. Keating\inst{2}, J.M. Robbins\inst{2}, A. Sawicki\inst{2,}\inst{3}}

\institute{Department of Mathematics, Baylor University, One Bear Place,
Waco, TX 76798-7328 USA \and School of Mathematics, University of Bristol, University Walk, Bristol BS8 1TW, UK \and Center for Theoretical Physics, Polish Academy of
Sciences, Al. Lotnik\'ow 32/46, 02-668 Warszawa, Poland}

\date{Received: date / Accepted: date}
\communicated{name}
\maketitle

\begin{abstract}
We develop a full characterization of abelian quantum statistics on graphs. We explain how the number of anyon phases is related to connectivity.  For $2$-connected graphs the independence of quantum statistics with respect to the number of particles is proven. For non-planar $3$-connected graphs we identify bosons and fermions as the only possible statistics, whereas for planar  $3$-connected graphs we show that one anyon phase exists. Our approach also yields an alternative proof of the structure theorem for the first homology group of $n$-particle graph configuration spaces. Finally, we determine the topological gauge potentials for $2$-connected graphs.
\end{abstract}

\section{Introduction}
In classical mechanics, particles are considered distinguishable. Therefore, the $n$-particle configuration space is the Cartesian product, $M^{\times n}$, where $M$ is the one-particle configuration space. By contrast, in quantum mechanics elementary particles may be considered indistinguishable.  This conceptual difference in the description of many-body systems prompted Leinaas and Myrheim \cite{LM77} (see also \cite{S70,W90}) to study classical configuration spaces of indistinguishable particles, $C_n(M)$, and led to the discovery of anyon statistics.

Indistinguishability of classical particles places constraints on the usual configuration space, $M^{\times n}$. Configurations that differ by particle exchange must be identified.  One also assumes that two classical particles cannot occupy the same configuration. Consequently, the classical configuration space of $n$ indistinguishable particles is the orbit space $C_n(M)=(M^{\times n}-\Delta)/S_{n}$, where $\Delta$ corresponds to the configurations for which at least two particle are at the same point in $M$, and $S_n$ is the permutation group.

Significantly, the space $C_n(M)$ may have non-trivial topology. One can, for example, easily calculate that for $n$ particles in $M=\mathbb{R}^{N}$ the first homology group $H_1(C_n(\mathbb{R}^N))$ is $\mathbb{Z}$ if $N=2$ and $\mathbb{Z}_{2}$ when $N\geq3$ \cite{D85,S12}. This fact, combined with the standard quantization procedure on topologically non-trivial configuration spaces, explains, at a kinematic level, the existence of anyons in two dimensions and only bosons or fermions in higher dimensions.  It also raises the question of what  quantum statistics are possible on spaces with richer topology.

In order to explore how the quantum statistics picture depends on topology, the case of two indistinguishable particles on a graph was  studied in \cite{JHJKJR} (see also \cite{BE92}). A graph $\Gamma$ is a network consisting of vertices (or nodes) connected by edges.  Quantum mechanically, one can either consider the one-dimensional Schr\"{o}dinger operator acting on the edges, with matching conditions for the wave functions at the vertices, or a discrete Schr\"{o}dinger operator acting on connected vertices (i.e.~a tight-binding model on the graph).  Such systems are of considerable independent interest and their single-particle quantum mechanics has been studied extensively in recent years \cite{Berkolaiko13}.  The extension of this theory to many-particle quantum graphs was another motivation for \cite{JHJKJR} (see also \cite{Bolte13}).  The discrete case turns out to be significantly easier to analyse, and in this situation it was found that a rich array of anyon statistics are kinematically possible.   Specifically, certain graphs were found to support anyons while others can only support fermions or bosons.  This was demonstrated by analysing the topology of the corresponding configuration graphs $C_2(\Gamma)=(\Gamma^{\times 2}-\Delta)/S_{2}$ in various examples.  It opens up the problem of determining general relations between the quantum statistics of a graph and its topology.

As noted above, mathematically the determination of quantum statistics reduces to finding the first homology group $H_{1}$ of the appropriate classical configuration space, $C_n(M)$.  Although the calculation for $C_n(\mathbb{R}^N)$ is relatively elementary, it becomes a non-trivial task when $\mathbb{R}^{N}$ is replaced by a general graph $\Gamma$. One possible route  is to use discrete Morse theory, as developed by Forman \cite{Forman98}. This is a combinatorial counterpart of classical Morse theory, which applies to cell complexes. In essence, it reduces the problem of finding $H_{1}(M)$, where $M$ is a cell complex, to the construction of certain discrete Morse functions, or equivalently discrete gradient vector fields. Following this line of reasoning
Farley and Sabalka \cite{FS05} defined the appropriate discrete vector fields and gave a formula for the first homology groups of tree graphs.  Recently, making extensive use of discrete Morse theory and some graph invariants, Ko and Park \cite{KP11} extended the results of \cite{FS05} to an arbitrary graph $\Gamma$.  However, their approach relies on a suite of relatively elaborate techniques -- mostly connected to a proper ordering of vertices and choices of trees to reduce the number of critical cells -- and the relationship to, and consequences for, the physics of quantum statistics are not easily identified.

In the current paper we give a full characterization of all possible abelian quantum statistics on graphs. In order to achieve this we develop a new set of ideas and methods which lead to an alternative proof of the structure theorem for the first homology group of the $n$-particle configuration space obtained by Ko and Park \cite{KP11}. Our reasoning, which is more elementary in that it makes minimal use of discrete Morse theory, is based on a set of simple combinatorial relations which stem from the analysis of some canonical small graphs.   The advantage for us of this approach is that it is explicit and direct.  This makes the essential physical ideas much more transparent and so enables us to identify the key topological determinants of the quantum statistics.  It also enables us to develop some further physical consequences. In particular we give a full characterization of the topological gauge potentials on $2$-connected graphs, and to identify some examples of particular physical interest, in which the quantum statistics have features that are subtle.

The paper is organized as follows. We start with a discussion, in section
\ref{sec:Examples}, of some physically interesting examples of quantum
statistics on graphs, in order to motivate the general theory that follows. In section \ref{sec:Graph-configuration-spaces}
we define some basic properties of graph configuration
spaces. In section \ref{sec:Two-particle-quantum-statistics}
we develop a full characterization of the first homology group for $2$-particle graph
configuration spaces. In section \ref{sec:N-particle-statistics-for}
we give a simple argument for the stabilization of quantum statistics
with respect to the number of particles for $2$-connected graphs.
Using this we obtain the desired result for $n$-particle graph configuration
spaces when $\Gamma$ is $2$-connected. In order to generalize the
result to $1$-connected graphs we consider star and fan graphs.
The  main result is obtained at the end of section \ref{sec:N-particle-statistics-on}.
The last part of the paper is devoted to the characterization
of topological gauge potentials for $2$-connected graphs.

\section{Quantum statistics on graphs\label{sec:Examples}}

In this section we discuss several examples which illustrate  some interesting and surprising aspects of quantum statistics on graphs.  A determining factor turns out to be the {\it connectivity} of a graph.  We recall (cf \cite{tutte01}) that a graph is {\it $k$-connected} if it remains connected after removing any  $k-1$ vertices.
According to Menger's theorem \cite{tutte01}, a graph is $k$-connected if and only if every pair of distinct vertices can be joined by at least $k$ disjoint paths.
A $k$-connected graph can be decomposed into $(k+1)$-connected components, unless it is complete \cite{Holberg92}.  Thus, a graph may be regarded as being built out of more highly connected components.
Quantum statistics, as we shall see, depends on $k$-connectedness up to $k = 3$.

\subsection{$3$-connected graphs}

Quantum statistics for a $3$-connected graph depends only on whether the graph is planar, and not on any additional structure.
 We recall that a graph is planar if it can be drawn in the plane without  crossings.
For planar $3$-connected graphs we will show that the statistics is characterised by a single anyon phase associated with cycles in which a pair of particles exchange positions.  For non-planar $3$-connected graphs, the statistics is either Bose or Fermi -- in effect, the anyon phase is restricted to be $0$ and $\pi$.  Thus, as far as quantum statistics is concerned, three- and higher-connected graphs behave like $\bbR^2$ in the planar case and $\bbR^d$, $d > 2$, in the nonplanar case. A new aspect for graphs is the possibility of combining planar and nonplanar components. The graph shown in  figure~\ref{fig:The-large-almost} consists of a large square lattice in which four cells have been replaced by a defect in the form of a $K_5$ subgraph, the (nonplanar) fully connected graph on five vertices. This local substitution makes the full graph nonplanar, thereby excluding anyon statistics.
\begin{figure}[h]
\begin{center}\includegraphics[scale=0.5]{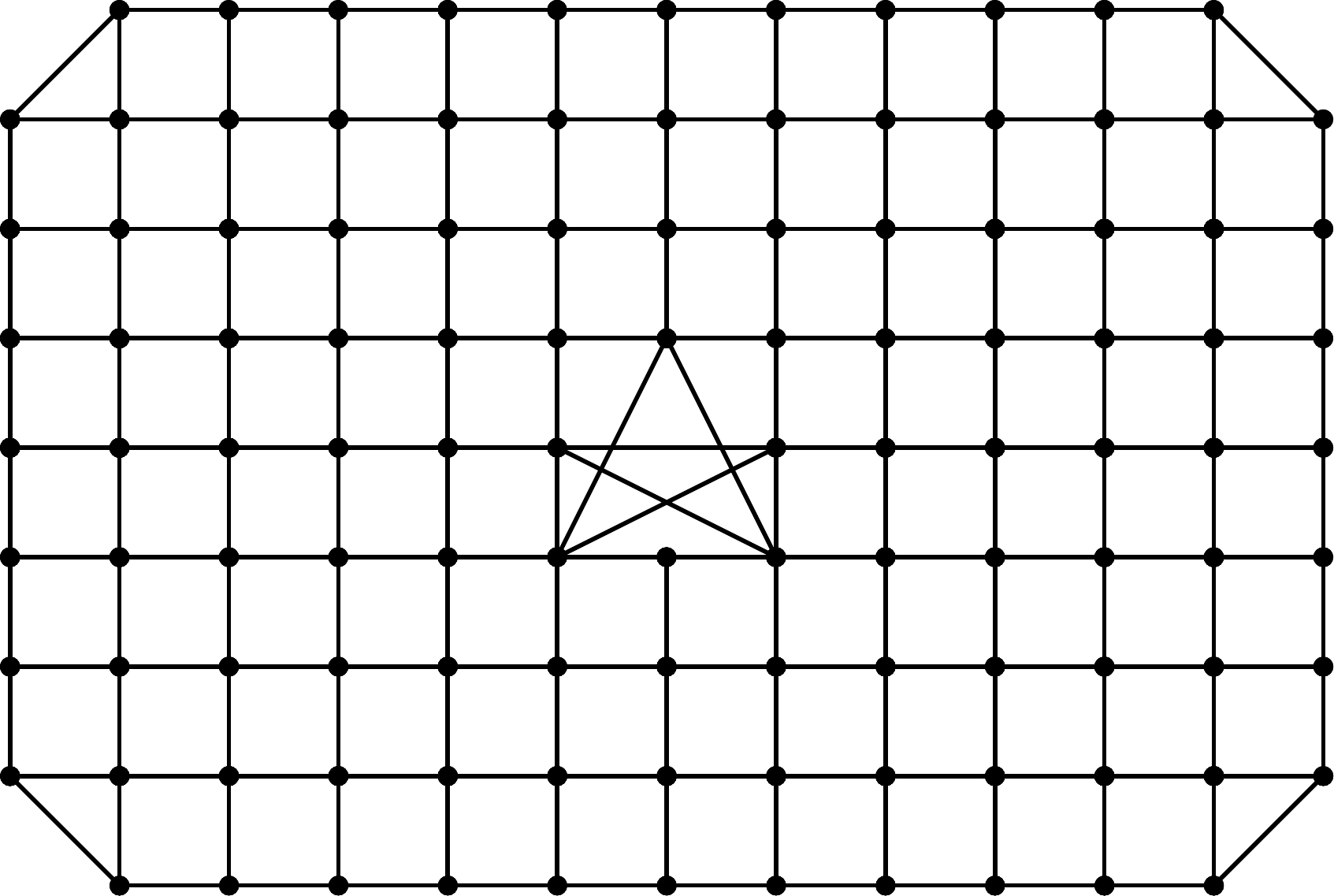}\end{center}
\caption{\label{fig:The-large-almost}The large almost planar $3$-connected
graph. }
\end{figure}

One of the simplest examples of this phenomenon is provided by the graph $G$ shown in figure~\ref{fig: GK5}. $G$ is planar $3$-connected, and therefore supports an anyon phase.  However, if  an additional edge $e$ is added, the resulting graph is $K_5$, and therefore supports only Bose or Fermi statistics.  One can continuously interpolate from a quantum Hamiltonian defined on $K_5$ to one defined by $G$ by introducing  an amplitude coefficient $\epsilon$ for transitions along $e$.  For $\epsilon = 0$, the edge $e$ is effectively absent, and the resulting Hamiltonian is  defined on $G$.  This situation might appear to be paradoxical; how could anyon statistics, well defined for $\epsilon = 0$, suddenly disappear for $\epsilon \neq 0$? The resolution lies in the fact that an anyon phase defined for $\epsilon = 0$ introduces, for $\epsilon \neq 0$, physical effects that cannot be attributed to quantum statistics (unless the phase is $0$ or $\pi$).  The transition between planar and nonplanar geometries, which is easily effected with quantum graphs, merits further study.
\begin{figure}[H]
\begin{center}\includegraphics[scale=0.15]{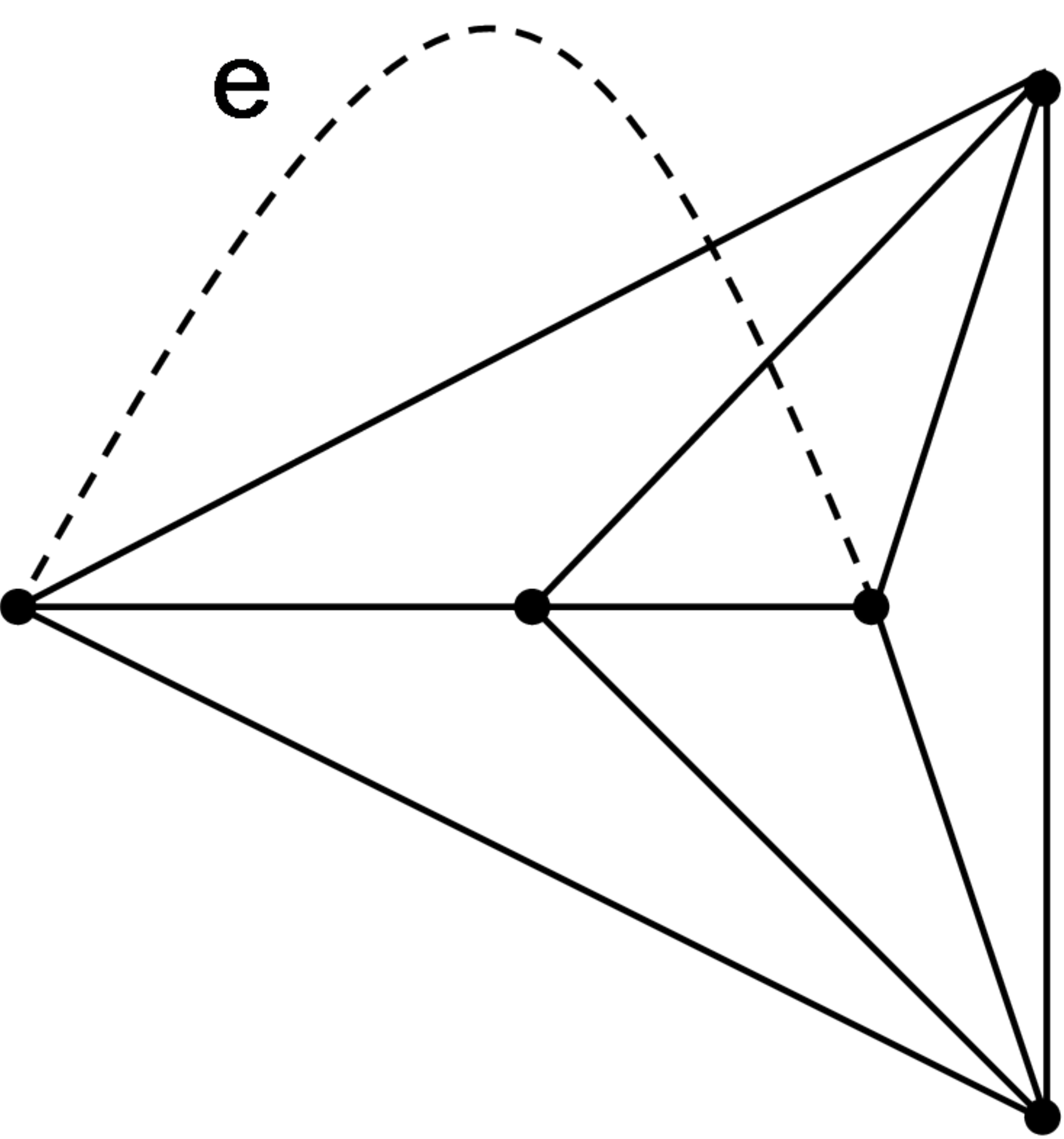}\end{center}
\caption{\label{fig: GK5}The graph $G$ (without  the edge $e$) is planar $3$-connected.  With $e$, the graph is $K_5$. }
\end{figure}

\subsection{$2$-connected graphs}

Quantum statistics on $2$-connected graphs is more complex, and depends on the decomposition of individual  graphs into cycles and $3$-connected components (see Section~\ref{subsec: two-connected}).
There may be multiple anyon and $\bbZ_2$ phases. But $2$-connected graphs share the following important property: their quantum statistics do not depend on the number of particles, and therefore can be regarded as a characteristic of the particle species.  This property is important physically; it means that there is a building-up principle for increasing the number of particles in the system.  This is described in detail in Section~\ref{sec: gauge potential},
 where we show how to construct an $n$-particle Hamiltonian from a two-particle Hamiltonian. Interesting examples are also obtained by building $2$-connected graphs out of higher-connected components.  Figure~\ref{fig: BF_chain} shows a chain of identical non-planar 3-connected components. The links between components, represented by lines in figure~\ref{fig: BF_chain}, consist of at least two edges, so that resulting graph is $2$-connected. In this case, the quantum statistics is in fact independent of the number of particles, and may be determined by specifying exchange phases ($0$ or $\pi$) for each component in the chain.  Thus, particles can act as bosons or fermions in different parts of the graph.
\begin{figure}[h]
\begin{center}\includegraphics[scale=0.25]{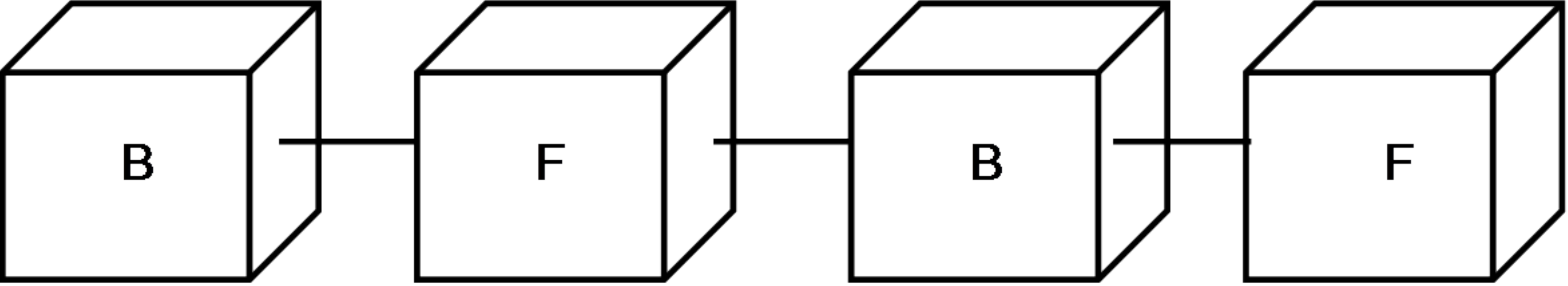}\end{center}
\caption{\label{fig: BF_chain} Linear chain of $3$-connected nonplanar components with alternating Bose and Fermi statistics.}
\end{figure}

\subsection{$1$-connected graphs}

Quantum statistics on graphs achieves its full complexity for 1-connected graphs, in which case it also depends on the number of particles $n$.  A representative example, treated in detail in Section~\ref{sub:The-star-graphs},
is a star graph with $E$ edges, for which the number of anyon phases is given by
\begin{gather*}
\beta_{n}^{E}={n+E-2 \choose E-1}\left(E-2\right)-{n+E-2 \choose E-2}+1,
\end{gather*}
and therefore depends on both $E$ and $n$.

\subsection{Aharonov-Bohm phases}

Configuration-space cycles on which one particle moves around a circuit $C$ while the others remain fixed play an important role in the analysis of quantum statistics which follows.  We call these Aharonov-Bohm cycles, and the corresponding phases Aharonov-Bohm phases, because they correspond physically to magnetic fluxes threading $C$.  In many-body systems, Aharonov-Bohm phases and quantum statistics phases can interact in interesting ways.  In particular, Aharonov-Bohm phases can depend on the positions of the stationary particles.  An example is shown in the two-particle  octahedron graph (see figure~\ref{fig: oct_AB}), in which the  Aharonov-Bohm phase associated with one particle going around the equator depends on whether the second particle is at the north or south pole.  For $3$-connected non-planar graphs, it can be shown that Aharonov-Bohm phases are independent of the positions of the stationery particles.  (The octahedron graph, despite appearances, is planar.)
\begin{figure}[h]
\begin{center}~~~~~\includegraphics[scale=0.2]{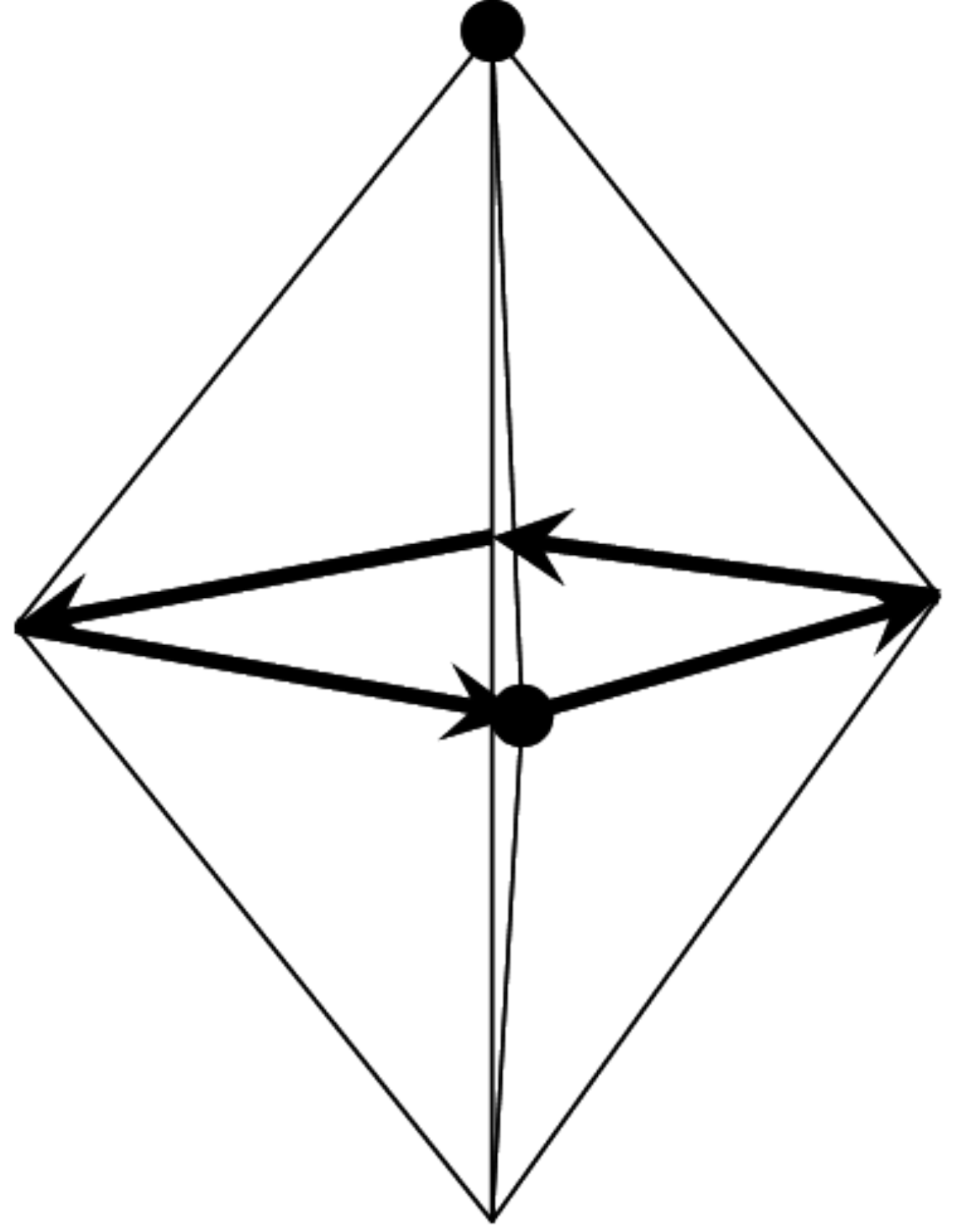}~~~~~~~~~~~~~~~~~~\includegraphics[scale=0.2]{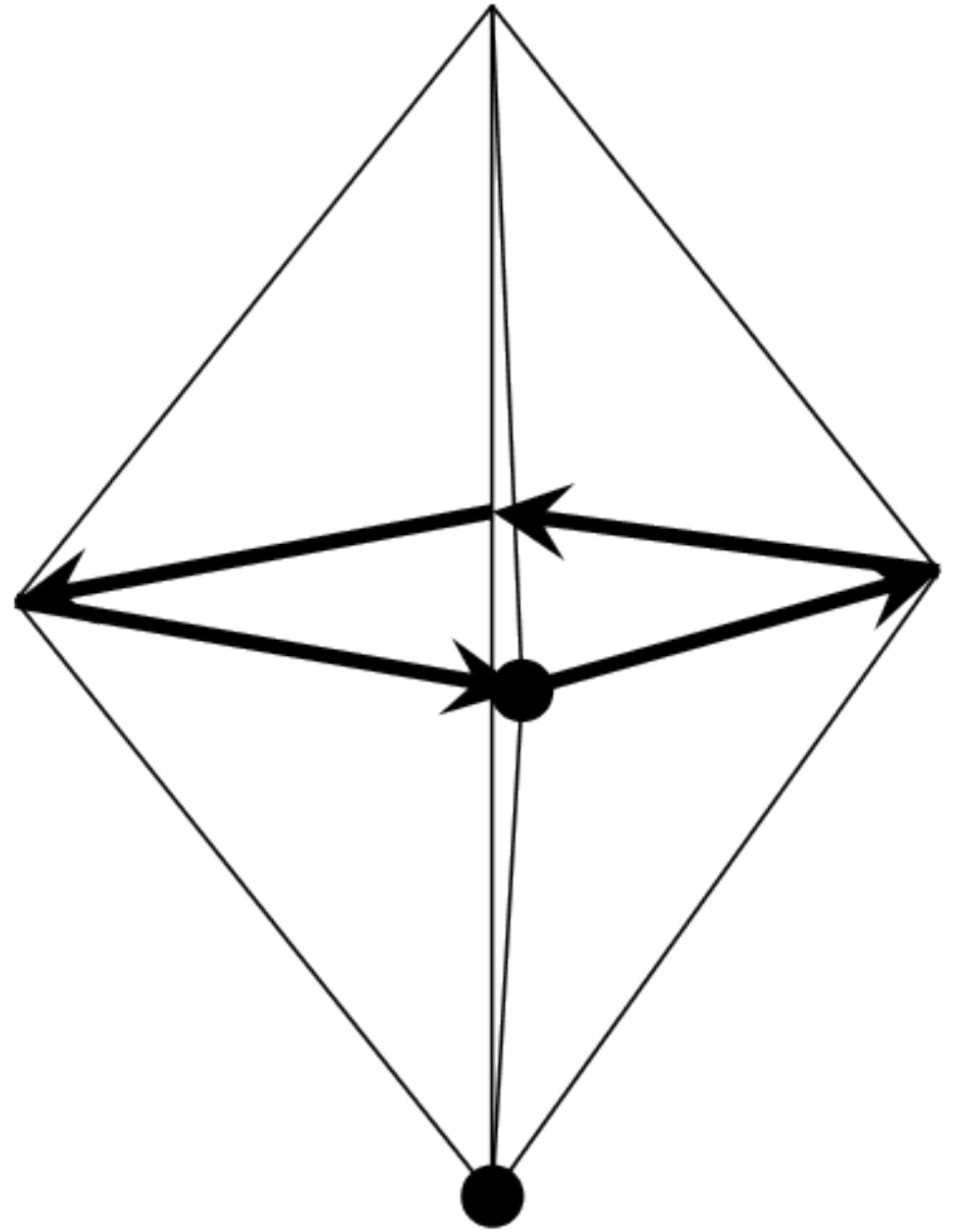}\end{center}
\caption{\label{fig: oct_AB} The Aharonov-Bohm phase for the equatorial cycle depends on whether the second particle is at the north or south pole.}
\end{figure}

\section{Graph configuration spaces\label{sec:Graph-configuration-spaces}}

Let $\Gamma$ be a metric connected simple graph with $V$
vertices and $E$ edges. In a metric graph edges correspond to finite closed intervals of $\mathbb{R}$.  However, as we will be interested in the topology of the graph, the length of the edges will not play a role in the discussion.
An undirected edge between vertices $v_{1}$
and $v_{2}$ will be denoted by $v_{1}\leftrightarrow v_{2}$.  It will also be convenient to be able to label directed edges
so $v_{1}\rightarrow v_{2}$ and $v_{2}\rightarrow v_{1}$ are the directed edges associated with $v_{1}\leftrightarrow v_{2}$.
A path joining two vertices $v_1$ and $v_m$ is then specified by a sequence of $m-1$ directed edges, written $v_1 \rightarrow v_2 \rightarrow \dots \rightarrow v_m$.

We define the \emph{$n$-particle configuration space} as the quotient space
\begin{eqnarray}
C_{n}(\Gamma)=\left(\Gamma^{\times n}-\Delta\right)/S_{n},
\end{eqnarray}
where $S_{n}$ is the permutation group of $n$ elements and
\begin{eqnarray}
\Delta=\{(x_{1},x_{2},\ldots,x_{n}):\exists_{i,j}\, x_{i}=x_{j}\},
\end{eqnarray}
is the set of coincident configurations.
We are interested in the calculation of the first homology group,
$H_{1}(C_{n}(\Gamma))$ of $C_{n}(\Gamma)$. The space $C_{n}(\Gamma)$
is not a cell complex. However, it is homotopy equivalent to the space
$\mathcal{D}^{n}(\Gamma)$, which is a cell complex, defined below.

Recall that a cell complex $X$ is a nested sequence of topological spaces
\begin{eqnarray}
X^{0}\subseteq X^{1}\subseteq\dots\subseteq X^{n},
\end{eqnarray}
where the $X^{k}$'s are the so-called $k$-skeletons defined as follows:
\begin{itemize}
\item The $0$ - skeleton $X^{0}$ is a finite set of points.
\item For $\mathbb{N}\ni k>0$, the $k$ - skeleton $X^{k}$ is the result
of attaching $k$ - dimensional balls $B_{k} = \{x\in\mathbb{R}^{k}\,:\,\|x\|\leq1\}$ to $X^{k-1}$ by gluing
maps
\begin{eqnarray}
\sigma:S^{k-1}\rightarrow X^{k-1},
\end{eqnarray}
where $S^{k-1}$ is the unit-sphere $S^{k-1}=\{x\in\mathbb{R}^{k}\,:\,\|x\|=1\}$.
\end{itemize}
A $k$-cell is the interior of the ball $B_{k}$ attached to the $(k-1)$-skeleton $X^{k-1}$.

Every simple graph $\Gamma$ is naturally a cell complex; the vertices are $0$-cells (points)
and edges are $1$-cells ($1$-dimensional balls whose boundaries are the $0$-cells). The product $\Gamma^{\times n}$ then naturally
inherits a cell complex structure. The cells of $\Gamma^{n}$ are
Cartesian products of cells of $\Gamma$. It is clear
that the space $C_{n}(\Gamma)$ is not a cell complex as
points belonging to $\Delta$ have been deleted. Following \cite{Abrams}
we define an \emph{$n$-particle combinatorial configuration space} as
\begin{eqnarray}
\mathcal{D}^{n}(\Gamma)=(\Gamma^{\times n}-\tilde{\Delta})/S_{n},
\end{eqnarray}
where $\tilde{\Delta}$ denotes all cells whose closure intersects
with $\Delta$. The space $\mathcal{D}^{n}(\Gamma)$ possesses a natural
cell complex structure. Moreover,
\begin{theorem}
\label{Abrams_thm}\cite{Abrams} For any graph $\Gamma$ with at
least $n$ vertices, the inclusion $\mathcal{D}^{n}(\Gamma)\hookrightarrow C_{n}(\Gamma)$
is a homotopy equivalence iff the following hold:
\begin{enumerate}
\item Each path between distinct vertices of valence not equal to two passes
through at least $n-1$ edges.
\item Each closed path in $\Gamma$ passes through at least $n+1$ edges.
\end{enumerate}
\end{theorem}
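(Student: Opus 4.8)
The plan is to prove both implications by comparing the continuous space $C_n(\Gamma)$ with the combinatorial complex $\mathcal{D}^{n}(\Gamma)$ through an explicit deformation retraction, establishing necessity by exhibiting homotopy-type discrepancies when a hypothesis fails and sufficiency by constructing the retraction. Throughout I would picture a point of $C_n(\Gamma)$ as an unordered $n$-tuple of pairwise distinct points of $\Gamma$, each lying at a vertex or in the interior of an edge, and a cell of $\mathcal{D}^{n}(\Gamma)$ as an assignment of $n$ pairwise non-adjacent closed cells (vertices and edges), one per particle.

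For necessity I would argue contrapositively using two canonical small configurations. If condition (1) fails there is a path of at most $n-2$ edges joining two vertices of valence $\neq 2$; then the particles are too crowded near this path to be separated onto distinct discrete cells, and I would exhibit a continuous path (or loop) in $C_n(\Gamma)$ that cannot be homotoped into $\mathcal{D}^{n}(\Gamma)$, so the inclusion fails to induce an isomorphism on $H_1$ — indeed it may fail already on $\pi_0$, since $\mathcal{D}^{n}(\Gamma)$ can be disconnected for insufficiently subdivided graphs. If condition (2) fails there is a cycle of at most $n$ edges; placing $n$ particles on it leaves no vacant vertex, so the discrete ``rotation'' loop that generates the $S^1$-factor of the topological configuration space of the cycle is absent from $\mathcal{D}^{n}(\Gamma)$, again changing $H_1$. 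In both cases the mismatch is visible already at the level of $H_1$, which is all we ultimately require.

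For sufficiency I would build an $S_n$-equivariant strong deformation retraction of $\Gamma^{\times n}-\Delta$ onto the open subset lying over $\mathcal{D}^{n}(\Gamma)$, and then pass to the $S_n$-quotient. The retraction is defined locally and assembled: to each configuration I assign a target discrete cell by flowing every particle toward a chosen endpoint of the edge it occupies, resolving conflicts — several particles converging on one vertex — by giving the vertex to the nearest particle and the adjoining edges, in order of distance, to the next ones. Conditions (1) and (2) are exactly what make this assignment well defined and collision-free: condition (1) guarantees at least $n-1$ edges along any path between vertices of valence $\neq 2$, hence enough intermediate degree-two vertices to queue all particles that must pass a branch point without overlap, while condition (2) guarantees that every cycle carries at least one vacant vertex at all times, so the flow can never wrap one particle onto another around a loop. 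I would then move each particle along $\Gamma$ toward its assigned position at a controlled speed, verify that pairwise distinctness is preserved throughout, and check continuity and equivariance — the rule uses only positions, not labels, so equivariance is automatic — so that the homotopy descends to $C_n(\Gamma)$.

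The main obstacle is the collision-free verification in the sufficiency direction: making the conflict-resolution rule single-valued and continuous across the loci where a particle crosses a vertex, or where the identity of the ``nearest'' particle changes, and proving that no two trajectories ever meet. This is where the bounds $n-1$ and $n+1$ must be used sharply — the necessity examples show they are tight, so the retraction has to exploit essentially all the available room — and the bookkeeping of which particle claims which vertex or edge near a high-valence vertex is the delicate core of the argument. I would mitigate the gluing issues by defining the homotopy equivariantly on $\Gamma^{\times n}-\Delta$ from the outset rather than attempting to lift a map from the quotient.
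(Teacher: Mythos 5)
First, a point of reference: the paper offers no proof of this statement. Theorem~\ref{Abrams_thm} is quoted from Abrams' thesis \cite{Abrams}, and the text following it gives only an informal gloss on the two conditions. So there is no proof in the paper to compare yours against; your proposal has to stand on its own as a proof of Abrams' theorem, and as written it does not. Your sufficiency strategy --- an $S_n$-equivariant deformation retraction of $\Gamma^{\times n}-\Delta$ onto the preimage of $\mathcal{D}^{n}(\Gamma)$ (which, incidentally, is a \emph{closed} subcomplex, not an open subset) --- is the right one and is essentially Abrams' own. But the entire mathematical content of that direction lies in the step you explicitly defer: making the target-cell assignment and the flow single-valued, continuous across the loci where a particle sits exactly on a vertex or where the identity of the ``nearest'' particle changes, collision-free, and exhibiting precisely where the bounds $n-1$ and $n+1$ are consumed. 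The naive ``give the vertex to the nearest particle'' rule is discontinuous at ties and at branch points, and repairing it is exactly the hard part of Abrams' argument; announcing that this is ``the delicate core'' and stopping there leaves the theorem unproved.

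The necessity direction has a separate gap. Observing that a particular loop --- the rotation of $n$ particles around a cycle with at most $n$ edges --- has no representative in $\mathcal{D}^{n}(\Gamma)$ does not show that the inclusion fails to be a homotopy equivalence: you must show that its class in $\pi_1$ or $H_1$ of $C_{n}(\Gamma)$ is not in the image of the induced map from $\mathcal{D}^{n}(\Gamma)$, and when the short cycle (or the short path of condition~1) sits inside a larger graph the particles are not confined to it, so the local count of vacant vertices does not immediately produce a global invariant mismatch. The same objection applies to the unargued claim that $\mathcal{D}^{n}(\Gamma)$ ``can be disconnected'' when condition~1 fails. These points are plausibly repairable, but in the current write-up they are assertions rather than arguments, so both directions of the equivalence remain open.
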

\noindent Following \cite{Abrams,FS05} we refer to a graph $\Gamma$ with properties 1 and 2 as \emph{sufficiently subdivided}. For $n=2$ these conditions are automatically satisfied
(provided $\Gamma$ is simple). Intuitively, they can be understood
as follows:
\begin{enumerate}
\item In order to have homotopy equivalence between $\mathcal{D}^{n}(\Gamma)$
and $C_{n}(\Gamma)$, we need to be able to accommodate $n$ particles
on every edge of graph $\Gamma$. This is done by introducing $n-2$ trivial vertices of degree $2$ to make a line subgraph between every adjacent pair of non-trivial vertices in the original graph $\Gamma$.
\item For every cycle there is at least one free (not occupied) vertex which
enables the exchange of particles around this cycle.
\end{enumerate}
For a sufficiently subdivided graph $\Gamma$ we can now effectively treat $\Gamma$ as a combinatorial graph where particles are accommodated at vertices and hop between adjacent unoccupied vertices along edges of $\Gamma$.

Using Theorem \ref{Abrams_thm}, the problem of finding $H_{1}(C_{n}(\Gamma))$
is reduced to the problem of computing $H_{1}(\mathcal{D}^{n}(\Gamma))$.
In the next sections we show how to determine $H_{1}(\mathcal{D}^{n}(\Gamma))$
for an arbitrary simple graph $\Gamma$. Note, however, that by the
structure theorem for finitely generated modules \cite{Nakahara}

\begin{gather}
H_{1}(\mathcal{D}^{n}(\Gamma))=\mathbb{Z}^{k}\oplus T_{l}\label{eq:finite-module}
\end{gather}
where $T_{l}$ is the torsion, i.e.
\begin{gather}
T_{l}=\mathbb{Z}_{n_1}\oplus\ldots\oplus\mathbb{Z}_{n_l},\label{eq:finite-module-torsion}
\end{gather}
and $n_{i}|n_{i+1}$. In other words $H_{1}(\mathcal{D}^{n}(\Gamma))$
is determined by $k$ free parameters $\{\phi_{1},\ldots,\phi_{k}\}$
and $l$ discrete parameters $\{\psi_{1},\ldots,\psi_{l}\}$ such that for
each $i\in\{1,\ldots l\}$

\begin{gather}
n_{i}\psi_{i}=0\,\,\mbox{mod}\,2\pi,\,\,n_i\in \mathbb{N}\,\,\,\:\mbox{and}\,\,n_{i}|n_{i+1}.\label{eq:finite-module-1}
\end{gather}
Taking into account their physical interpretation we will call the
parameters $\phi$ and $\psi$ continuous and discrete phases respectively.

\section{Two-particle quantum statistics\label{sec:Two-particle-quantum-statistics}}

In this section we fully describe the first homology group $H_{1}(\mathcal{D}^{2}(\Gamma))$
for an arbitrary connected simple graph $\Gamma$. We start with three simple examples: a cycle, a Y-graph and a lasso. The $2$-particle discrete configuration
space of the lasso reveals an important relation between the exchange
phase on the Y-graph and on the cycle. Combining this relation with
an ansatz for a perhaps over-complete spanning set of the cycle space of
$\mathcal{D}^{2}(\Gamma)$ and some combinatorial properties of $k$-connected
graphs, we give a formula for $H_{1}(\mathcal{D}^{2}(\Gamma))$. Our
argument is divided into three parts; corresponding to $3$-, $2$- and $1$-connected graphs respectively.

\paragraph{Three examples }
\begin{itemize}
\item Let $\Gamma$ be a triangle graph shown in figure \ref{fig:The-triangle}(a).
Its combinatorial configuration space $\mathcal{D}^{2}(\Gamma)$ is shown
in figure 1(b). The cycle $(1,2)\rightarrow(1,3)\rightarrow(2,3)\rightarrow(1,2)$
is not contractible and hence $H_{1}(\mathcal{D}^{2}(\Gamma))=\mathbb{Z}$.
In other words we have one free phase $\phi_{c}$ and no torsion.
\end{itemize}
\begin{figure}[h]
\begin{center}\includegraphics[scale=0.45]{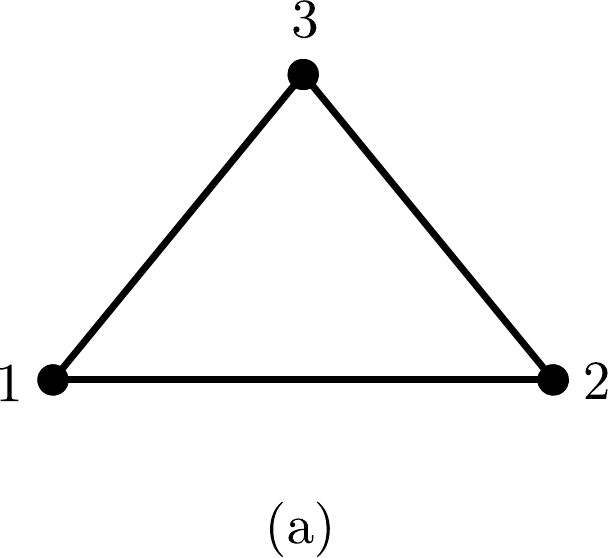}~~~~~~~~~~~~~~~~~~~~~~~~~~~~\includegraphics[scale=0.45]{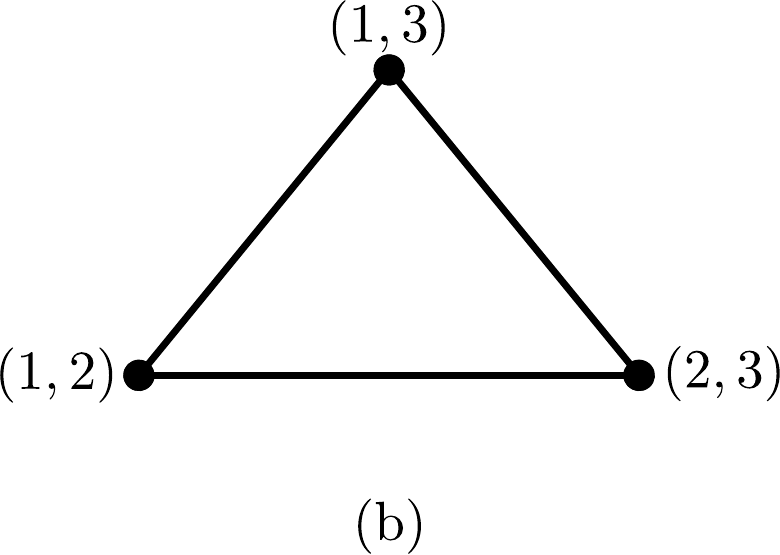}\end{center}

\caption{\label{fig:The-triangle}(a) The triangle graph $\Gamma$ (b) The
$2$-particle configuration space $\mathcal{D}^{2}(\Gamma)$}
\end{figure}

\begin{itemize}
\item Let $\Gamma$ be a Y-graph shown in figure \ref{fig:(a)-The-Y}(a).
Its combinatorial configuration space $\mathcal{D}^{2}(\Gamma)$ is shown
in figure 2(b). The cycle $(1,2)\rightarrow(1,3)\rightarrow(2,3)\rightarrow(3,4)\rightarrow(2,4)\rightarrow(1,4)\rightarrow(1,2)$
is not contractible and $H_{1}(\mathcal{D}^{2}(\Gamma))=\mathbb{Z}$.
Hence we have one free phase $\phi_{Y}$ and no torsion.
\end{itemize}
\begin{figure}[h]
\begin{center}~~~~~~~~~~~~~~~~\includegraphics[scale=0.4]{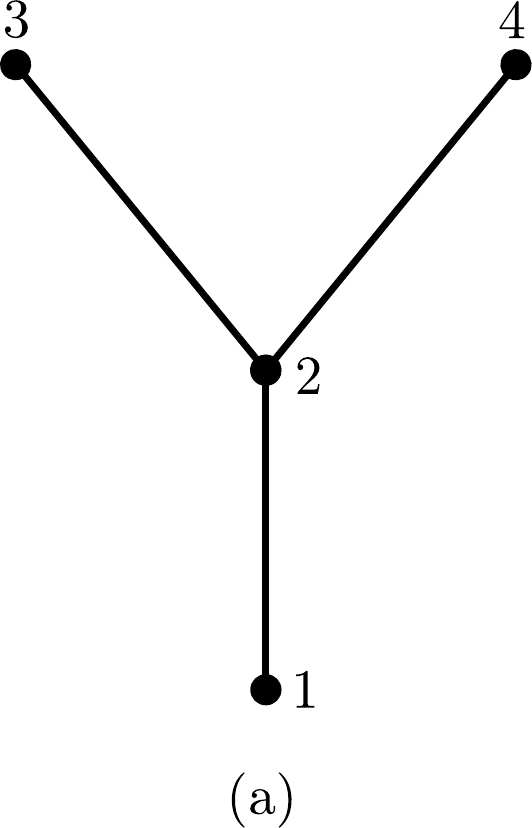}~~~~~~~~~~~~~~~~~~\includegraphics[scale=0.4]{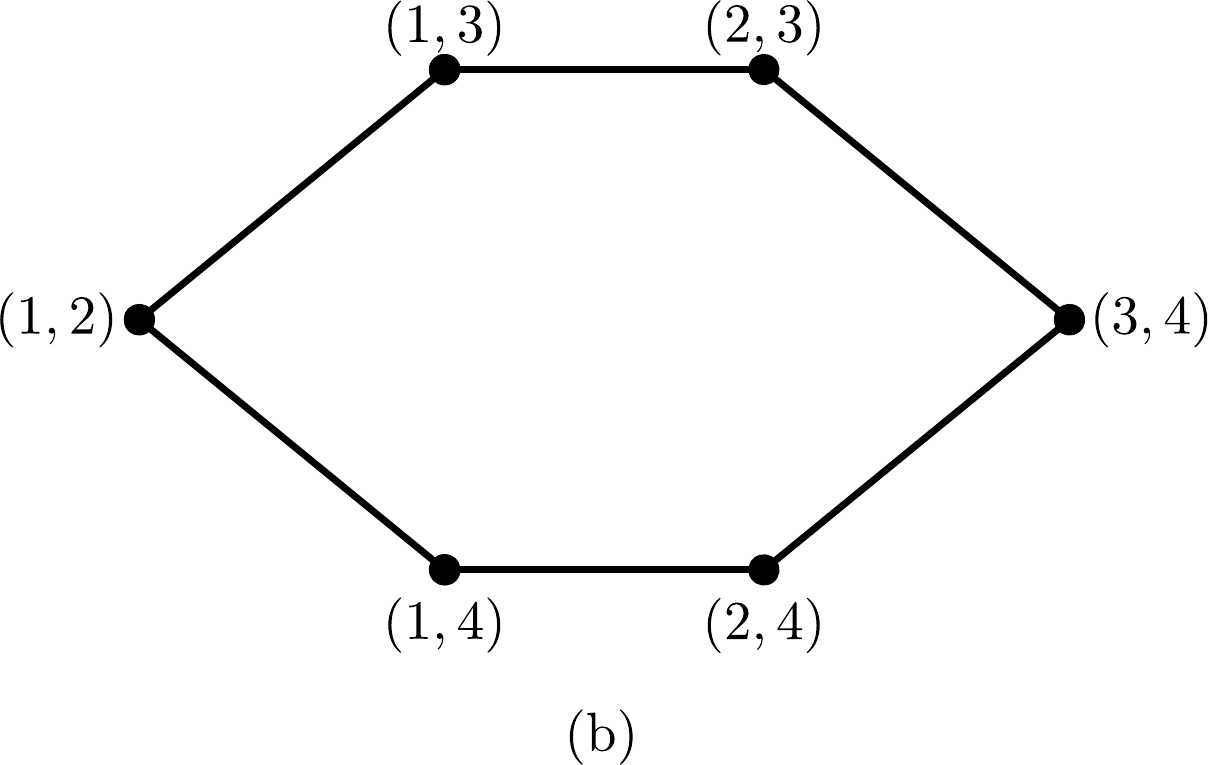}\end{center}

\caption{\label{fig:(a)-The-Y}(a) The Y-graph $\Gamma$ (b) The $2$-particle
configuration space $\mathcal{D}^{2}(\Gamma)$}
\end{figure}

\begin{itemize}
\item Let $\Gamma$ be a lasso graph shown in figure \ref{fig:The-lasso}(a).
It is a combination of Y and triangle graphs. Its combinatorial configuration
space $\mathcal{D}^{2}(\Gamma)$ is shown in figure 3(b). The shaded
rectangle is a $2$-cell and hence the cycle $(1,3)\rightarrow(2,3)\rightarrow(2,4)\rightarrow(1,4)\rightarrow(1,3)$
is contractible. The cycle $(1,2)\rightarrow(1,3)\rightarrow(1,4)\rightarrow(1,2)$
corresponds to the situation when one particle is sitting at the vertex
$1$ and the other moves along the cycle $c=2\rightarrow3\rightarrow4\rightarrow2$
of $\Gamma$. We will call this cycle an Aharonov-Bohm cycle (AB-cycle) and
denote its phase $\phi_{c,1}^1$. The cycle $(2,3)\rightarrow(3,4)\rightarrow(2,4)\rightarrow(2,3)$
represents the exchange of two particles around $c$. The corresponding phase will be denoted by $\phi_{c,2}$. Finally, for the cycle
$(1,2)\rightarrow(1,3)\rightarrow(2,3)\rightarrow(3,4)\rightarrow(2,4)\rightarrow(1,4)\rightarrow(1,2)$, corresponding to
exchange of two particles along a Y-graph, the phase is $\phi_Y$. There is no torsion in $H_{1}(\mathcal{D}^{2}(\Gamma))$.
Moreover,
\begin{gather}
\phi_{c,2}=\phi_{c,1}^{1}+\phi_{Y}.\label{eq:lasso-relation}
\end{gather}
Notice that knowing $\phi_{Y}$ and the AB-phases determines the phases $\phi_{c,2}$. As we shall see, \ref{eq:lasso-relation} plays an important role in relating Y-phases and AB-phases for general graphs.
\end{itemize}
\begin{figure}[h]
\begin{center}~~~~~~~~~~~~~~\includegraphics[scale=0.4]{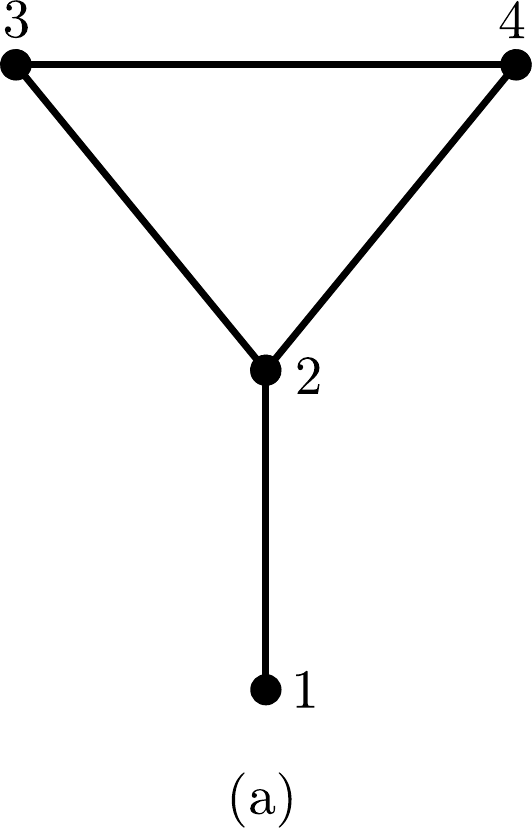}~~~~~~~~~~~~~~~~~~\includegraphics[scale=0.4]{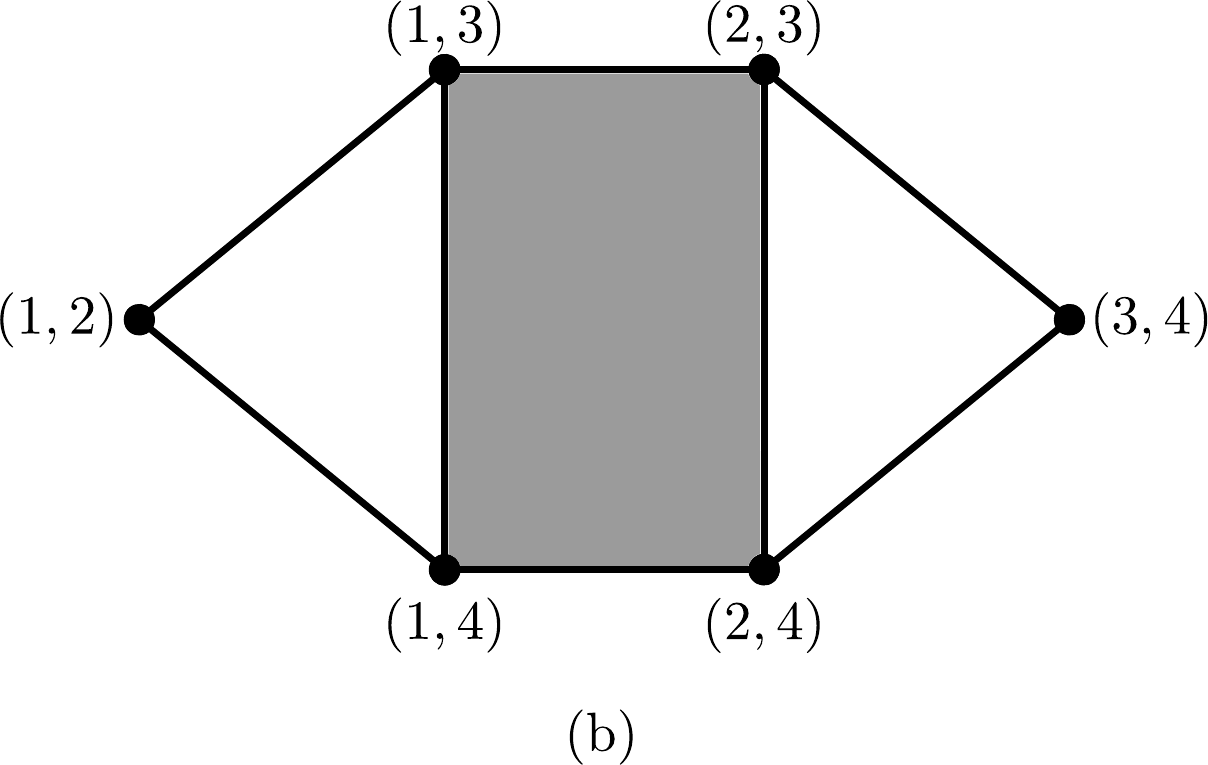}\end{center}

\caption{\label{fig:The-lasso}(a) The lasso graph $\Gamma$ (b) The $2$-particle
configuration space $\mathcal{D}^{2}(\Gamma)$}
\end{figure}

\subsection{A spanning set of $H_1(\mathcal{D}^{2}(\Gamma))$\label{sub:An-over-complete-basis}}

In order to proceed with the calculation of $H_{1}(\mathcal{D}^{2}(\Gamma))$ for arbitrary $\Gamma$
we need a spanning set of $H_{1}(\mathcal{D}^{2}(\Gamma))$. Before we give
one, let us discuss the dependence of the AB-phase on the position
of the second particle. Suppose there is a cycle $c$ in $\Gamma$ with two vertices $v_{1}$
and $v_{2}$ not on the cycle. We want to know the relation between $\phi_{c,1}^{v_{1}}$ and $\phi_{c,1}^{v_{2}}$.
There are two possibilities to consider. The first is shown in
figure \ref{fig:The-dependence-of}(a) and represents the situation
when there is a path $P_{v_{1},v_{2}}$ which joins $v_{1}$ and $v_{2}$
and is disjoint with $c$. In this case both AB-cycles are homotopy
equivalent as they belong to the cylinder $c\times P_{v_{1},v_{2}}$. Therefore,
\begin{Fact}\label{fact1} Assume there is a cycle $c$ in $\Gamma$ with two vertices $v_{1}$
and $v_{2}$ not on the cycle. Suppose there is a path $P_{v_{1},v_{2}}$ which joins $v_{1}$ and $v_{2}$
and is disjoint with $c$. Then $\phi_{c,1}^{v_{1}}=\phi_{c,1}^{v_{2}}$.
\end{Fact}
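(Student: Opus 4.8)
The plan is to realise both Aharonov--Bohm cycles inside one embedded cylinder in $\mathcal{D}^{2}(\Gamma)$ and to show that there they are homologous. First I would fix the relevant subspace: let $Z\subseteq\mathcal{D}^{2}(\Gamma)$ be the set of configurations in which one particle lies on the cycle $c$ and the other lies on the path $P_{v_{1},v_{2}}$. Because $c$ and $P_{v_{1},v_{2}}$ are disjoint, the two particles can never coincide, and they can always be distinguished by which of the two subgraphs they occupy; hence the quotient by $S_{2}$ makes no identifications on $Z$, and the obvious map $c\times P_{v_{1},v_{2}}\to Z$, sending an ordered pair to the corresponding unordered configuration, is a continuous bijection from a compact space onto $Z$ and so a homeomorphism. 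Since $c$ is topologically a circle and $P_{v_{1},v_{2}}$ is an arc, $Z\cong S^{1}\times[0,1]$ is a cylinder.

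Next I would verify that $Z$ is not merely a subspace but an honest subcomplex of $\mathcal{D}^{2}(\Gamma)$, tiled by genuine $2$-cells. The $2$-cells of $\mathcal{D}^{2}(\Gamma)$ are products $e\times f$ of two edges of $\Gamma$ whose closures are disjoint (precisely the product cells of $\Gamma^{\times 2}$ not removed in passing to $\tilde{\Delta}$). For $e$ an edge of $c$ and $f$ an edge of $P_{v_{1},v_{2}}$ the disjointness hypothesis gives $\bar e\cap\bar f=\emptyset$, so every such product cell is present; collectively these cells tile $Z$. Here I use that $\Gamma$ is sufficiently subdivided, which is automatic for $n=2$.

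With the cylinder in place the conclusion is immediate. The two AB-cycles are exactly its boundary circles: $\phi_{c,1}^{v_{1}}$ is carried by $c\times\{v_{1}\}$ and $\phi_{c,1}^{v_{2}}$ by $c\times\{v_{2}\}$, oriented so that the moving particle traverses $c$ in the same sense in both. In a cylinder the two consistently oriented boundary circles together bound: their difference is the boundary of the $2$-chain obtained by summing all the product cells that tile $Z$. Hence $c\times\{v_{1}\}$ and $c\times\{v_{2}\}$ define the same class in $H_{1}(\mathcal{D}^{2}(\Gamma))$, and since the phase depends only on the homology class of a cycle, $\phi_{c,1}^{v_{1}}=\phi_{c,1}^{v_{2}}$.

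The argument is short, and the only delicate point -- the step I would regard as the main (if modest) obstacle -- is the passage from the continuous homotopy that slides one boundary circle to the other, valid in $C_{2}(\Gamma)$, to an actual homology in the cell complex $\mathcal{D}^{2}(\Gamma)$. This is exactly where disjointness of $c$ and $P_{v_{1},v_{2}}$ is indispensable: it guarantees that each edge-by-edge product avoids the deleted diagonal, so the sliding homotopy is realised by bona fide $2$-cells and therefore yields a genuine chain-level bounding relation rather than just a topological homotopy.
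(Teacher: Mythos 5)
Your proposal is correct and follows essentially the same route as the paper, which likewise observes that both AB-cycles lie as the two boundary circles of the cylinder $c\times P_{v_{1},v_{2}}$ and are therefore homologous. Your version merely fills in the details the paper leaves implicit, namely that disjointness of $c$ and $P_{v_{1},v_{2}}$ guarantees the cylinder survives both the deletion of $\tilde{\Delta}$ and the $S_{2}$ quotient as an honest subcomplex of $\mathcal{D}^{2}(\Gamma)$ tiled by product $2$-cells.
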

Assume now that there is a path joining $v_1$ and $v_2$ which passes through the cycle $c$ (see figure \ref{fig:The-dependence-of}(b)). Using
relation (\ref{eq:lasso-relation}) we get

\begin{gather}
\phi_{c,2}=\phi_{c,1}^{v_{1}}+\phi_{Y_{1}},\,\,\,\,\phi_{c,2}=\phi_{c,1}^{v_{2}}+\phi_{Y_{2}},\label{eq:A-B-1}
\end{gather}
and hence
\begin{gather}
\phi_{c,1}^{v_{1}}-\phi_{c,1}^{v_{2}}=\phi_{Y_{2}}-\phi_{Y_{1}}.\label{eq:AB-2}
\end{gather}
The relations between different AB-phases for a fixed cycle $c$
of $\Gamma$ are therefore encoded in the phases $\phi_{Y}$.

\begin{figure}[h]
\begin{center}~~~~~~\includegraphics[scale=0.5]{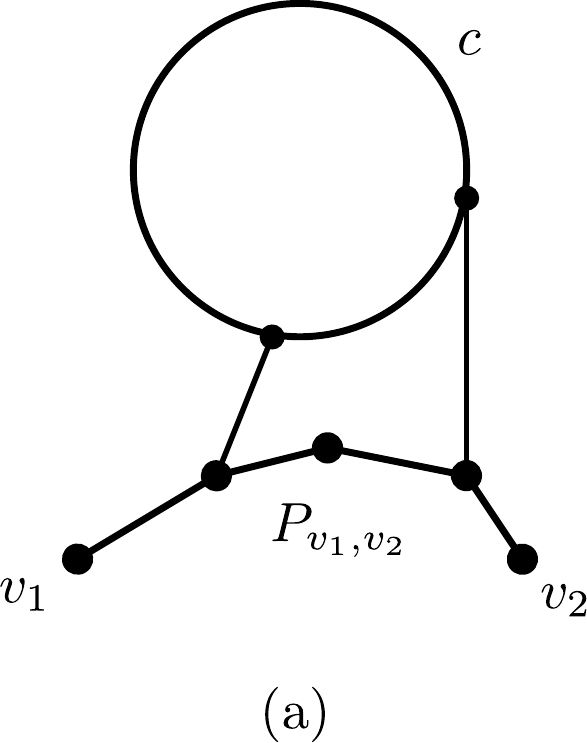}~~~~~~~~~~~~~~~~~~\includegraphics[scale=0.5]{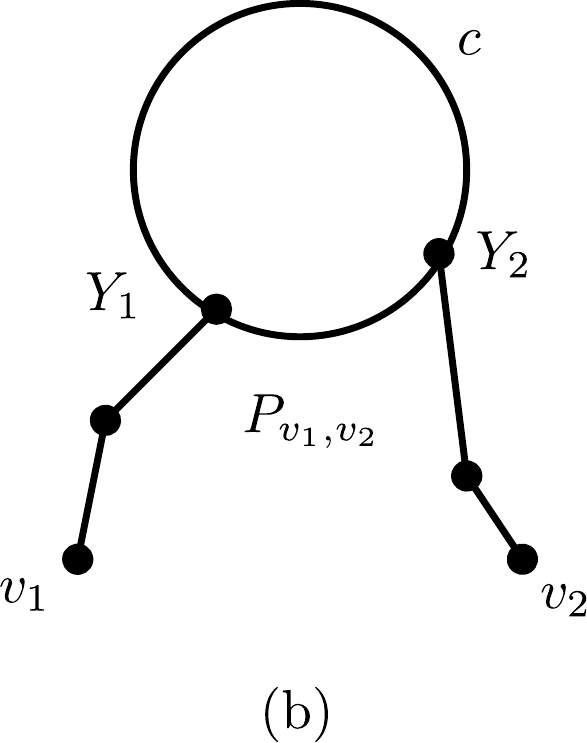}\end{center}

\caption{\label{fig:The-dependence-of}The dependence of the AB-phase for
cycle $c$ on the position of the second particle when (a) there is
a path between $v_{1}$ and $v_{2}$ disjoint with $c$, (b) every
path joining $v_{1}$ and $v_{2}$ passes through $c$.}
\end{figure}

As we show in the appendix, a spanning set of $H_{1}(\mathcal{D}^{2}(\Gamma))$ is given by all Y and AB-cycles. Note that
relations (\ref{eq:lasso-relation}) and (\ref{eq:AB-2}) reduce the number of relevant AB-cycles to one per independent cycle of $\Gamma$, i.e. to
\begin{gather}
\beta_{1}(\Gamma)=E-V+1,
\end{gather}
cycles. As a result, we will use a spanning set (which in general is over-complete) containing the following:
\begin{enumerate}
\item All $2$-particle cycles corresponding to the exchanges on Y subgraphs
of $\Gamma$. There can be dependencies between these cycles.
\item A set of $\beta_{1}(\Gamma)$ AB-cycles, one for each independent cycle in $\Gamma$.
\end{enumerate}
Thus, $H_{1}(\mathcal{D}^{2}(\Gamma))=\mathbb{Z}^{\beta_1(\Gamma)}\oplus A$, where $A$ is determined by Y-cycles. Consequently, in order to determine $H_{1}(\mathcal{D}^{2}(\Gamma))$ one has to
study the relations between Y-cycles.

\subsection{$3$-connected graphs}

In this section we determine $H_{1}(\mathcal{D}^{2}(\Gamma))$
for $3$-connected graphs. Let $\Gamma$ be a connected graph. We define an $m$-separation of
$\Gamma$ \cite{tutte01}, where $m$ is a positive integer, as an
ordered pair $(\Gamma_{1},\Gamma_{2})$ of subgraphs of $\Gamma$
such that
\begin{enumerate}
\item The union $\Gamma_{1}\cup\Gamma_{2}=\Gamma$.
\item $\Gamma_{1}$ and $\Gamma_{2}$ are edge-disjoint and have exactly
$m$ common vertices, $V_{m}=\{v_{1},\ldots,v_{m}\}$.
\item $\Gamma_{1}$ and $\Gamma_{2}$ have each a vertex not belonging to
the other.
\end{enumerate}
It is customary to say that the $V_{m}$ separates vertices of $\Gamma_{1}$
and $\Gamma_{2}$ different from $V_{m}$.
\begin{definition}
\label{n-connected-def}A connected graph $\Gamma$ is $n$-connected
iff it has no $m$-separation for any $m<n$.
\end{definition}
The following theorem of Menger \cite{tutte01} gives
an additional insight into graph connectivity:
\begin{theorem}
\label{Menger} For an $n$-connected graph $\Gamma$ there are at
least $n$ internally disjoint paths between any pair of vertices.
\end{theorem}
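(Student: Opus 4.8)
The plan is to deduce this ``global'' statement from the ``local'' form of Menger's theorem and to translate between the separation-based definition of $n$-connectedness used above and the language of vertex cuts. Fix two distinct vertices $s$ and $t$, and call a set $S$ of vertices with $s,t \notin S$ an $s$-$t$ \emph{separator} if $s$ and $t$ lie in different connected components of $\Gamma - S$. The core of the argument is the local assertion that, for non-adjacent $s$ and $t$, the maximum number of internally disjoint $s$-$t$ paths equals the minimum cardinality of an $s$-$t$ separator.

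To prove this local assertion I would use the vertex-splitting reduction to max-flow min-cut. Replace each vertex $v \notin \{s,t\}$ by an arc $v^{-} \to v^{+}$ of capacity $1$, route all edges incident to $v$ into $v^{-}$ and out of $v^{+}$, and give every arc coming from an edge of $\Gamma$ capacity $+\infty$. By the integral max-flow min-cut theorem there is an integer-valued maximum $s$-$t$ flow whose value equals the minimum cut capacity. A finite minimum cut can only use the unit-capacity splitting arcs, so it corresponds to a set of vertices separating $s$ from $t$; conversely an integral flow of value $f$ decomposes into $f$ arc-disjoint $s$-$t$ dipaths, and since each splitting arc carries at most one unit these dipaths are internally vertex-disjoint. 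This yields equality of the two quantities.

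Next I would connect separators to separations. If $S$ is an $s$-$t$ separator with $|S| = m$, let $\Gamma_{1}$ be the subgraph induced by $S$ together with the component of $\Gamma - S$ containing $s$, and $\Gamma_{2}$ the subgraph induced by $S$ together with the remaining vertices. Then $\Gamma_{1} \cup \Gamma_{2} = \Gamma$, the two subgraphs are edge-disjoint and share exactly the $m$ vertices of $S$, and each contains a private vertex ($s$ and $t$ respectively); this is precisely an $m$-separation in the sense defined above. Since $\Gamma$ is $n$-connected it admits no $m$-separation with $m < n$, so every $s$-$t$ separator has at least $n$ vertices. Combined with the local theorem, non-adjacent $s$ and $t$ are joined by at least $n$ internally disjoint paths.

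It remains to treat adjacent $s$ and $t$, which I expect to be the main obstacle, since deleting the joining edge $e = s \leftrightarrow t$ can lower the connectivity. The plan is to apply the non-adjacent case to $\Gamma - e$, in which $s$ and $t$ are non-adjacent because $\Gamma$ is simple. One checks that deleting a single edge drops connectivity by at most one --- any $m$-separation of $\Gamma - e$ yields a separation of $\Gamma$ of order at most $m+1$ by moving one endpoint of $e$ into the shared vertex set --- so $\Gamma - e$ is $(n-1)$-connected and hence carries $n-1$ internally disjoint $s$-$t$ paths. Together with $e$ itself, which is an $s$-$t$ path having no internal vertices and so is internally disjoint from the others, these furnish the required $n$ paths in $\Gamma$. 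The delicate points to get right are the corner cases, in particular complete graphs, where no separation exists and the count must be settled by convention.
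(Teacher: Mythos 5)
The paper offers no proof of this statement at all: Menger's theorem is quoted as a known result with a citation to Tutte, and the surrounding text only supplies the definition of an $m$-separation that your argument correctly targets. Your proposal is therefore doing strictly more than the paper, and what you do is essentially the standard flow-theoretic proof: vertex-splitting plus integral max-flow min-cut gives the local min-cut/max-path equality for non-adjacent pairs, the separator-to-separation translation links that to the paper's Definition~\ref{n-connected-def}, and the adjacent case is reduced to $\Gamma - e$ via the fact that deleting an edge lowers connectivity by at most one. The skeleton is sound; two points deserve tightening. First, in your separator-to-separation step, if you take both $\Gamma_1$ and $\Gamma_2$ to be \emph{induced} subgraphs containing $S$, they share any edges internal to $S$ and so are not edge-disjoint as the paper's definition requires; you should assign the edges within $S$ to one side only. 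Second, in the adjacent case, moving an endpoint of $e$ into the shared vertex set of an $m$-separation of $\Gamma - e$ can leave one part with no private vertex (namely when that part consists of $s$ together with the cut), so the claim that $\Gamma - e$ is $(n-1)$-connected needs the extra observation that in this degenerate situation $s$ has small degree and its neighbourhood already furnishes a small separation of $\Gamma$ --- this is the same degeneracy as the complete-graph case you flag, where the separation-based definition makes every complete graph $n$-connected for all $n$ and the theorem must be read with the usual convention $|V(\Gamma)| > n$. Neither issue is a conceptual gap, but both need to be written out for the proof to be complete.
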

The basic examples of $3$-connected graphs are wheel graphs. A wheel
graph $W^{n}$ of order $n$ consists of a cycle with $n$ vertices
and a single additional vertex which is connected to each vertex of
the cycle by an edge. Following Tutte \cite{tutte01} we denote the
middle vertex by $h$ and call it the hub, and the cycle that does not include $h$ by $R$ and call
it the rim. The edges connecting the hub to the rim will be called
spokes. The importance of wheels in the theory of $3$-connected graphs
follows from the following theorem;
\begin{theorem}
\label{thm-Wheel-thorem}(Wheel theorem \cite{tutte01}) Let $\Gamma$ be a simple
$3$-connected graph different from a wheel. Then for some edge $e\in E(\Gamma)$
either $\Gamma\setminus e$ or $\Gamma/e$ is simple and
$3$-connected.
\end{theorem}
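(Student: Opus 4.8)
The plan is to argue by contradiction. Call an edge $e$ of a simple $3$-connected graph $\Gamma$ \emph{essential} if neither $\Gamma\setminus e$ nor $\Gamma/e$ is simple and $3$-connected; since the theorem asserts that a non-wheel has a non-essential edge, I would assume that every edge of $\Gamma$ is essential and deduce that $\Gamma$ must be a wheel. First I would dispose of the base case: if $|V(\Gamma)|=4$ then every vertex has degree $3$, so $\Gamma=K_4=W^3$ is already a wheel; hence I may assume $|V(\Gamma)|\ge 5$, which also guarantees that $\Gamma/e$ has at least four vertices so that asking whether it is $3$-connected is meaningful.

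Next I would extract the two local criteria for essentiality. For contraction, write $e=x\leftrightarrow y$ and let $v$ be the merged vertex of $\Gamma/e$. Any $2$-separation of $\Gamma/e$, with common vertices $\{a,b\}$, must satisfy $v\in\{a,b\}$ — otherwise $\{a,b\}$ would be the common vertices of a $2$-separation of $\Gamma$ itself, contradicting Definition \ref{n-connected-def} with $m=2<3$. Thus $\{a,b\}=\{v,z\}$, and then $\{x,y,z\}$ are the common vertices of a $3$-separation of $\Gamma$; conversely such a $3$-separation yields a $2$-separation of $\Gamma/e$. Moreover $\Gamma/e$ fails to be simple exactly when $x$ and $y$ have a common neighbour. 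So essentiality through contraction means: either $x,y$ share a neighbour, or there is a vertex $z$ with $\{x,y,z\}$ the common vertices of a $3$-separation. For deletion, $\Gamma\setminus e$ remains $2$-connected, and the common vertices $\{a,b\}$ of any $2$-separation of $\Gamma\setminus e$ must separate $x$ from $y$ (otherwise they would give a $2$-separation of $\Gamma$); in particular $x,y\notin\{a,b\}$, and feeding the three internally disjoint $x$--$y$ paths supplied by Menger (Theorem \ref{Menger}) into this picture forces a rigid pattern on how the two sides attach to $x$, $y$, $a$, $b$.

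The core of the argument is extremal. Over all essential edges $e=x\leftrightarrow y$, all associated $3$-separations with common vertices $\{x,y,z\}$, and all components of $\Gamma-\{x,y,z\}$, I would choose a component $C$ of smallest cardinality. By $3$-connectivity each of $x,y,z$ has a neighbour in $C$, so pick $u\in C$ adjacent to $z$. The edge $z\leftrightarrow u$ is again essential, so its contraction-failure produces either a common neighbour of $z$ and $u$ or a $3$-separation through $z\leftrightarrow u$; in either case one shows the small side of the resulting configuration is strictly contained in $C$, contradicting minimality — unless $C$ is a single vertex $u$, which (having degree $\ge 3$ with neighbours only in $\{x,y,z\}$) must be joined to all of $x,y,z$. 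Propagating this forced local structure around the graph, the vertex in the role of $z$ becomes adjacent to every other vertex while the remaining vertices close up into a single cycle; that is, $\Gamma$ is a wheel with hub $z$, contradicting the hypothesis.

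I expect the main obstacle to be precisely this extremal step: one must control how the minimal component $C$ attaches to the cut $\{x,y,z\}$ while simultaneously exploiting \emph{both} the contraction-failure (the $3$-separation) and the deletion-failure (the separating pair for $\Gamma\setminus e$, constrained through Menger), and then rule out every near-wheel configuration except the wheel itself. The accompanying simplicity bookkeeping — distinguishing when a contraction merely creates parallel edges from when it exposes a genuine $3$-separation — is the fiddly part that must be handled uniformly across all edges to close the contradiction.
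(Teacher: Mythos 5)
First, a point of reference: the paper does not prove this statement at all --- it is quoted verbatim as Tutte's Wheel Theorem with a citation to \cite{tutte01}, so there is no in-paper argument to compare yours against. Your outline does follow the classical strategy from Tutte's proof (and its modern descendants): define essential edges, assume all edges are essential, translate contraction-failure into a $3$-vertex cut $\{x,y,z\}$ or a common neighbour of $x$ and $y$, translate deletion-failure into a $2$-separation of $\Gamma\setminus e$ separating $x$ from $y$, and then run an extremal argument on a minimal fragment $C$. The preliminary reductions you state (the $|V|=4$ base case, the fact that a $2$-separation of $\Gamma/e$ must use the merged vertex, the fact that $\Gamma\setminus e$ stays $2$-connected and its separating pairs avoid and separate $x,y$) are all correct.

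However, there is a genuine gap: the decisive step is asserted rather than proved. The sentence ``Propagating this forced local structure around the graph, the vertex in the role of $z$ becomes adjacent to every other vertex while the remaining vertices close up into a single cycle'' is the entire content of the theorem, and nothing in your sketch actually carries it out --- identifying a single global hub, showing the remaining vertices all have degree exactly $3$, and closing the rim into one cycle each require their own arguments. Two subsidiary problems feed into this. In the dichotomy for the essential edge $z\leftrightarrow u$, the common-neighbour branch does not produce a separation, so there is no ``small side'' to compare with $C$; the minimality contradiction as you state it does not apply uniformly to ``either case.'' And the deletion-failure of $e$ --- the $2$-separation of $\Gamma\setminus e$ together with Menger's three disjoint $x$--$y$ paths --- is invoked but never actually used in the contradiction; in Tutte's argument the interleaving of the deletion and contraction obstructions (and the bookkeeping of when contraction merely creates parallel edges) is precisely what pins down the wheel, and you correctly identify this as the main obstacle without overcoming it. As it stands the proposal is a plausible plan, not a proof.
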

Here $\Gamma\setminus e$ is constructed from $\Gamma$ by removing the edge $e$, and $\Gamma/e$ is obtained by contracting edge $e$ and identifying its vertices. These two operations will be called edge removal and edge contraction. The inverses will be called edge addition and vertex expansion. Note that that vertex expansion requires specifying which edges are connected to which vertices after expansion. As we deal with $3$-connected graphs we will apply the vertex expansion only to vertices of degree at least four and split the edges between new vertices in a such way that they are at least $3$-valent.

As a direct corollary of Theorem \ref{thm-Wheel-thorem} any simple $3$-connected graph can be constructed in a finite number of steps
starting from a wheel graph $W^{k}$, for some $k$
\begin{gather*}
W_{k}=\Gamma_{0}\mapsto\Gamma_{1}\mapsto\ldots\mapsto\Gamma_{n-1}\mapsto\Gamma_{n}=\Gamma
\end{gather*}
where $\Gamma_{i}$ is constructed from $\Gamma_{i-1}$ by either
\begin{enumerate}
\item Adding an edge between non-adjacent vertices or
\item Expanding at the vertex of the valency at least four.
\end{enumerate}
Moreover, each $\Gamma_{i}$ is simple and $3$-connected. In order
to prove inductively some feature of a $3$-connected graph it is
therefore enough to show it for an arbitrary wheel and consider what
happens when an edge between two non-adjacent vertices is added or
a vertex of valency at least four is expanded.
\begin{lemma}
For wheel graphs $W^{n}$ all phases $\phi_{Y}$ are equal up to the
sign.
\end{lemma}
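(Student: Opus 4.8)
The plan is to exploit the explicit structure of the wheel graph $W^n$ together with the lasso relation \eqref{eq:lasso-relation} and the $Y$-cycle relations it implies. A $Y$-subgraph of $\Gamma$ consists of a central vertex together with three edges meeting it, and the associated phase $\phi_Y$ measures the exchange of two particles on that local tripod. In a wheel $W^n$ the $Y$-subgraphs are of a small number of combinatorial types: those centred at the hub $h$ (whose three edges are spokes), those centred at a rim vertex (whose three edges are two rim edges and one spoke), and degenerate cases at the boundary. The goal is to show that all of these phases coincide up to an overall sign, i.e.\ that there is a single independent $Y$-phase for the whole wheel.

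First I would set up the combinatorial machinery: pick any two $Y$-subgraphs $Y_1, Y_2$ of $W^n$ and aim to connect them through a chain of elementary moves, each of which relates the two $Y$-phases either by equality or by a sign flip. The basic tool is relation \eqref{eq:AB-2}, namely $\phi_{c,1}^{v_1}-\phi_{c,1}^{v_2}=\phi_{Y_2}-\phi_{Y_1}$, which ties differences of $Y$-phases to differences of Aharonov--Bohm phases of a common cycle $c$. I would choose $c$ to be a triangle of the wheel (two consecutive spokes plus the rim edge joining their endpoints, which exists because consecutive rim vertices are adjacent), so that the two $Y$-subgraphs whose phases I wish to compare both appear naturally in a lasso built on $c$. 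The strategy is then: whenever two $Y$-subgraphs sit on a common cycle $c$ with the ``third leg'' attached at vertices $v_1, v_2$ joined by a path disjoint from $c$, Fact~\ref{fact1} gives $\phi_{c,1}^{v_1}=\phi_{c,1}^{v_2}$, and \eqref{eq:AB-2} then forces $\phi_{Y_1}=\phi_{Y_2}$.

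Next I would carry out the comparison systematically. Because $W^n$ is highly symmetric and $3$-connected, any two rim vertices are joined by a rim path disjoint from a suitably chosen triangle, and the hub is reachable via a spoke; this lets me propagate the equality of $Y$-phases around the rim step by step and then link the rim $Y$-phases to the hub $Y$-phases. The sign ambiguity enters precisely through the orientation of the $Y$-cycle: reversing the direction in which the two particles are exchanged on the tripod negates $\phi_Y$, which is why the statement asserts equality only up to sign. I would fix an orientation convention for each $Y$ (say, induced by a cyclic ordering of its three legs) and track how the chain of elementary relations either preserves or reverses it, collecting all phases into $\pm\phi_Y$ for a single representative $\phi_Y$.

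The main obstacle I expect is bookkeeping the sign consistently rather than proving any equality: the individual relations from Fact~\ref{fact1} and \eqref{eq:AB-2} are straightforward, but ensuring that the net sign assigned to each $Y$-subgraph is well defined (independent of the path of elementary moves used to reach it) requires checking that the relations are compatible around the independent cycles of $W^n$. Concretely, I would verify consistency on a single spoke-spoke-rim triangle and on the hub's fan of spokes, and then invoke the homogeneity of the wheel to conclude that no contradictory sign assignment can arise. Once the sign convention is shown to be consistent, the equality $\phi_{Y_1}=\pm\phi_{Y_2}$ for all pairs follows immediately, giving the claim.
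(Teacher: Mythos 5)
Your proposal is correct and follows essentially the same route as the paper: compare the two $Y$-subgraphs through a common triangle (spoke--spoke--rim) cycle $C$, use Fact~\ref{fact1} to equate the AB-phases because the third legs attach at vertices joined by a path disjoint from $C$, conclude $\phi_{Y_1}=\phi_{Y_2}$ via \eqref{eq:AB-2}, propagate around the rim, and then tie the hub-centred $Y$'s to a rim-centred $Y$ sharing two edges with a suitable cycle. The extra attention you give to sign consistency is more explicit than the paper's treatment but does not change the argument.
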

\begin{proof}
The Y subgraphs of $W^{n}$ can be divided into two groups: (i) the central vertex of Y is on the rim
(ii) the center vertex of Y is the hub. For (i) let $v_{1}$ and $v_{2}$ be two adjacent vertices belonging to the
rim, $R$. Let $Y_{v_{1}}$ and $Y_{v_{2}}$ be the corresponding
Y-graphs whose central vertices are $v_{1}$ and $v_{2}$ respectively
and one edge is a spoke. Evidently two edges of $Y_{v_{1}}$ and $Y_{v_{2}}$ belong
to the same triangle cycle, $C$ i.e the one spanned by $v_{1}$,
$v_{2}$ and the hub (see figure \ref{fig:wheel}(a)). Moreover, $b_{1}$
is connected to $b_{2}$ by a path which is disjoint with $\mbox{\ensuremath{C}}$. Using Fact \ref{fact1} and  relation (\ref{eq:AB-2})
we get $\phi_{Y_{v_{1}}}=\phi_{Y_{v_{2}}}$. Repeating this reasoning
we obtain that all $\phi_{Y_{v_{i}}}$, with $v_{i}$ belonging to
the rim are equal (perhaps up to the sign). We are left with the Y-graphs whose central vertex is the hub. Similarly (see figure \ref{fig:wheel}(b))
we take a cycle, $C$, with two edges belonging to the chosen Y. But
there is always a Y-graph with two edges belonging to $C$ and center to the rim.
Therefore, by Fact \ref{fact1} and  relation (\ref{eq:AB-2}) the phase on the
Y subgraph whose center vertex is the hub is the same as on the Y subgraphs
whose center vertex is on the rim.\qed
\end{proof}
\begin{figure}[h]
\begin{center}~~~~~~~~~~\includegraphics[scale=0.45]{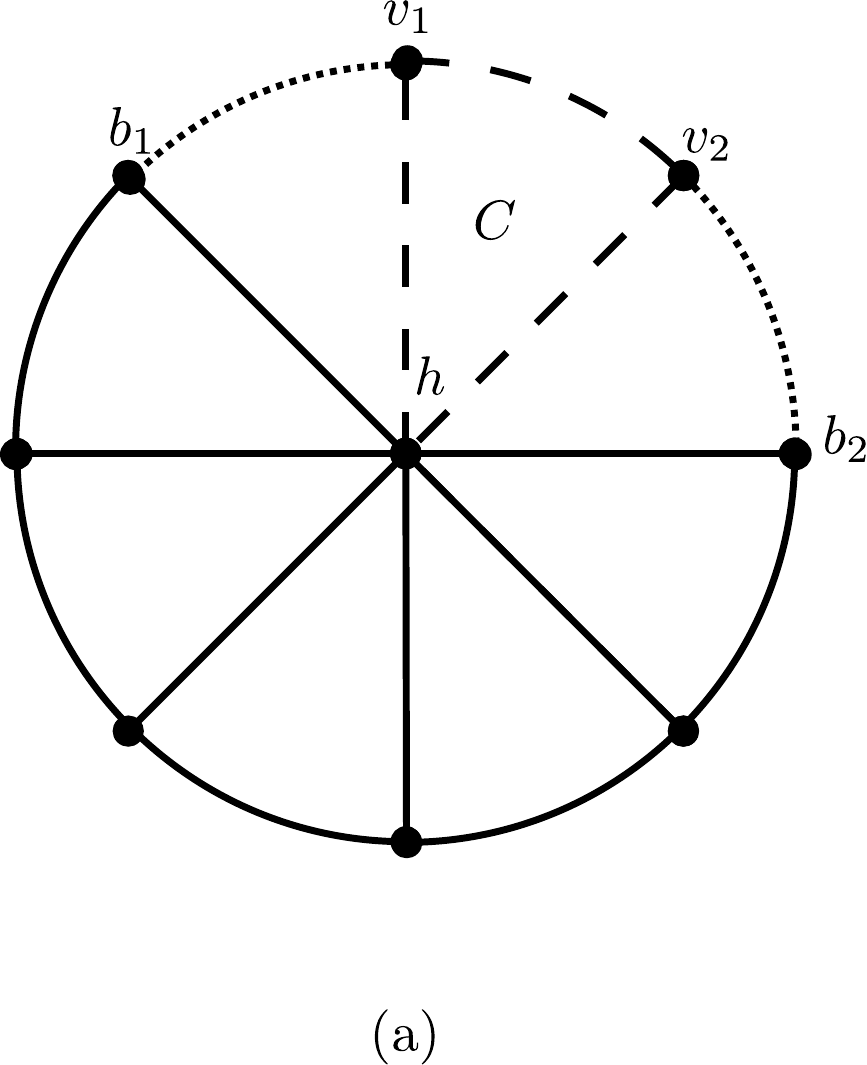}~~~~~~~~~~~~~~~\includegraphics[scale=0.45]{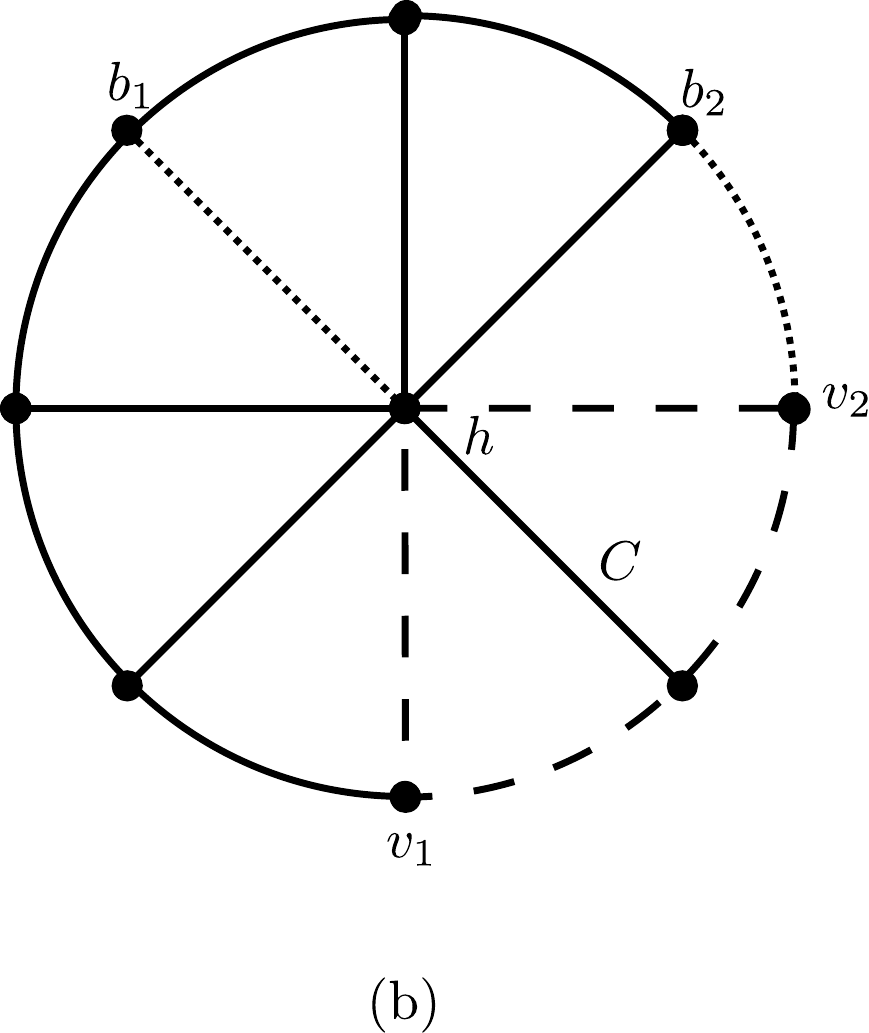}\end{center}

\caption{\label{fig:wheel}(a) Graphs $Y_{v_{1}}$ and $Y_{v_{2}}$(b) Graphs
$Y_{h}$ and $Y_{v_{2}}$ }
\end{figure}

\begin{lemma}
For $3$-connected simple graphs all phases $\phi_{Y}$ are equal
up to the sign.\end{lemma}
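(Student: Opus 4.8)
The plan is to prove this by induction using the inductive construction of 3-connected graphs that was just established via the Wheel Theorem. The base case is exactly the previous lemma: for wheel graphs $W^n$, all $\phi_Y$ are equal up to sign. For the inductive step I would assume the statement holds for some 3-connected simple graph $\Gamma_{i-1}$ and show it persists under the two operations that generate $\Gamma_i$: adding an edge between non-adjacent vertices, and expanding a vertex of valency at least four.

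First I would handle edge addition, which is the easier case. When a new edge $e$ is added to $\Gamma_{i-1}$, every Y subgraph of the old graph remains a Y subgraph of the new graph, so by the inductive hypothesis all of those $\phi_Y$ are already equal up to sign. The only thing to check is that any genuinely new Y subgraph (one using the edge $e$ at its center or as one of its three legs) has its phase tied to the existing ones. To do this I would exhibit, for such a new Y, a cycle $C$ sharing two of its edges together with a path disjoint from $C$ joining the appropriate off-cycle vertices, exactly as in the wheel lemma, and then invoke Fact~\ref{fact1} together with relation~(\ref{eq:AB-2}) to equate its phase with a neighboring Y. Since $\Gamma_{i-1}$ is 3-connected, Menger's theorem (Theorem~\ref{Menger}) guarantees enough internally disjoint paths to construct the required disjoint path, which is the structural fact that makes the argument go through.

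The vertex-expansion step is where the real work lies, and I expect it to be the main obstacle. When a vertex $w$ of valency $\geq 4$ is split into two new vertices $w_1, w_2$ joined by an edge, each at least 3-valent, the Y subgraphs centered at the old vertex $w$ are destroyed and replaced by new Y subgraphs centered at $w_1$ or $w_2$. I would need to argue that the phase of each new Y is forced to equal (up to sign) the phase of a surviving Y subgraph. The strategy is again to find, for each new Y, a cycle $C$ containing two of its edges and an appropriate disjoint path, so that Fact~\ref{fact1} and~(\ref{eq:AB-2}) apply; the 3-connectivity of $\Gamma_i$ (which is preserved by the construction) supplies these paths via Menger. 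The delicate point is that the distribution of the original edges between $w_1$ and $w_2$ is not canonical, so I would need the argument to be uniform over all admissible splittings, and to carefully track which Y's at $w_1$ and $w_2$ connect to which old Y's across the new edge $w_1 \leftrightarrow w_2$.

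Assembling these pieces, the induction closes: at every stage the newly created Y-cycles are pinned to the pre-existing ones, so equality up to sign propagates from the wheel all the way to an arbitrary 3-connected $\Gamma$. The essential reusable tool throughout is the pair of relations from the lasso analysis — Fact~\ref{fact1} for off-cycle vertices joined by a disjoint path, and~(\ref{eq:AB-2}) for vertices joined only through the cycle — combined with Menger's theorem to guarantee the disjoint paths that 3-connectivity provides. I would present the edge-addition case briefly and devote the bulk of the argument to the vertex-expansion case, likely with a figure illustrating the placement of the auxiliary cycle $C$ and disjoint path for a representative new Y subgraph.
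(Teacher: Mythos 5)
Your proposal follows essentially the same route as the paper: induction via the Wheel theorem with Lemma 1 as the base case, and for both edge addition and vertex expansion, pinning each new Y-graph to an old one by constructing an auxiliary cycle $C$ through the endpoints of its two non-new edges, a disjoint path supplied by $3$-connectedness, and then applying Fact~\ref{fact1} and relation~(\ref{eq:AB-2}). The paper resolves the vertex-expansion case you flag as delicate by running exactly the same construction as for edge addition (the choice of splitting is immaterial because the argument works for any admissible distribution of edges, and old cycles through $v$ are simply replaced by cycles through $v_1$ and $v_2$), so your outline is correct and complete in substance.
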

\begin{proof}
We prove by induction. By Lemma 1 the statement is true for all
wheel graphs.

Adding an edge: Assume that $v_{1}$ and $v_{2}$ are non-adjacent
vertices of the $3$-connected graph $\Gamma$ and all $\phi_{Y}$
phases are equal (up to the sign). By adding an edge between the vertices
$v_{1}$ and $v_{2}$ we do not change the relations between the $\phi_{Y}$
phases of $\Gamma$. However, the graph $\Gamma\cup e$ contains
new Y-graphs, whose middle vertices are $v_{1}$ or $v_{2}$ and
one of the edges is $e$. We need to show that the phase $\phi_{Y}$
for these new Y's is the same as on the old ones. Let $\{e,f_{1},f_{2}\}$
be such a Y-graph (see figure \ref{fig:Adding-an-edge}(a)). Let $\alpha_1$ and $\alpha_2$ be endpoints of $f_1$ and $f_2$. By $3$-connectedness, there is a path between $\alpha_1$ and $\alpha_2$ which does not contain $v_1$ or $v_2$.  In this way we obtain a cycle $C$, as shown in figure \ref{fig:Adding-an-edge}(a). Again by $3$-connectedness, there is a path $P$ from $v_2$ to a vertex $\beta$ in $C$. Let $Y'$ be the Y-graph with $\beta$ as its centre and edges along $C$ and $P$, as shown in figure \ref{fig:Adding-an-edge}(a). Then $Y'$ belongs to $\Gamma$. Applying Fact \ref{fact1} and  relation (\ref{eq:AB-2}) to the cycle $C$ and
the two Y-graphs discussed, the result follows.


Vertex expansion: Let $\Gamma$ be a $3$-connected simple graph and
let $v$ be a vertex of degree at least four. Let $\tilde{\Gamma}$
be a graph derived from $\Gamma$ by expanding at the vertex $v$
and assume that the new vertices, $v_{1}$ and $v_{2}$, are at least
$3$-valent. These assumptions are necessary for $\tilde{\Gamma}$
to be $3$-connected \cite{tutte01}. Note that $\Gamma$ and $\tilde{\Gamma}$ have
the same number of independent cycles. Moreover, by splitting at the
vertex $v$ we do not change the relations between the $\phi_{Y}$
phases of $\Gamma$. This is simply because if the equality of some
of the $\phi_{Y}$ phases required a cycle passing through $v$, one can now use the cycle with one more edge passing through $v_{1}$ and $v_{2}$
in $\tilde{\Gamma}$. The graph $\tilde{\Gamma}$ contains new
Y-graphs, whose middle vertices are $v_{1}$ or $v_{2}$ and
one of the edges is $e=v_{1}\leftrightarrow v_{2}$. We need to show
that the phase $\phi_{Y}$ on these new Ys is the same as on the old
ones. Let $\{e,f_{1},f_{2}\}$ be such a graph and let $\alpha_1$ and $\alpha_2$ be endpoints of $f_1$ and $f_2$. By $3$-connectedness, there is a path between $\alpha_1$ and $\alpha_2$ which does not contain $v_1$ or $v_2$.  In this way we obtain a cycle $C$, as shown in figure \ref{fig:Adding-an-edge}(b).  Again by $3$-connectedness, there is a path $P$ from $v_2$ to a vertex $\beta$ in $C$.  Let $Y'$ be the Y-graph with $\beta$ as its centre and edges along $C$ and $P$, as shown in figure \ref{fig:Adding-an-edge}(b).  Then $Y'$ belongs to $\Gamma$. Applying Fact \ref{fact1} and  relation (\ref{eq:AB-2})
to the cycle $C$ and the two Y-graphs discussed, the result follows.\qed
\end{proof}

\begin{figure}[h]
\begin{center}~~~~~\includegraphics[scale=0.4]{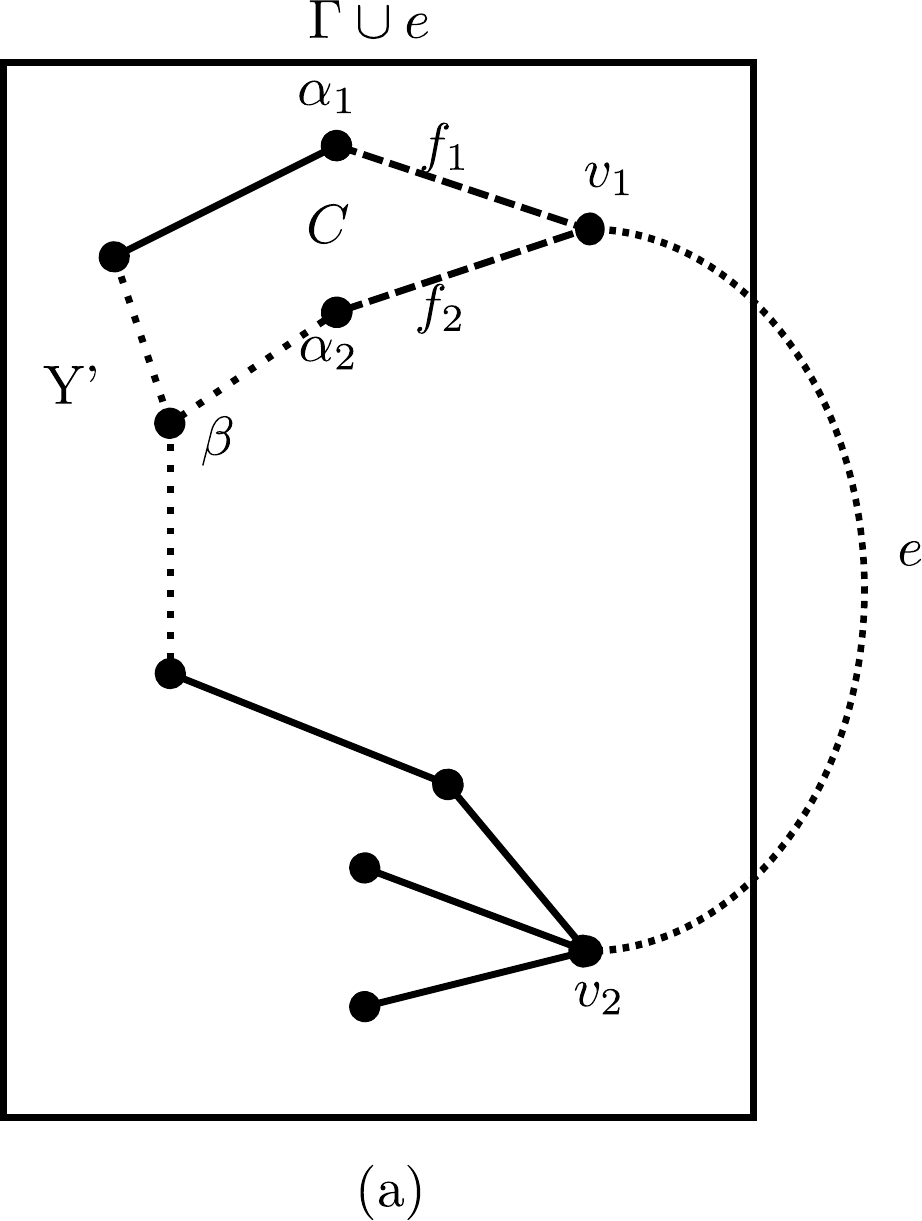}~~~~~~~~~~~~~~~~~~\includegraphics[scale=0.4]{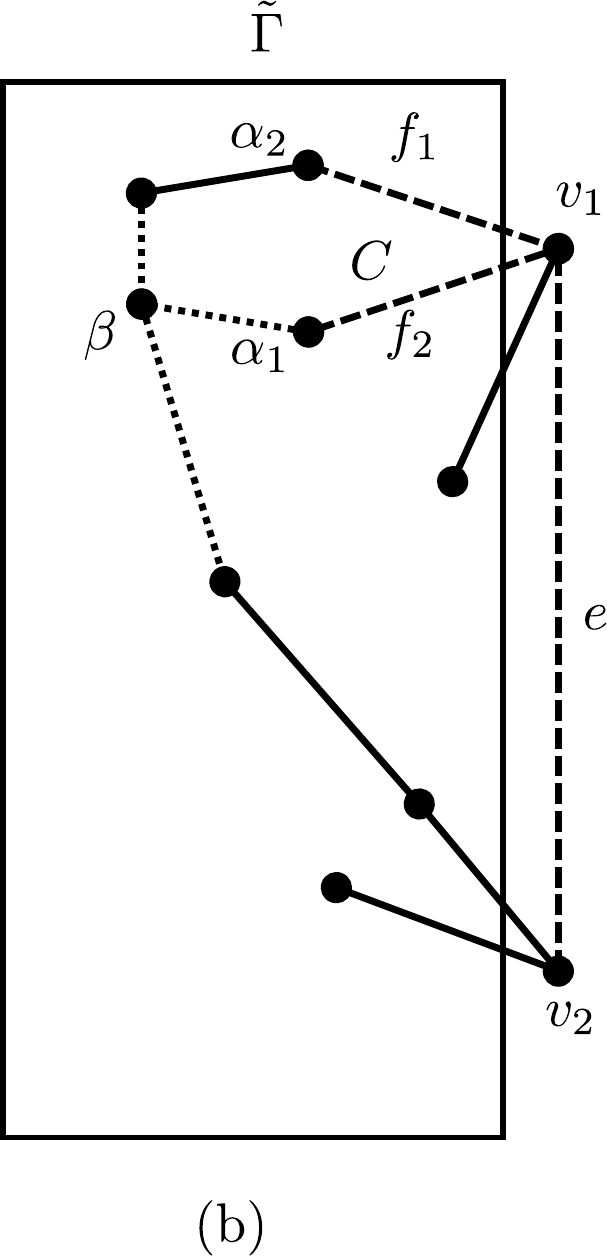}\end{center}

\caption{\label{fig:Adding-an-edge}(a) Adding an edge (b) Expanding at the vertex.}
\end{figure}

\begin{theorem}\label{3-final}
For a $3$-connected simple graph, $H_{1}(\mathcal{D}^{2}(\Gamma))=\mathbb{Z}^{\beta_1(\Gamma)}\oplus A$,
where $A=\mathbb{Z}_{2}$ for non-planar graphs and $A=\mathbb{Z}$
for planar graphs. \end{theorem}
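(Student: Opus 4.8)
The plan is to reduce the theorem to determining the order of a single group element, and then to handle the planar and non-planar cases by constructing explicit homomorphisms out of $H_{1}(\mathcal{D}^{2}(\Gamma))$. By the preceding lemma every Y-cycle represents $\pm\phi_{Y}$ for one fixed class $\phi_{Y}$, so the subgroup $A$ generated by all Y-cycles is cyclic, $A\cong\mathbb{Z}$ or $A\cong\mathbb{Z}_{m}$. Since we already know $H_{1}(\mathcal{D}^{2}(\Gamma))=\mathbb{Z}^{\beta_{1}(\Gamma)}\oplus A$, it remains only to compute the order of $\phi_{Y}$. I would first record that $\phi_{Y}\neq 0$ in every case, using the exchange-parity homomorphism $p\colon H_{1}(\mathcal{D}^{2}(\Gamma))\to\mathbb{Z}_{2}$ arising from the monodromy of the ordered-to-unordered double cover $\Gamma^{\times 2}-\Delta\to\mathcal{D}^{2}(\Gamma)$. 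This $p$ is well defined on homology, annihilates every AB-cycle, and sends $\phi_{Y}\mapsto 1$. Hence $A$ is either $\mathbb{Z}$ or $\mathbb{Z}_{2k}$, and in particular nontrivial.

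For the non-planar case I would prove $2\phi_{Y}=0$. By Kuratowski's theorem $\Gamma$ contains a subdivision of $K_{5}$ or of $K_{3,3}$; since each $2$-cell of $\mathcal{D}^{2}$ of a subgraph (a pair of disjoint edges) is also a $2$-cell of $\mathcal{D}^{2}(\Gamma)$, any homology relation produced inside the subdivision persists in $\Gamma$. I would therefore exhibit, by a direct computation on $K_{5}$ and on $K_{3,3}$, a $2$-chain whose boundary is twice a Y-cycle, equivalently a Y-cycle homologous to its own reverse. Because that Y-cycle represents $\pm\phi_{Y}$ in $\Gamma$ by the lemma, this yields $2\phi_{Y}=0$; combined with $\phi_{Y}\neq 0$ from the previous paragraph and the cyclicity of $A$, we get $A=\mathbb{Z}_{2}$.

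For the planar case I would show $\phi_{Y}$ has infinite order by building an integer-valued homomorphism $w\colon H_{1}(\mathcal{D}^{2}(\Gamma))\to\mathbb{Z}$ with $w(\phi_{Y})=\pm 1$. Fixing a planar embedding equips each vertex with a cyclic (rotation) order of its incident edges, and hence endows the plane with a global chirality; using this one assigns a signed value to each particle exchange, and the resulting total signed-exchange count is a $1$-cocycle on $\mathcal{D}^{2}(\Gamma)$. The product $2$-cells $e_{1}\times e_{2}$ carry no exchange and are annihilated, which is precisely what makes $w$ closed, while evaluation on a Y-cycle gives $\pm 1$. Infinite order of $\phi_{Y}$, together with cyclicity of $A$, then forces $A=\mathbb{Z}$.

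The \emph{main obstacle} is the planar lower bound: verifying that the signed-exchange count $w$ really is a well-defined homomorphism on $H_{1}$, i.e. that it vanishes on the boundary of every product $2$-cell and every AB square uniformly across the graph. This is exactly the step where planarity is indispensable; for a non-planar graph the rotation system cannot be globally reconciled, $w$ fails to close, and this is consistent with the collapse to $A=\mathbb{Z}_{2}$. A secondary, more mechanical difficulty is carrying out the explicit $K_{5}$ and $K_{3,3}$ computations that produce the bounding $2$-chain for $2\phi_{Y}$, and checking that the Kuratowski subdivision sits inside $\Gamma$ in such a way (automatic for $n=2$, where $\mathcal{D}^{2}\simeq C_{2}$) that the relation genuinely lifts.
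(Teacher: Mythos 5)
Your proposal follows essentially the same route as the paper: Lemmas 1 and 2 reduce everything to a single phase $\phi_Y$, the non-planar case is handled by Kuratowski plus the $2$-torsion relation on $K_5$ and $K_{3,3}$ (which the paper simply cites from the earlier two-particle work rather than recomputing), and the planar case is handled by producing a cocycle from a planar embedding --- the paper does this by integrating the continuum anyon vector potential $\frac{\alpha}{2\pi}\,\hat{z}\times\frac{r_1-r_2}{|r_1-r_2|^2}$ along edges of the drawn graph, which is the analytic twin of your combinatorial signed-exchange count $w$; in both versions the key check is that the flux through every product $2$-cell $e_1\times e_2$ vanishes, which holds because the two edges are disjoint so the relative position never passes through coincidence. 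The one place where your write-up is genuinely sharper is the parity homomorphism $p\colon H_1(\mathcal{D}^2(\Gamma))\to\mathbb{Z}_2$ from the ordered-to-unordered double cover: the paper's Kuratowski step only transports the relation $2\phi_Y=0$ from the $K_5$/$K_{3,3}$ subdivision into $\Gamma$ and leaves implicit why the inclusion does not kill the class altogether (i.e.\ why $A=\mathbb{Z}_2$ rather than $0$); your $p$ closes that gap cleanly and costs nothing. Your flagged obstacles are the right ones, and your remark that subdivision is harmless for $n=2$ correctly repairs the paper's slightly loose phrasing of Kuratowski's theorem (subgraph homeomorphic to, not isomorphic to, $K_5$ or $K_{3,3}$).
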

\begin{proof}
By Lemmas 1 and 2 we only need to determine the phase $\phi_{Y}$. It
was shown in \cite{JHJKJR} that for the graphs $K_{5}$ and $K_{3,3}$
$H_1(\mathcal{D}^2(\Gamma))=\mathbb{Z}^{\beta_1(\Gamma)}\oplus \mathbb{Z}_2$. Therefore the phase $\phi_{Y}=0$ or $\pi$. By Kuratowski's
theorem \cite{Kuratowski30} every non-planar graph contains a subgraph
which is isomorphic to $K_{5}$ or $K_{3,3}$. This proves the statement
for non-planar graphs. For planar graphs we have the anyon phase and
hence $A=\mathbb{Z}$. This is because for planar graphs, one can introduce anyon phases by drawing the graph in the plane and integrating the anyon vector potential $\frac{\alpha}{2\pi} \hat{z}\times \frac{r_1 - r_2}{|r_1 - r_2|^2}$ along the edges of the two-particle graph, where $r_1$ and $r_2$ are the positions of the particles.\qed
\end{proof}

Finally, note that as a direct consequence of relation (\ref{eq:AB-2}) and Theorem \ref{3-final}, AB-phases for $3$-connected nonplanar graphs are independent of the positions of stationary particles. This is not necessarily the case for $3$-connected planar graphs, as $\phi_Y$ phases are equal only up to the sign.

\subsection{$2$-connected graphs}\label{subsec: two-connected}

In this subsection we discuss $2$-connected graphs. First, by
considering a simple example we show that in contrast to $3$-connected
graphs it is possible to have more than one $\phi_{Y}$ phase. Using
a decomposition procedure of a $2$-connected graph into $3$-connected
graphs and topological cycles we provide the formula for $H_{1}(\mathcal{D}^{2}(\Gamma))$.

\begin{figure}[h]
\begin{center}\includegraphics[scale=0.5]{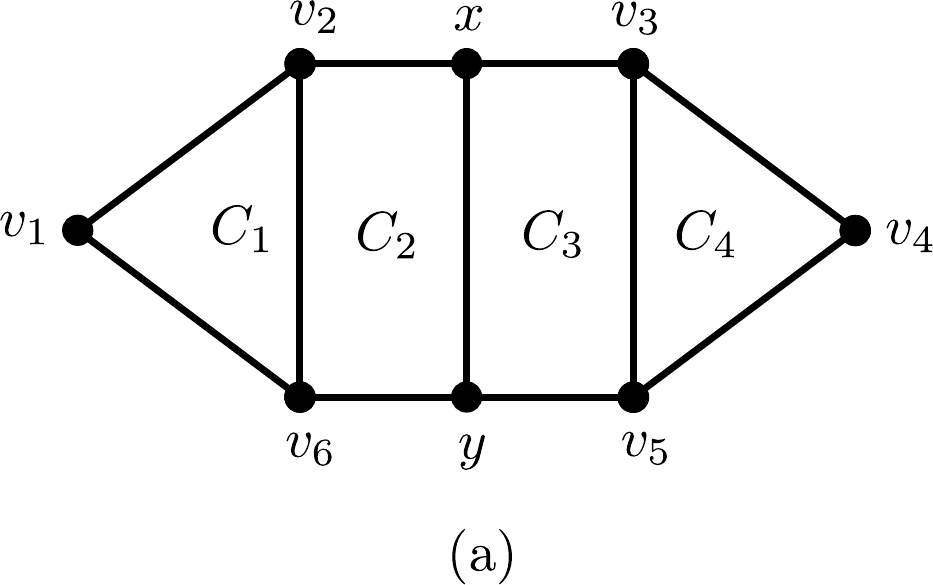}~~~\includegraphics[scale=0.5]{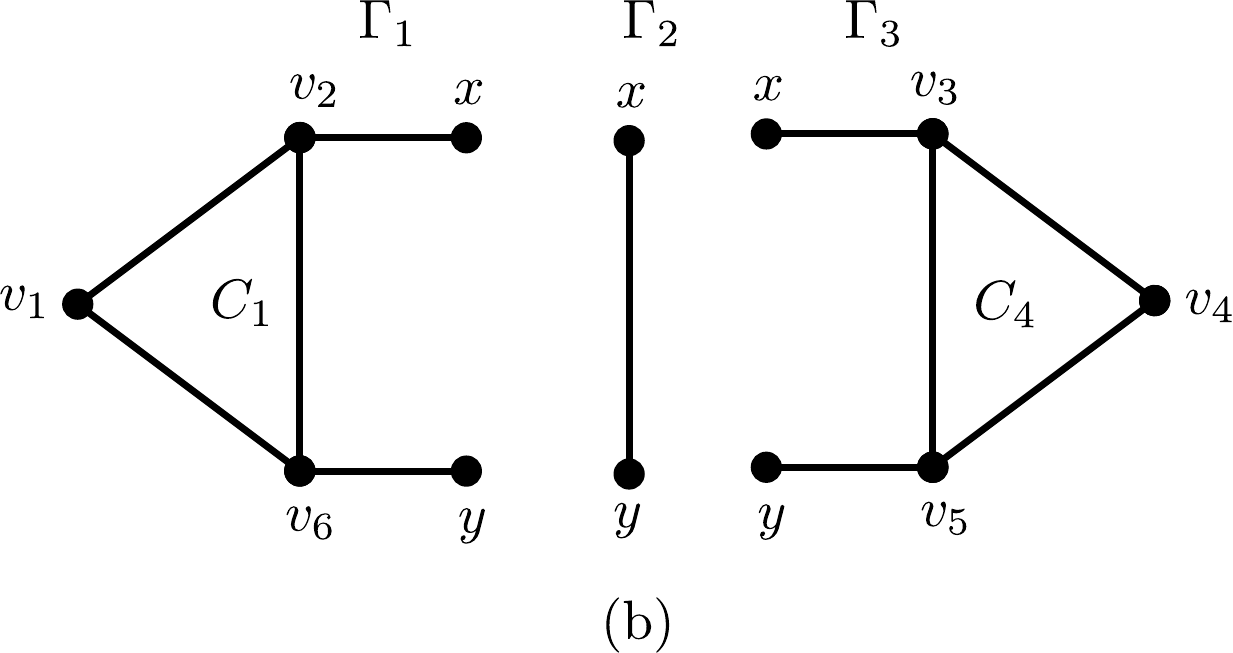}\end{center}

\begin{center}~~~~~~~~~~~~~~\includegraphics[scale=0.5]{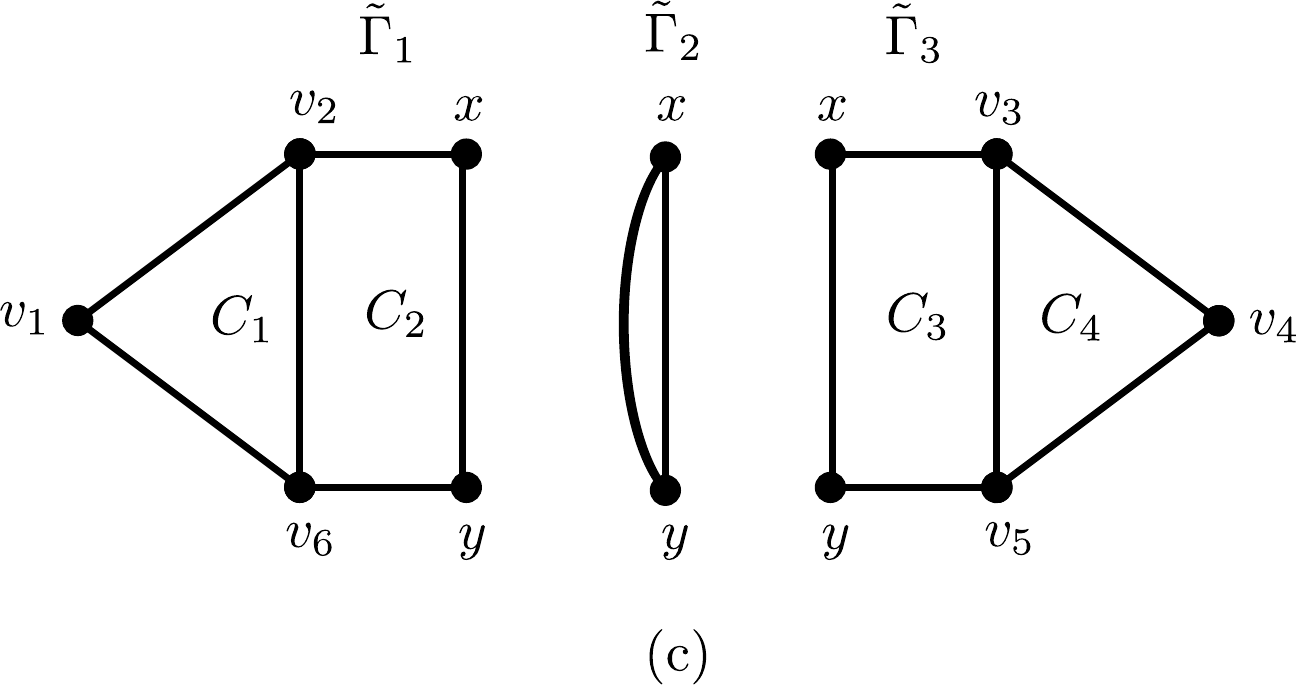}\end{center}

\caption{\label{fig:2-conn-ex}(a) An example of a $2$-connected graph, (b) the
components of the $2$-cut $\{x,y\}$, (c) the marked components. }
\end{figure}

\begin{example}
Let us consider graph $\Gamma$ shown in figure \ref{fig:2-conn-ex}(a).
Since vertices $v_{1}$ and $v_{4}$ are $2$-valent, $\Gamma$ is
not $3$-connected. It is however $2$-connected. Note that $\beta_1(\Gamma)=4$
and that there are six Y-graphs, with middle vertices $v_{2}$, $v_{3}$,
$v_{5}$, $v_{6}$, $x$ and $y$ respectively. Using Fact \ref{fact1} and  relation (\ref{eq:AB-2})
we verify that
\begin{gather}
\phi_{Y_{v_{2}}}=\phi_{Y_{v_{6}}},\,\,\phi_{Y_{v_{3}}}=\phi_{Y_{v_{5}}},\,\,\phi_{Y_{x}}=\phi_{Y_{y}}.\label{eq:example4}
\end{gather}
It is also straightforward to show that the phases $\phi_{Y_{v_{2}}}$, $\phi_{Y_{v_{3}}}$ and $\phi_{Y_{x}}$ are independent. Therefore we have three independent $\phi_{Y}$ phases and four AB-phases, and so
\begin{gather}
H_{1}(\mathcal{D}^{2}(\Gamma))=\mathbb{Z}^{6}.
\end{gather}
Vertices $\{x,y\}$ constitute a $2$-vertex cut of $\Gamma$, i.e.
after their deletion $\Gamma$ splits into three connected components
$\Gamma_{1}$, $\Gamma_{2}$, $\Gamma_{3}$ (see figure \ref{fig:2-conn-ex}(b)).
They are no longer $2$-connected. Moreover, for example, the two Y-subgraphs $Y_{v_{2}}$ and $Y_{v_{6}}$ for which $\phi_{Y_{v_{2}}}=\phi_{Y_{v_{6}}}$
in $\Gamma$ no longer satisfy this condition in $\Gamma_{1}$, i.e. $\phi_{Y_{v_{2}}}\neq\phi_{Y_{v_{6}}}$ in $\Gamma_1$. This
is because the AB-phases $\phi_{C_{1},1}^{x}$ and $\phi_{C_{1},1}^{y}$
are not equal. To make components $\Gamma_i$ $2$-connected and at the same time keep the correct relations between $\phi_{Y_{v_{i}}}$, it is enough to add to each
component $\Gamma_{i}$ an additional edge between vertices $x$ and $y$ (see figure \ref{fig:2-conn-ex}(c)).
The resulting graphs, which we call the marked components and denote by $\tilde{\Gamma}_{i}$ \cite{KP11}, are $2$-connected
and the relations between the Y-graphs in each $\tilde{\Gamma}_{i}$
are the same as in $\Gamma$. The union of the three marked components
has, however, $\beta_1(\Gamma)+1$ independent cycles. On the other
hand, by splitting $\Gamma$ into marked components the Y-cycles $Y_{x}$
and $Y_{y}$ have been lost. Since $\phi_{Y_{x}}=\phi_{Y_{y}}$ we
have lost one $\phi_{Y}$ phase. Summing up we can write $H_{1}(\mathcal{D}^{2}(\Gamma))\oplus\mathbb{Z}=\bigoplus_{i=1}^{3}H_{1}(\mathcal{D}^{2}(\tilde{\Gamma}_{i}))\oplus\mathbb{Z}$.
\end{example}

\paragraph{2-vertex cut for an arbitrary $2$-connected graph $\Gamma$}

In figure \ref{fig:2-vertex-cut}(a) a more general $2$-vertex
cut is shown together with components $\Gamma_{i}$. It is easy to
see that the marked components $\tilde{\Gamma}_{i}$ are $2$-connected
and the relations between the $\phi_{Y}$ phases in each $\tilde{\Gamma}_{i}$
are the same as in $\Gamma$.
Let $\mu(x,y)$ be the number of $\tilde{\Gamma}_{i}$ components into which $\Gamma$ splits after removal of vertices $x$ and $y$. By Euler's formula the union $\{\tilde{\Gamma}_{i}\}_{i=1}^{\mu(x,y)}$ of $\mu(x,y)$ marked components has
\begin{gather}
\beta=\# \mathrm{edges}-\# \mathrm{vertices}+\mu(x,y)\nonumber\\=E(\Gamma)+\mu(x,y)-\left(V(\Gamma)+2(\mu(x,y)-1)\right)+\mu(x,y)\nonumber \\
=E(\Gamma)-V(\Gamma)+2=\beta_1(\Gamma)+1,\label{eq:cycle-number-2-conn}
\end{gather}
independent cycles. By splitting $\Gamma$ into the marked components
we possibly lose $\phi_{Y}$ phases corresponding to the Y-graphs
with the middle vertex $x$ or $y$. However
\begin{enumerate}
\item If three edges of a Y-graph are connected to the same component we do
not lose $\phi_{Y}$.
\item If two edges of a Y-graph are connected to the same component we do
not lose $\phi_{Y}$. This can be understood by looking at figure
\ref{fig:2-vertex-cut}(b). The phase of the dashed Y-graph which
is lost is the same as the phase of the dotted Y-graph inside $\Gamma_{2}$.
\end{enumerate}
Hence the $\phi_{Y}$ phases we lose correspond to the Y-graphs for which
each edge is connected to a different component. First we want to show that any two Y-graphs with the central vertex $x$ (or $y$) whose edges are connected to three fixed components have the same phase. It is enough to show this for Y-graphs which share the same center and two edges. Let us consider two such Y-graphs (see figure \ref{fig:2-vertex-cut}(c) - the dashed egdes are common for both Y-graphs). Let $\alpha_{1}$ and $\alpha_{2}$ be endpoints of edges which are not shared by both Y-graphs. Since there is a path between $\alpha_{1}$ and $\alpha_{2}$ in $\Gamma_{2}$
and paths $P_{a_{1},a_{2}}$, $P_{b_{1},b_{2}}$ in $\Gamma_{1}$ and $\Gamma_{3}$ respectively we can apply Fact \ref{fact1} and  relation (\ref{eq:AB-2})
to the cycle $x\rightarrow a_{1}\cup P_{a_{1},a_{2}}\cup a_{2}\rightarrow y\rightarrow b_{2}\cup P_{b_{1},b_{2}}\cup b_{2}\rightarrow x$ and the two considered Y-graphs
obtaining equality of the two respective $\phi_{Y}$ phases. Therefore, the choice of three components gives only one $\phi_{Y}$ phase. Moreover,
note that after choosing three components the phase for the Y-graph
with the middle vertex $x$ is the same as for the Y-graph with the
middle vertex $y$ (see figure \ref{fig:2-vertex-cut}(d) where the considered Y-graphs are denoted by dashed and dotted lines). This is once again due to Fact \ref{fact1} and  relation (\ref{eq:AB-2}) applied to the cycle $x\rightarrow a_{1}\cup P_{a_{1},a_{2}}\cup a_{2}\rightarrow y\rightarrow \alpha_{2}\cup P_{\alpha_{1},\alpha_{2}}\cup \alpha_{2}\rightarrow x$ and the two considered Y-graphs. Summing up, the number of phases we lose when splitting $\Gamma$ into $\mu(x,y)$ marked components, $N_{2}(x,y)$, is equal to the number
of independent Y-graphs in the star graph with $\mu(x,y)$ edges. This can be calculated (see for example \cite{JHJKJR}) to be $N_{2}(x,y)=\frac{1}{2}\left(\mu(x,y)-2\right)\left(\mu(x,y)-1\right)$.
Hence
\begin{gather}
H_{1}(\mathcal{D}^{2}(\Gamma))=\bigoplus_{i=1}^{\mu(x,y)}H_{1}(\mathcal{D}^{2}(\tilde{\Gamma}_{i}))\oplus\mathbb{Z}^{N_{2}(x,y)-1}.\label{eq:2-connected}
\end{gather}
Note that the $-1$ in the exponent here is to get rid of the additional AB-phase stemming from the calculation (\ref{eq:cycle-number-2-conn}).
\begin{figure}[h]
\begin{center}\includegraphics[scale=0.35]{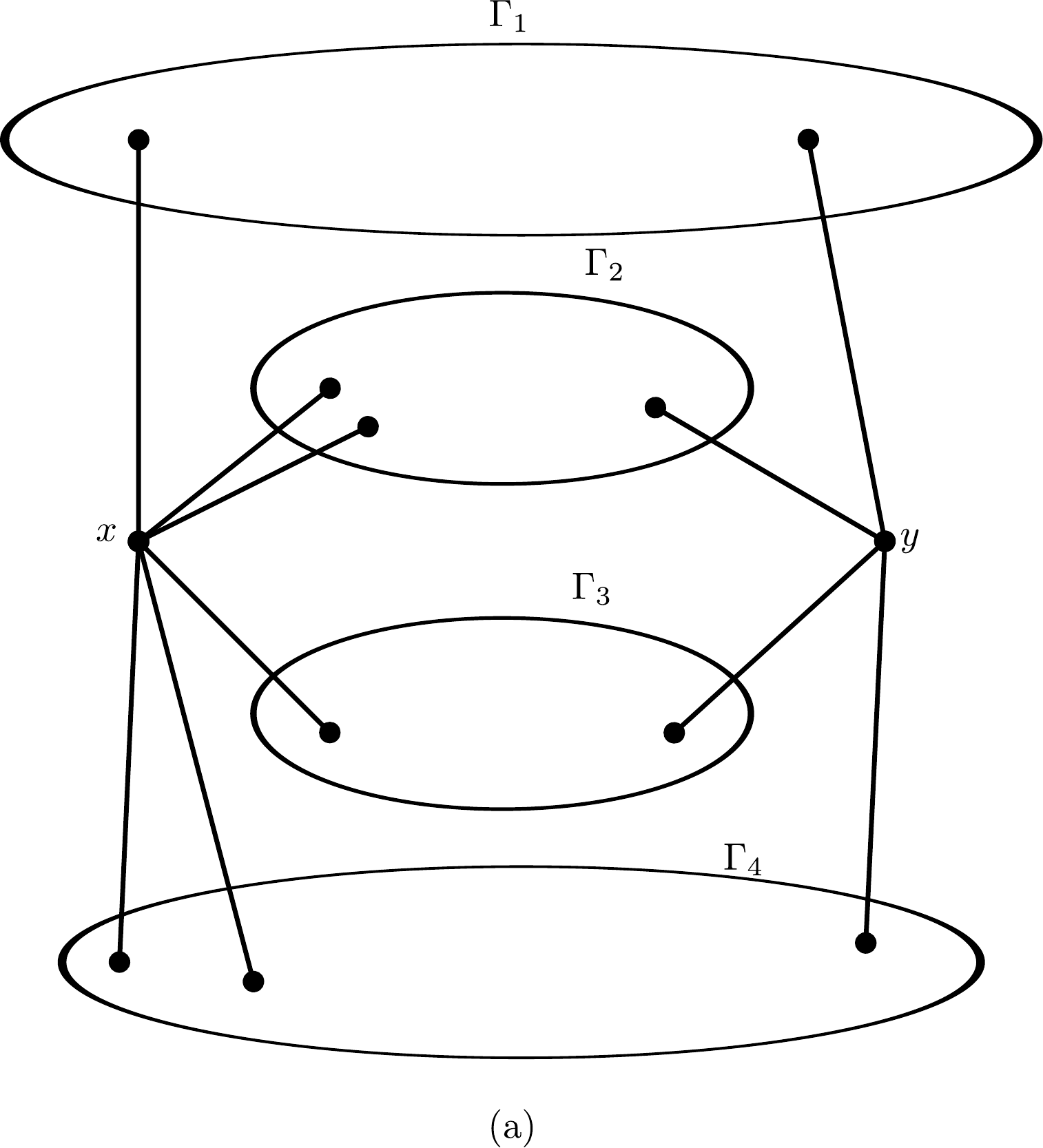}~~~\includegraphics[scale=0.35]{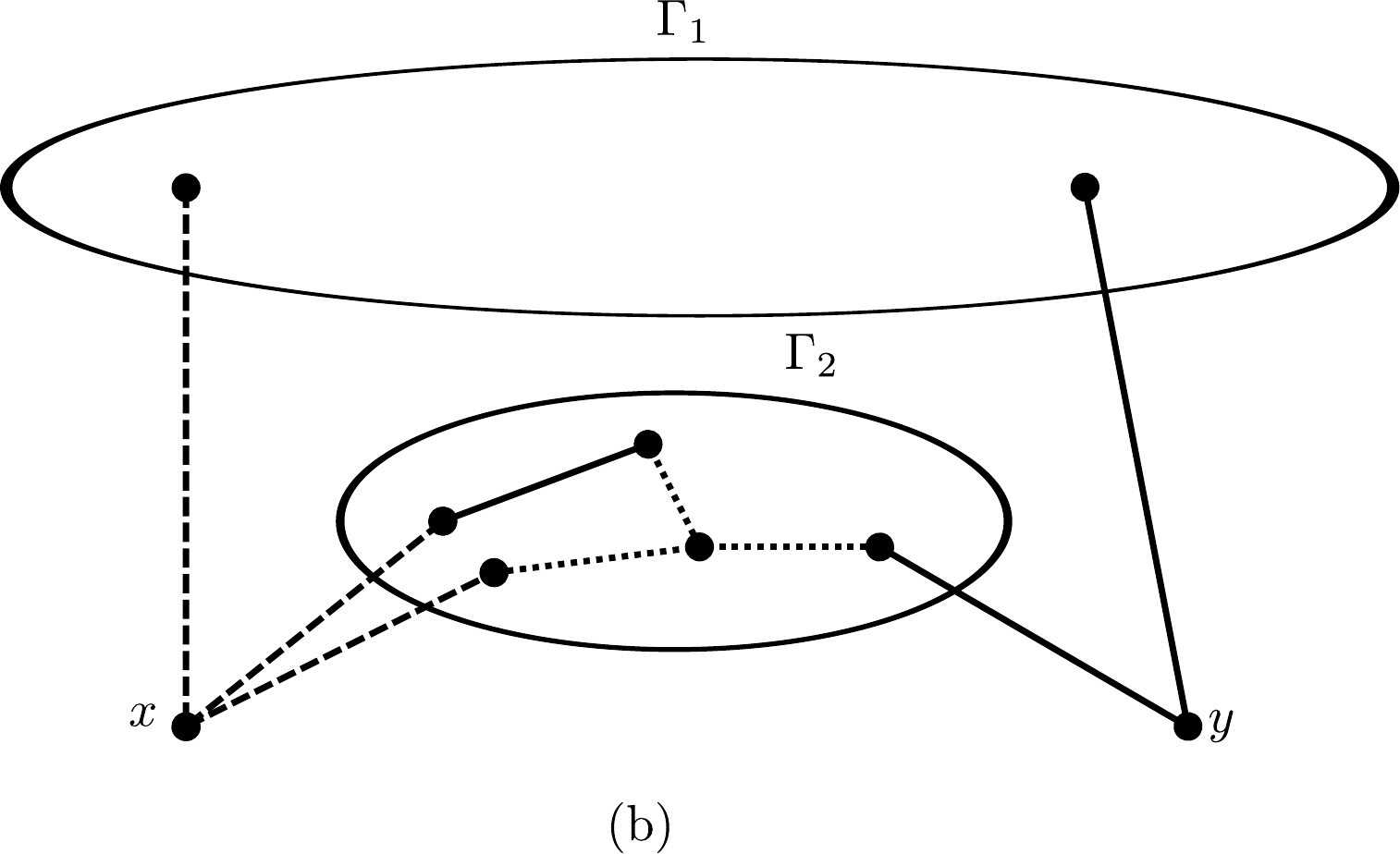}\end{center}
\bigskip{}
\begin{center}\includegraphics[scale=0.35]{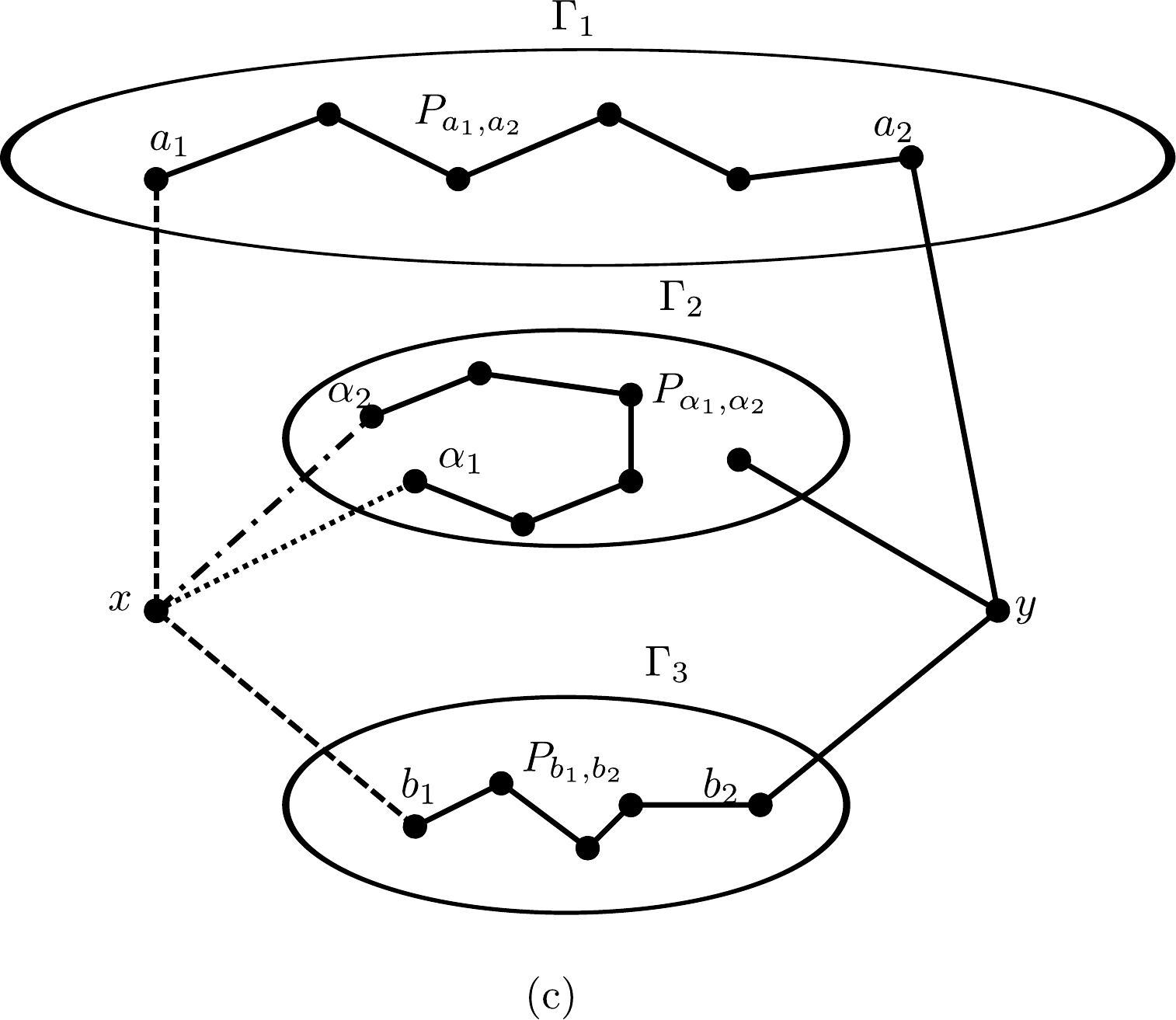}~~\includegraphics[scale=0.35]{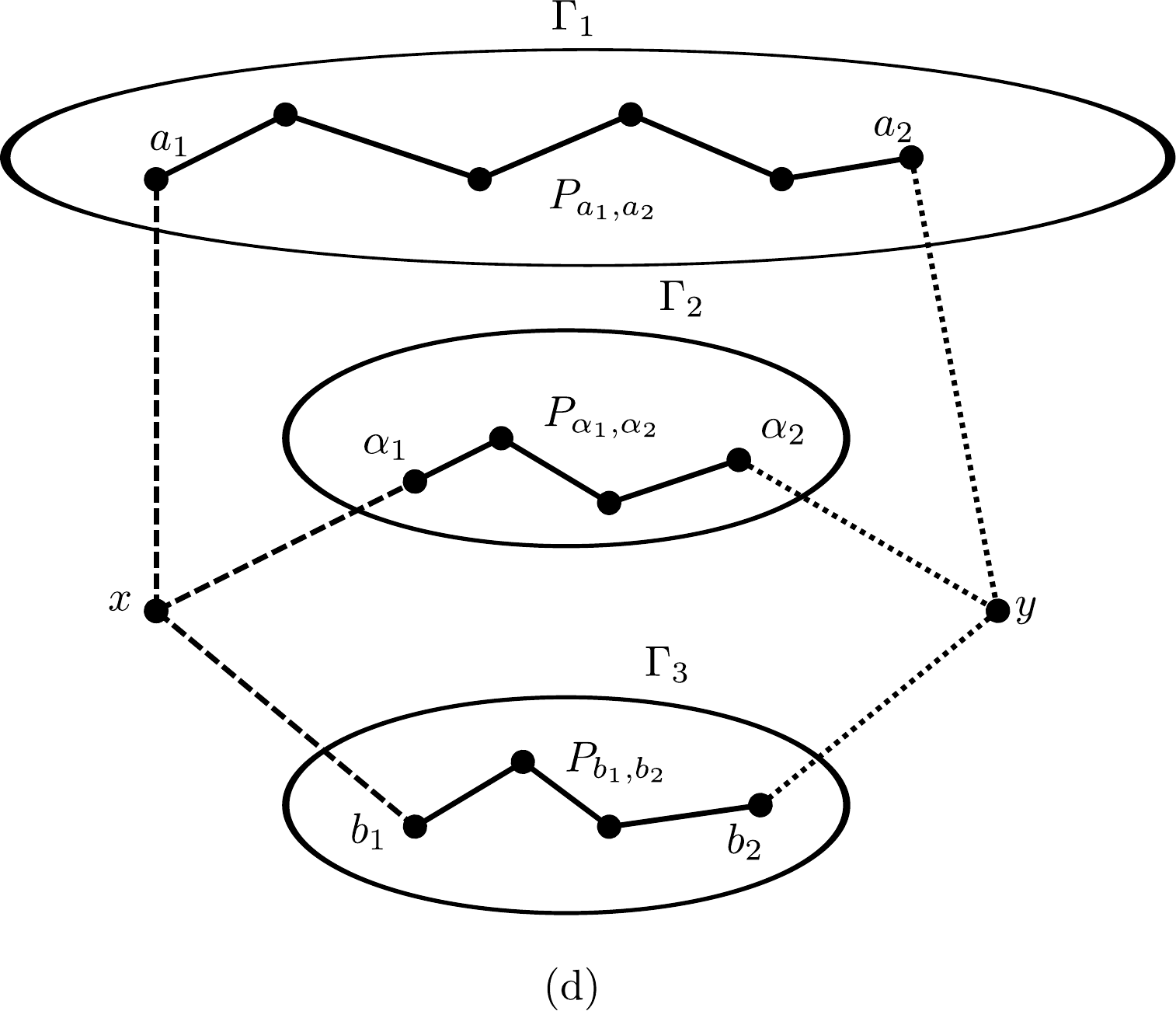}\end{center}
\caption{\label{fig:2-vertex-cut}(a) 2-vertex cut of $\Gamma$ (b) $Y_{x}$
with two edges connected to $\Gamma_{2}$ (c) two Y-cycles with three
edges in three different components (d) the equality of $\phi_{Y_{x}}$
and $\phi_{Y_{y}}$}
\end{figure}
Finally, it is known in graph theory that by the repeated application
of the above decomposition procedure the resulting marked components
are either topological cycles or $3$-connected graphs \cite{tutte01}. Let $n$ be
the number of $2$-vertex cuts which is needed to get such a decomposition,
$N_{2}=\sum_{\{x_{i},y_{i}\}}N_{2}(x_{i},y_{i})$, $N_{3}$ the number
of planar $3$-connected components, $N_{3}^{\prime}$ the number of
non-planar $3$-connected components and $N_{3}^{''}$ the number of
the topological cycles. Let $\mu=N_{3}+N_{3}^{'}+N_{3}^{''}$. Then
\begin{gather}
H_{1}(\mathcal{D}^{2}(\Gamma))=\bigoplus_{i=1}^{\mu}H_{1}(\mathcal{D}^{2}(\tilde{\Gamma}_{i}))\oplus\mathbb{Z}^{N_{2}-n},\label{eq:2-conn-intermidiate}
\end{gather}
where
\begin{gather}
H_{1}(\mathcal{D}^{2}(\tilde{\Gamma}_{i}))=\mathbb{Z}^{\beta_1(\tilde{\Gamma}_{i})}\oplus\mathbb{Z},\,\,\, \tilde{\Gamma}_{i}-\mathrm{planar}\\\nonumber
H_{1}(\mathcal{D}^{2}(\tilde{\Gamma}_{i}))=\mathbb{Z}^{\beta_1(\tilde{\Gamma}_{i})}\oplus\mathbb{Z}_{2},\,\,\,\tilde{\Gamma}_{i}-\mathrm{nonplanar}\\\nonumber
H_{1}(\mathcal{D}^{2}(\tilde{\Gamma}_{i}))=\mathbb{Z},\,\,\,\tilde{\Gamma}_{i}-\mathrm{topological}\,\,\mathrm{cycle}\\\nonumber
\end{gather}
Note that $\sum_{i}\beta_1(\tilde{\Gamma}_{i})+N_{3}^{''}=\beta_1(\Gamma)+n$
and therefore
\begin{gather}
H_{1}(\mathcal{D}^{2}(\Gamma))=\mathbb{Z}^{\beta_1(\Gamma)+N_{2}+N_{3}}\oplus\mathbb{Z}_{2}^{N_{3}^{\prime}}.\label{eq:2-c}
\end{gather}

\subsection{$1$-connected graphs\label{sub:One-connected-graphs}}
In this subsection we focus on $1$-connected graphs. Assume that $\Gamma$ is $1$-connected but not $2$-connected. There exists a vertex $v\in V(\Gamma)$ such that after its deletion
$\Gamma$ splits into at least two connected components. Denote by
$\Gamma_{1},\ldots,\Gamma_{\mu(v)}$ these components. Assume that
$\Gamma_{i}$ is attached to $v$ by $E_{i}$ edges and put $\nu(v)=\sum_{i}E_{i}$,
so that $\nu(v)$ is the total number of edges at $v$. By Euler's formula the union of components $\{\Gamma_{i}\}_{i=1}^{{\mu(v)}}$ has
\begin{gather}
E(\Gamma)-\left(V(\Gamma)+\mu(v)-1\right)+\mu(v)=\beta_1(\Gamma),\label{eq:cycles-1-conn}
\end{gather}
independent cycles, hence the number of independent cycles does not change
compared to $\Gamma$. Moreover, the phases $\phi_{Y}$ inside each
of the components are the same as in $\Gamma$. Note, however, that
by splitting we lose Y-graphs whose three edges do not belong to
one fixed component $\Gamma_{i}$. Consequently, there are two cases to consider:

\begin{figure}[h]
\begin{center}~~\includegraphics[scale=0.35]{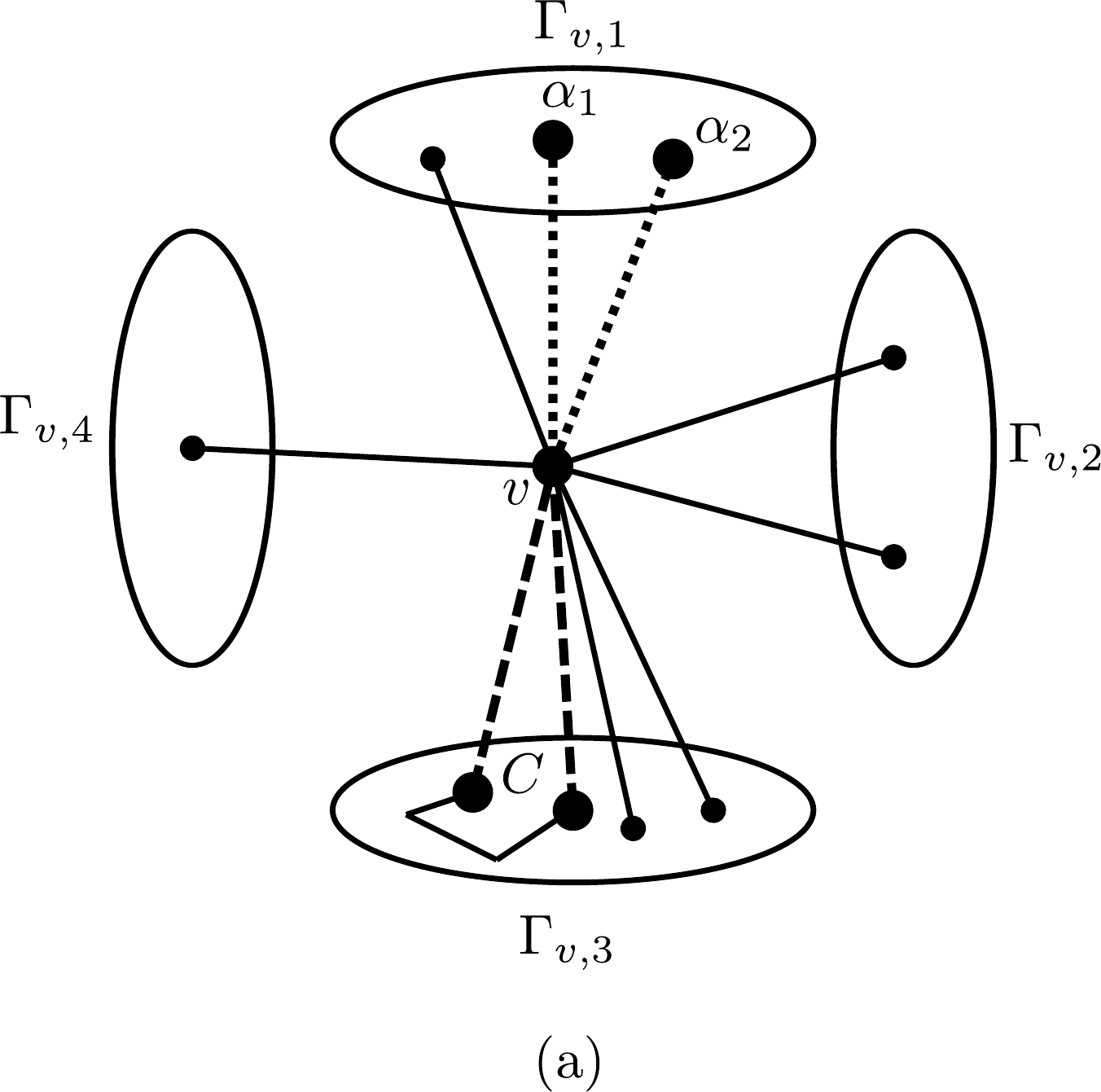}~~~~~~~~~~~~~~\includegraphics[scale=0.35]{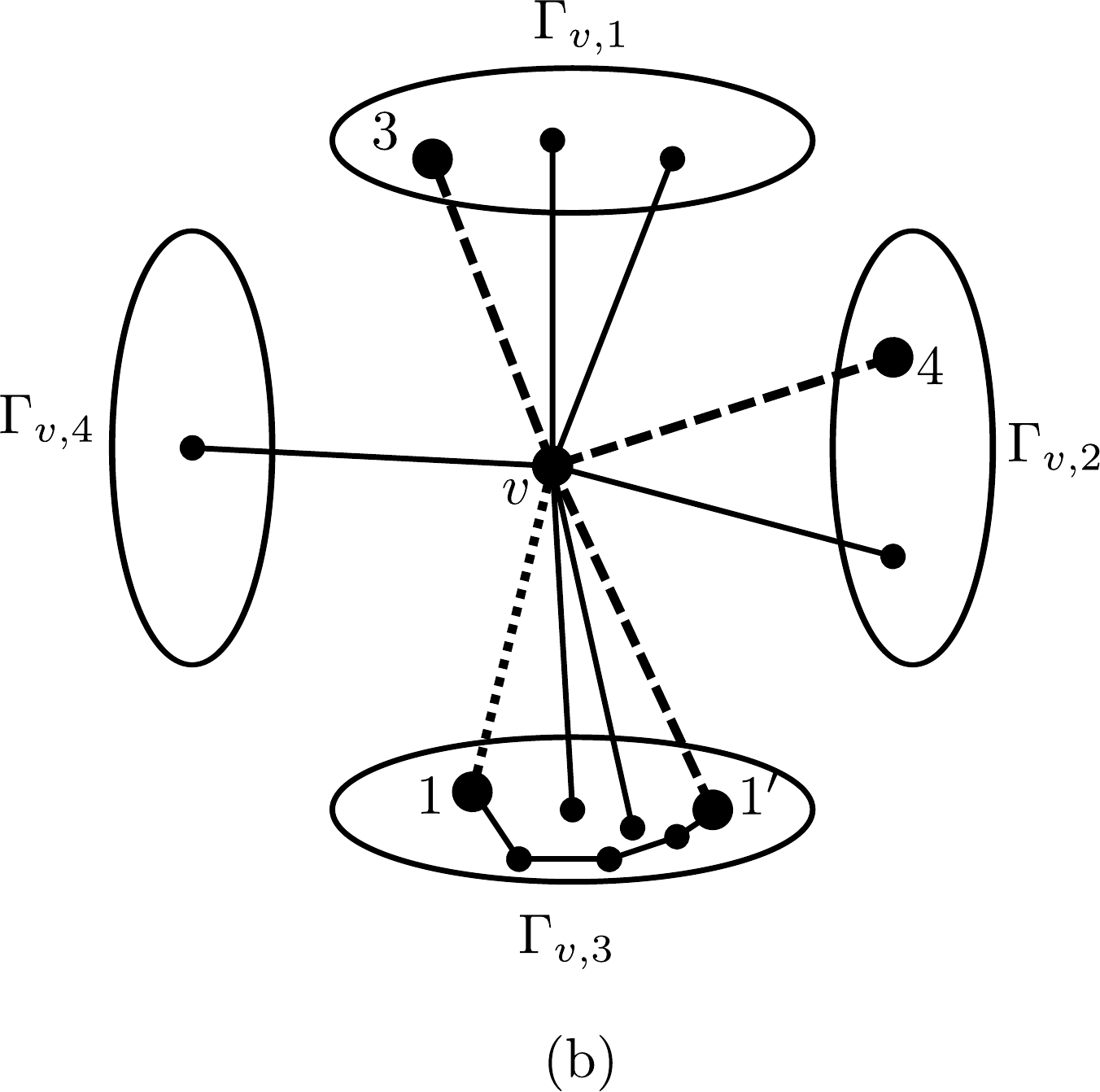}~\end{center}


\begin{center}\includegraphics[scale=0.6]{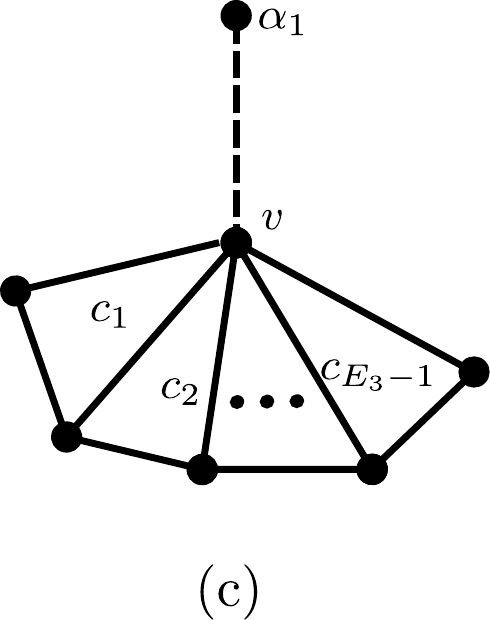}~~~\includegraphics[scale=0.55]{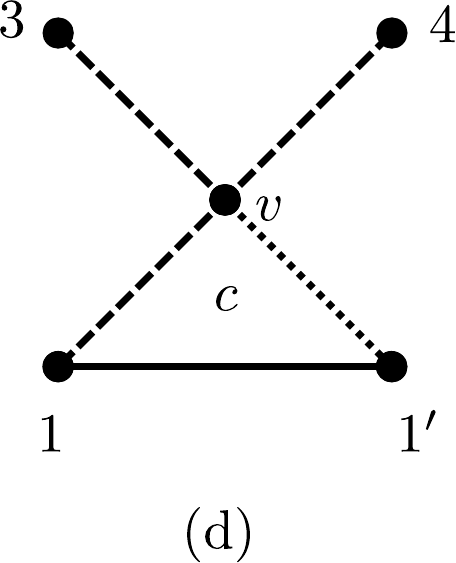}~~~~\includegraphics[scale=0.5]{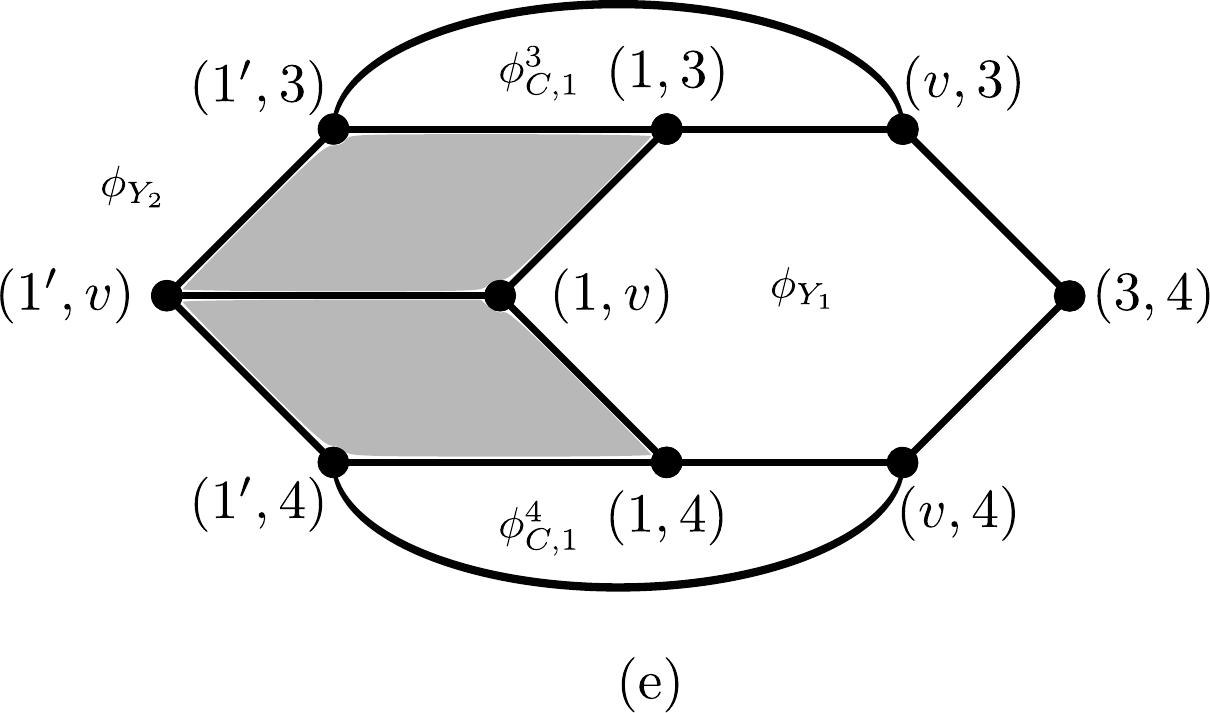}
\end{center}

\caption{\label{fig:1-vertex-cut}(a)Two edges of $Y$ are attached to $\Gamma_{v,3}$
while the third one to $\Gamma_{v,1}$ (b) Each edge of Y is attached
to a different component (c) Y-graphs with two edges in the same component (d) Two Y-graphs (d) The relevant part of $2$-particle configuration space of (d)}
\end{figure}

\begin{enumerate}
\item Two edges of the Y-graph are attached to one component, for example $\Gamma_{v,3}$,
while the third one is attached to another component, $\Gamma_{v,1}$. We claim that the phase $\phi_{Y}$ does
not depend on the choice of the third edge, provided
it is attached to $\Gamma_{v,1}$. To see this consider two Y-graphs,
$Y_{1}$ and $Y_{2}$ shown in figure \ref{fig:1-vertex-cut}(a).
Since vertices $\alpha_{1}$ and $\alpha_{2}$ are connected by a
path, by Fact \ref{fact1} $\phi_{C,1}^{\alpha_{1}}=\phi_{C,1}^{\alpha_{2}}$. Next, relation (\ref{eq:AB-2}) applied to cycle $C$ and the two considered Y  graphs gives $\phi_{Y_{1}}=\phi_{Y_{2}}$. \\

After choosing one edge of Y in component $\Gamma_{v,1}$ (by the above argument it does not matter which), we can choose the  two other edges in $\Gamma_{v,3}$ in $E_{3}\choose{2}$ ways. Therefore, {\it a priori}, we have $E_{3}\choose{2}$ Y-graphs to consider. There are, however,  relations between them. In order to find the relevant relations consider the graph shown in figure \ref{fig:1-vertex-cut}(c).
We are interested in  Y-graphs with one edge given by $\alpha_1 \leftrightarrow v$ (dashed line) and two edges joining $v$ to vertices in
 $\Gamma_{v,3}$, say  $j$ and $k$.
 Each such Y-graph determines a cycle $c$ in $\Gamma_{v,3}$ containing vertices $v$, $j$ and $k$ (since $\Gamma_{v,3}$ is connected).
We have that
\begin{gather}\label{eq: c2 c1 Y for 1 connected}
\phi_{c,2}=\phi_{c,1}^{\alpha_1}+\phi_{Y}.
\end{gather}
Therefore, the $E_{3}\choose{2}$ $Y$-phases under consideration are determined by the AB- and two-particle phases, $\phi_{c,2}$ and $\phi_{c,1}^{\alpha_1}$, of the associated cycles $c$.  These cycles may be expressed as linear combinations of a basis  of $E_3-1$ cycles, denoted $c_1, \ldots, c_{E_3 - 1}$, 
as  in figure \ref{fig:1-vertex-cut}(c).  It is clear that if $c = \sum_{i=1}^{E_3} r_i c_i$, then 
\begin{gather}
\phi_{c,1}^{\alpha_1}=\sum_{i=1}^{E_3-1} r_i \phi_{c_i,1}^{\alpha_1}, \quad \,\,\,\phi_{c,2}=\sum_{i=1}^{E_3-1} r_i \phi_{c_i,2}.
\end{gather}
Thus, the $Y$-phases under consideration may be expressed in terms of the $2(E_3-1)$ phases $\phi_{c_i,2}$ and $\phi_{c_i,1}^{\alpha_1}$. \\

Let $Y_i$ be the $Y$-graph which determines the cycle $c_i$.  We may turn the preceding argument around; from  \eqref{eq: c2 c1 Y for 1 connected}, the AB-phase $\phi_{c_i,1}^{\alpha_1}$ can be expressed in terms of $\phi_{Y_i}$ and $\phi_{c_i,2}$.  Combining the preceding observations, we deduce that the $\binom{E_3}{2}$  Y-phases lost when the vertex $v$ is removed may be expressed in terms of the phases $\phi_{c_i,2}$ and $\phi_{Y_i}$.  The phases  $\phi_{c_i,2}$ remain when $v$ is removed.   It follows that phases $\phi_{Y_i}$ suffice to determine all of the lost phases, so that the number of independent $Y$-phases lost is $E_3-1$.
Repeating this argument for each component, the total  number of Y-phases lost is $\sum_{i=1}^{\mu(v)}(E_{i}-1)(\mu(v)-1)=(\mu(v)-1)(\nu(v)-\mu(v))$.\\

\item Each edge of Y is attached to a different component. We will show now that once
three different components have been chosen it does not matter which
of the edges attaching $\Gamma_{v,i}$ to $v$ we choose. To see this
let us consider two Y-graphs shown in figure \ref{fig:1-vertex-cut}(b).
The first one consists of the three dashed edges and the second of
two dashed edges attached to $\Gamma_{v,1}$ and $\Gamma_{v,2}$ respectively and
the dotted edged attached to $\Gamma_{v,3}$. We will show
that the phase corresponding to Y-graph $\{1^{\prime}\leftrightarrow v,\,v\leftrightarrow3,\,v\leftrightarrow4\}$
is determined by the phase corresponding to Y-graph  $\{1\leftrightarrow v,\,v\leftrightarrow3,\,v\leftrightarrow4\}$ (see figure \ref{fig:1-vertex-cut}(d)) and phases added in the previously considered step. It is clear by figure \ref{fig:1-vertex-cut}(e) that
\begin{gather}
\phi_{Y_2}=\phi_{Y_1}+\phi_{c,1}^3+\phi_{c,1}^4.
\end{gather}
But phases $\phi_{c,1}^3$ and $\phi_{c,1}^4$ are known, as they have been added in the previous step. Thus, the number of the independent
Y-phases we lose is equal to the number of independent
Y-cycles in the two-particle configuration space of the star graph
with $\mu(v)$ edges, that is, $(\mu(v)-1)(\mu(v)-2)/2$.
\end{enumerate}
Summing up we can write
\begin{gather}
H_{1}(\mathcal{D}^{2}(\Gamma))=\bigoplus_{i=1}^{\mu(v)}H_{1}(\mathcal{D}^{2}(\Gamma_{v,i}))\oplus\mathbb{Z}^{N_{1}(v)},\label{eq:1-cut-forula}
\end{gather}
where $N_{1}(v)=(\mu(v)-1)(\mu(v)-2)/{2}+(\mu(v)-1)(\nu(v)-\mu(v))$.
It is known in graph theory \cite{tutte01} that by the repeated application
of the above decomposition procedure the resulting components become finally
$2$-connected graphs. Let $v_{1},\ldots,v_{l}$ be the set of cut
vertices such that components $\Gamma_{v_{i},k}$ are $2$-connected.
Making use of formula (\ref{eq:2-c}) we can write

\begin{gather}
H_{1}(\mathcal{D}^{2}(\Gamma))=\mathbb{Z}^{\beta(\Gamma)+N_{1}+N_{2}+N_{3}}\oplus\mathbb{Z}_{2}^{N_{3}^{\prime}},\label{eq:2-particle-final}
\end{gather}
where $N_{1}=\sum_{i}N_{1}(v_{i})$.

\section{n-particle statistics for $2$-connected graphs\label{sec:N-particle-statistics-for}}

Having discussed $2$-particle configuration spaces, we switch to the
$n$-particle case, $\mathcal{D}^{n}(\Gamma)$, where $n>2$. We proceed in a
similar manner to the previous section. First we give a spanning set of $H_1(\mathcal{D}^n(\Gamma))$. Next we show that if $\Gamma$ is $2$-connected
the first homology group stabilizes with respect to $n$, that is, $H_{1}(\mathcal{D}^{n}(\Gamma))=H_{1}(\mathcal{D}^{2}(\Gamma))$.
Making use of formula (\ref{eq:2-particle-final})

\begin{gather*}
H_{1}(\mathcal{D}^{n}(\Gamma))=\mathbb{Z}^{\beta(\Gamma)+N_{2}+N_{3}}\oplus\mathbb{Z}_{2}^{N_{3}^{\prime}}.
\end{gather*}

\subsection{A spanning set of $H_1(\mathcal{D}^{n}(\Gamma))$\label{sub:An-over-complete-basis-2}}

In order to calculate $H_1(\mathcal{D}^{n}(\Gamma))$ we first need to
subdivide the edges of $\Gamma$ appropriately. By Theorem \ref{Abrams_thm} each edge of $\Gamma$
must be able to accommodate $n$ particles and each cycle needs to
have at least $n+1$ vertices, that is, $\Gamma$ needs to be sufficiently subdivided. Before we specify a
spanning set of $H_1(\mathcal{D}^{n}(\Gamma))$ we first discuss
two interesting aspects of this space. The first one concerns the
relation between the exchange phase of $k$ particles, $k\leq n$
on the cycle $C$ of the lasso graph and its $\phi_{Y}$ phases (see
Lemma \ref{aspect1-1} ). The second gives the relation between the
AB-phases for fixed cycle $c$ of $\Gamma$ and the different possible positions
of the $n-1$ stationary particles.
\begin{lemma}
\label{aspect1-1}The exchange phase, $\phi_{C,n}$, of $n$ particles
on the cycle $c$ of the lasso graph is the sum of the exchange phase,
$\phi_{C,n-1}^{1}$, of $n-1$ particles on the cycle $C$ with the
last particle sitting at the vertex not belonging to $C$, e.g. vertex
$1$, and the phase $\phi_Y$ associated with the exchange of two particles on the $Y$ subgraph with $n-2$ particles
placed in the vertices $v_1,\ldots ,v_{n-2}$ of $C$ not belonging to the Y
\begin{gather*}
\phi_{C,n}=\phi_{C,n-1}^{1}+\phi_{Y}^{v_1,\ldots ,v_{n-2}}.
\end{gather*}
\end{lemma}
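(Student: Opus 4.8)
The plan is to generalize the lasso relation \eqref{eq:lasso-relation}, which is the $n=2$ case of this Lemma, by adding spectator particles one at a time to the basic two-particle exchange picture. The key observation is that the lasso graph consists of a cycle $C$ together with a pendant edge leading to the isolated vertex $1$, and that the combinatorial configuration space $\mathcal{D}^n(\Gamma)$ of the (sufficiently subdivided) lasso can be analysed by tracking which vertices of $C$ are occupied by the $n-2$ stationary spectators $v_1,\ldots,v_{n-2}$. First I would fix these spectators at $v_1,\ldots,v_{n-2}$ on $C$ and the one spectator at vertex $1$ off the cycle, and identify the relevant $1$-cycles in $\mathcal{D}^n(\Gamma)$: the cycle $\phi_{C,n}$ in which all $n$ particles traverse $C$ cyclically (a full $n$-particle exchange around $C$), the cycle $\phi_{C,n-1}^1$ in which $n-1$ particles exchange around $C$ while one sits at vertex $1$, and the $Y$-exchange cycle $\phi_Y^{v_1,\ldots,v_{n-2}}$ in which two particles exchange on the $Y$ subgraph formed by the pendant edge and two edges of $C$, with the remaining $n-2$ particles parked at $v_1,\ldots,v_{n-2}$.

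The heart of the argument is to exhibit, inside $\mathcal{D}^n(\Gamma)$, an explicit $2$-chain whose boundary realizes the relation
\begin{gather*}
\phi_{C,n}=\phi_{C,n-1}^{1}+\phi_{Y}^{v_1,\ldots ,v_{n-2}}.
\end{gather*}
Concretely, I would decompose the full $n$-particle rotation around $C$ into the motion of a single ``active'' particle moving past vertex $1$ (which contributes the $Y$-exchange, by exactly the mechanism that produced \eqref{eq:lasso-relation} in the two-particle case) followed by the collective rotation of the remaining $n-1$ particles around $C$ (which contributes $\phi_{C,n-1}^1$). Because the spectators at $v_1,\ldots,v_{n-2}$ on $C$ do not move during these sub-motions, and the particle at vertex $1$ is free during the $(n-1)$-particle rotation, the two sub-cycles are supported on disjoint-in-time portions of the configuration and their concatenation is homotopic in $\mathcal{D}^n(\Gamma)$ to the full $n$-particle rotation. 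The additivity of phases under concatenation of homotopy classes then yields the stated identity.

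The main obstacle I anticipate is the bookkeeping needed to guarantee that there is always enough room on $C$ to carry out the homotopy: one must check that $C$ has at least $n+1$ vertices (which is exactly condition~2 of Theorem~\ref{Abrams_thm} for the sufficiently subdivided lasso), so that at every stage of the motion there is a free vertex on $C$ permitting the particles to pass one another and complete the exchange. In particular, the step that peels off the single $Y$-exchange requires a vacancy adjacent to vertex $1$ on $C$, and the subsequent $(n-1)$-particle rotation requires a vacancy circulating around $C$ while the particle at vertex $1$ stays put. Verifying that these vacancies can be maintained simultaneously with the $n-2$ fixed spectators — i.e. that the prescribed $2$-cell is genuinely present in $\mathcal{D}^n(\Gamma)$ rather than excised into $\tilde\Delta$ — is the delicate point, but it reduces entirely to the sufficient-subdivision hypothesis and so should present no essential difficulty once the subdivision is counted correctly.
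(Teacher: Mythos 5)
Your proposal is correct and matches the paper's argument in substance: the identity is realized by product $2$-cells in $\mathcal{D}^n$ of the sufficiently subdivided lasso, which deform the $n$-particle rotation around $C$ into the $(n-1)$-particle rotation (one particle parked off the cycle) composed with a Y-exchange (the remaining $n-2$ particles parked on $C$), with sufficient subdivision guaranteeing the needed vacancies. The paper simply exhibits these $2$-cells explicitly for the representative cases $n=3$ and $n=4$ (the figures of the relevant parts of the configuration spaces) and observes that the general case is identical, which is exactly the bookkeeping your final paragraph describes.
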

\begin{proof}
By (\ref{eq:lasso-relation}) the above lemma is true for $n=2$.
For the proof in the general case it is enough to consider the lasso
graphs with $3$ and $4$ particles shown in figures \ref{fig:-The-subdivided}(a)
and \ref{fig:-The-subdivided}(b). It is easy to see that they are
indeed sufficiently subdivided. The Y-graphs we consider are $\{2\leftrightarrow3,\,3\leftrightarrow4,\,3\leftrightarrow6\}$
and $\{3\leftrightarrow4,\,4\leftrightarrow5,4\leftrightarrow8\}$
respectively. The relevant parts of the $3$ and $4$-particle configuration
spaces are shown in figures \ref{fig:The-relevant-parts}(a) and \ref{fig:The-relevant-parts}(b).
The statement follows immediately from these figures.\qed
\end{proof}
\begin{figure}[h]
\begin{center}~~~~~~\includegraphics[scale=0.5]{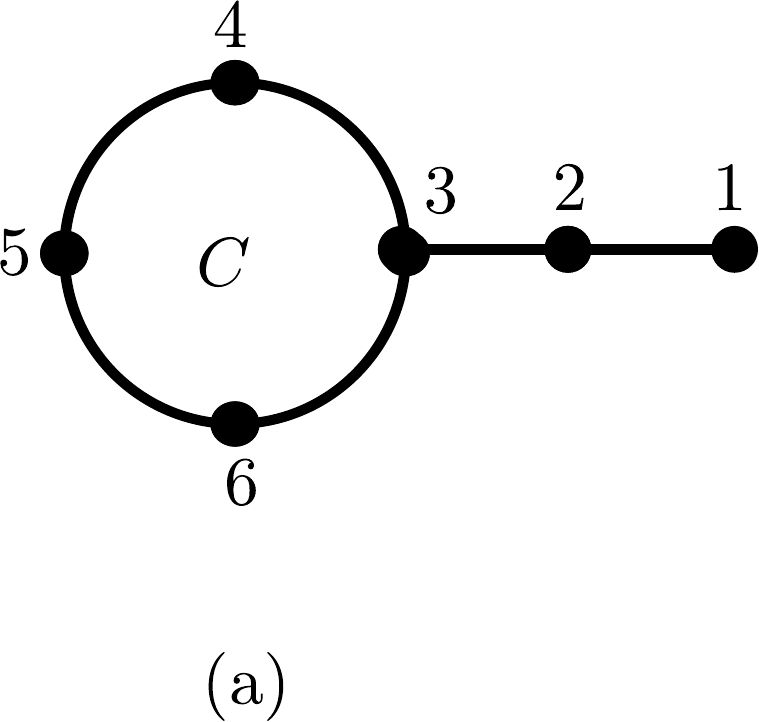}~~~~~~~~~~~\includegraphics[scale=0.5]{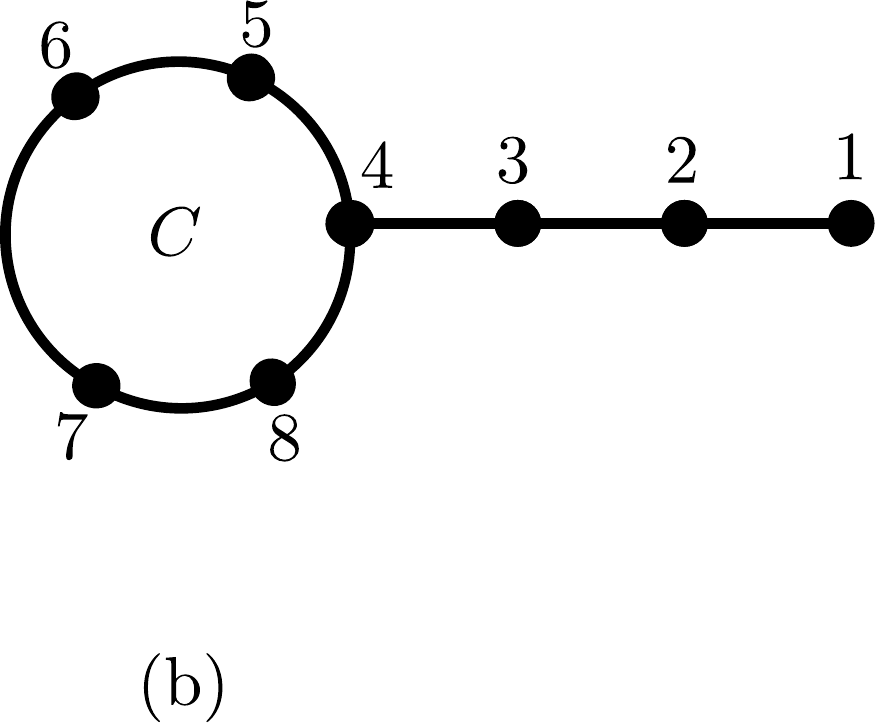}\end{center}

\caption{ \label{fig:-The-subdivided}The subdivided lasso for (a) $3$ particles,
(b) $4$ particles. }
\end{figure}

\begin{figure}[h]
\begin{center}\includegraphics[scale=0.5]{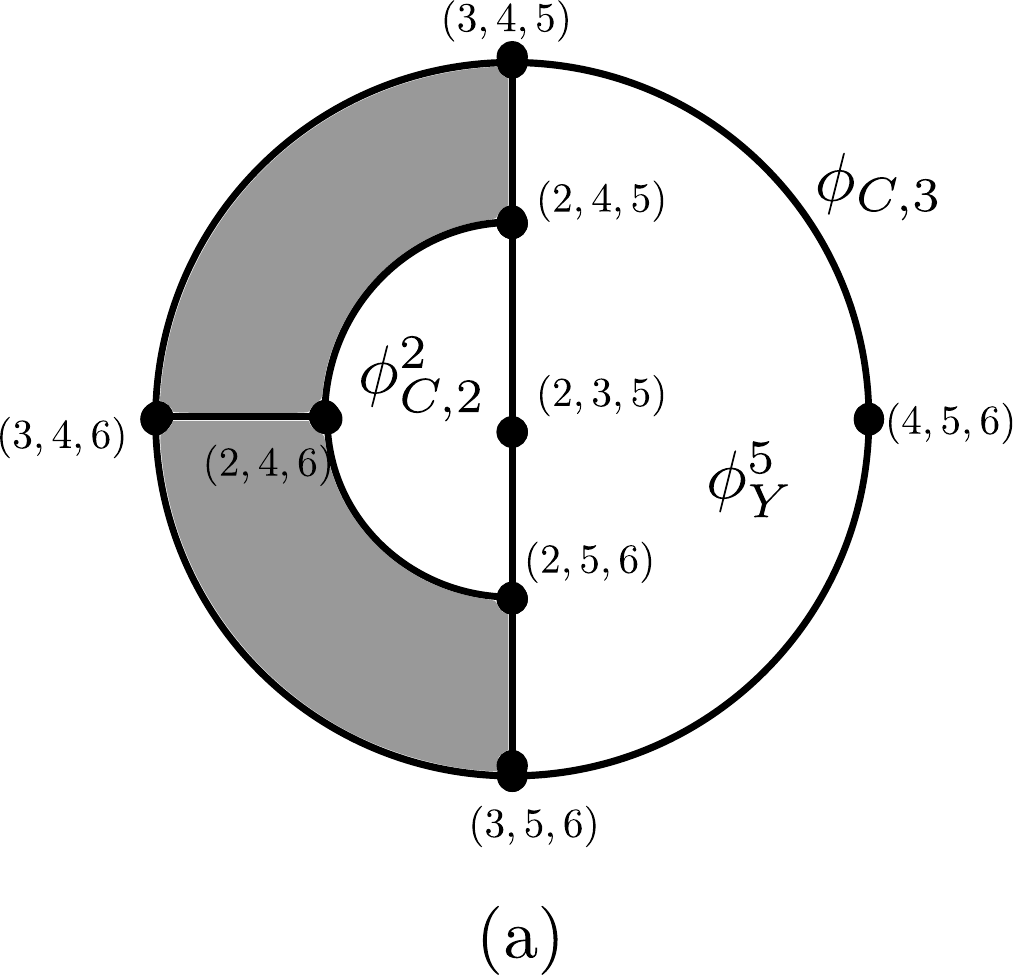}~~~\includegraphics[scale=0.5]{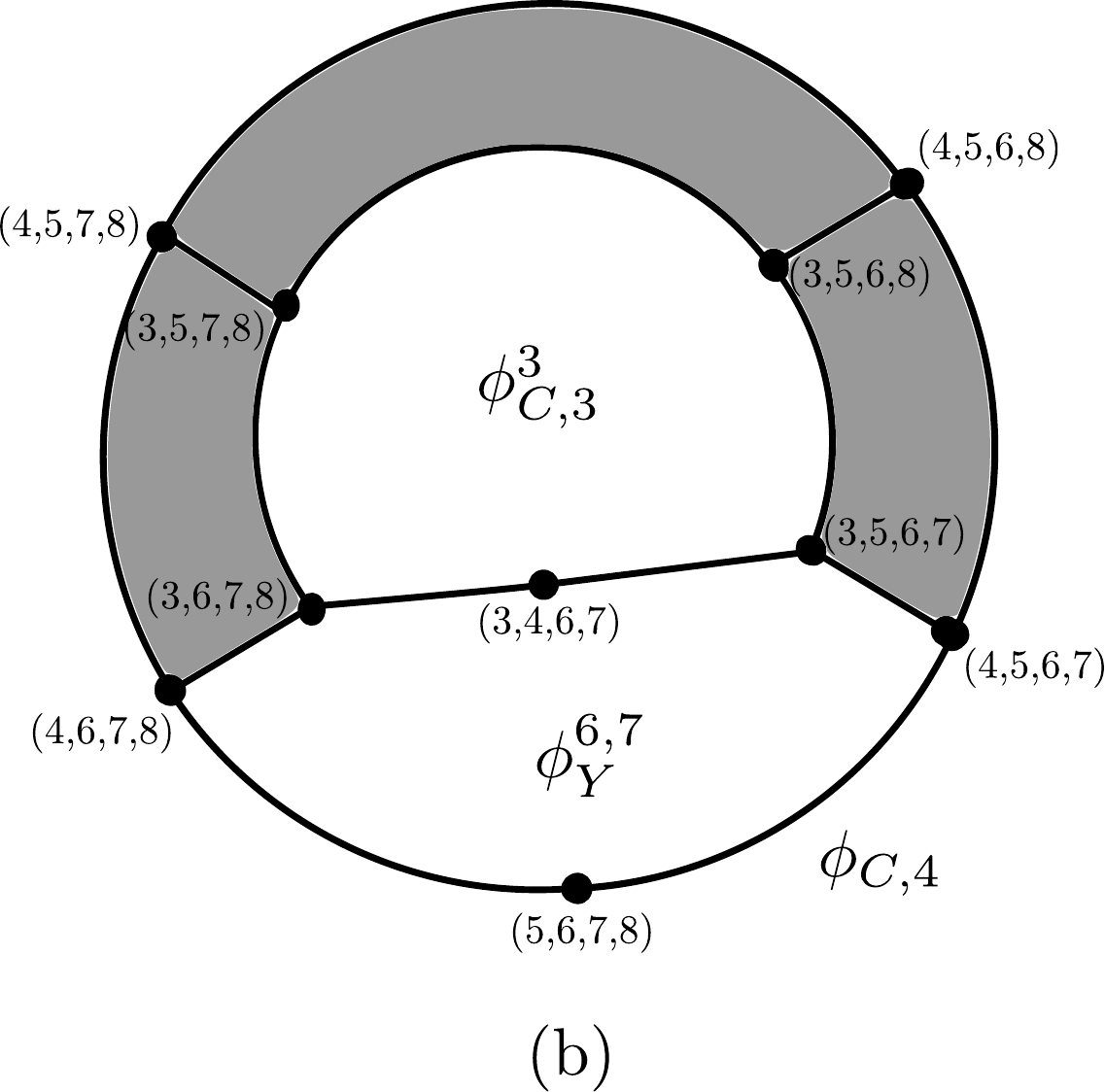}\end{center}

\caption{\label{fig:The-relevant-parts}The relevant parts of the configurations
spaces for the lasso graphs with (a) $3$ particles: $\phi_{C,3} = \phi^2_{C,2} + \phi^5_Y$, (b) $4$ particles: $\phi_{C,4}  = \phi^3_{C,3} + \phi^{6,7}_Y$. }
\end{figure}

\noindent By repeated application of Lemma \ref{aspect1-1} we see
that $\phi_{C,n}$ can be expressed as a sum of an AB-phase and the
Y-phases corresponding to different positions of $n-2$ particles.
For example in the case of the graphs from figure \ref{fig:-The-subdivided}(a)
and \ref{fig:-The-subdivided}(b) we get

\begin{gather*}
\phi_{C,3}=\phi_{Y}^{5}+\phi_{C,2}^{2}=\phi_{Y}^{5}+\phi_{Y}^{1}+\phi_{C,1}^{1,2}\,,\\
\phi_{C,4}=\phi_{Y}^{6,7}+\phi_{C,3}^{3}=\phi_{Y}^{6,7}+\phi_{C,3}^{1}=\phi_{Y}^{6,7}+\phi_{Y}^{1,6}+\phi_{Y}^{1,2}+\phi_{C,1}^{1,2,3}\,.
\end{gather*}

\paragraph{Aharonov-Bhom phases}

Assume now that we have $n$ particles on $\Gamma$. Let $C$ be a
cycle of $\Gamma$ and $e_{1}$ and $e_{2}$ two sufficiently subdivided
edges attached to $C$ (see figure \ref{fig:stabilization}(a)). We
denote by $\phi_{C,1}^{k_{1},k_{2}}$ the AB-phase corresponding
to the situation where one particle goes around the cycle $C$ while $k_{1}$
particles are in the edge $e_{1}$ and $k_{2}$ particles are in the
edge $e_{2}$, $k_{1}+k_{2}=n-1$. For each distribution $(k_{1},k_{2})$
of the $n-1$ particles between the edges $e_{1}$ and $e_{2}$ we
get a (possibly) different AB-cycle and AB-phase in $\mathcal{D}^{n}(\Gamma)$.
We want to know how they are related. To this end notice that

\begin{gather}
\phi_{C,2}^{k_{1},k_{2}}=\phi_{C,1}^{k_{1}+1,k_{2}}+\phi_{Y_{1}}^{k_{1},k_{2}},\,\,\,\,\phi_{C,2}^{k_{1},k_{2}}=\phi_{C,1}^{k_{1},k_{2}+1}+\phi_{Y_{2}}^{k_{1},k_{2}},\label{eq:AB-1-1}
\end{gather}
and hence
\begin{gather}
\phi_{C,1}^{k_{1}+1,k_{2}}-\phi_{C,1}^{k_{1},k_{2}+1}=\phi_{Y_{2}}^{k_{1},k_{2}}-\phi_{Y_{1}}^{k_{1},k_{2}}.\label{eq:AB-2-1}
\end{gather}
The relations between different AB-phases for a fixed cycle $C$
of $\Gamma$ are therefore encoded in the $2$-particle phases $\phi_{Y}$,
albeit these phases can depend on the positions of the remaining $n-2$
particles.

\begin{figure}[h]
\begin{center}~~~~~~\includegraphics[scale=0.5]{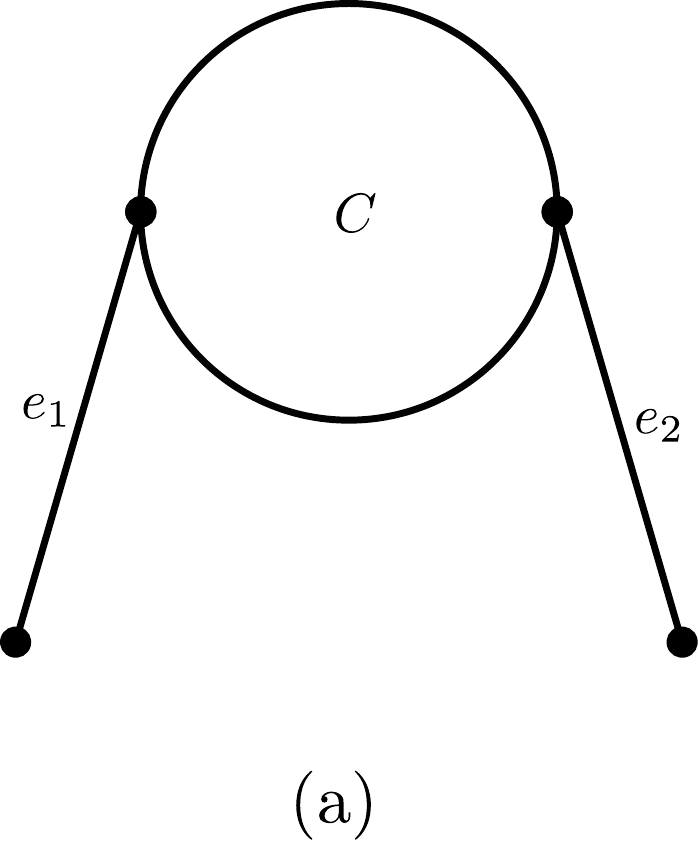}~~~~~~~~~~~~~~~~~~~~~~~~~~\includegraphics[scale=0.5]{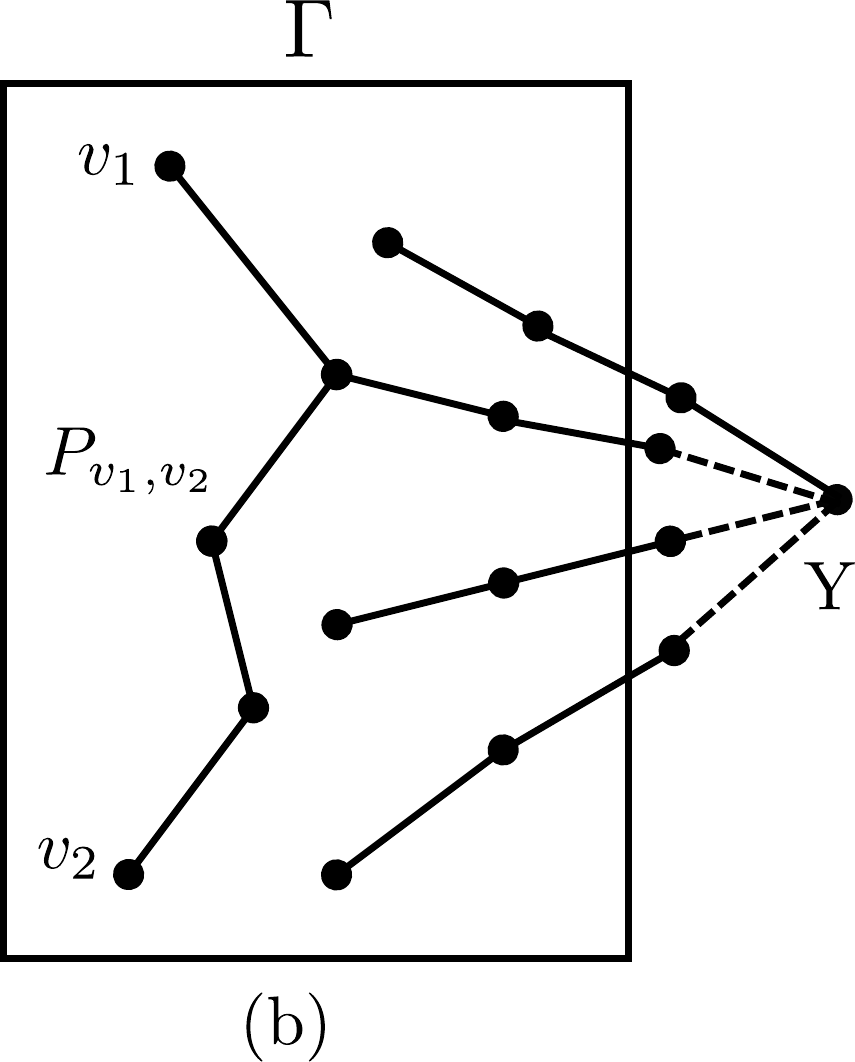}\end{center}

\caption{\label{fig:stabilization}(a) The relation between AB-phases, (b)
the stabilization of the first homology group.}
\end{figure}

\noindent A spanning set of
$H_1(\mathcal(D)^n(\Gamma))$ is by the following (see appendix for proof):
\begin{enumerate}
\item All $2$-particle cycles corresponding to the exchange of two particles
on the Y subgraph while $n-2$ particles are at vertices not belonging
to the considered Y-graph. In general the phases $\phi_{Y}$ depend
on the position of the remaining $n-2$ particles.
\item The set of $\beta_{1}(\Gamma)$ AB-cycles, where $\beta_{1}(\Gamma)$
is the number of the independent cycles of $\Gamma$. \end{enumerate}
\begin{theorem}\label{thm: 5}
\label{stabilization}For a $2$-connected graph $\Gamma$ the the
first homology group stabilizes with respect to the number of particles,
i.e. $H_{1}(\mathcal{D}^{n}(\Gamma))=H_{1}(\mathcal{D}^{2}(\Gamma))$.\end{theorem}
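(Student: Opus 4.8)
The plan is to prove stabilization by comparing the spanning set of $H_1(\mathcal{D}^n(\Gamma))$ just described — the Y-cycles carrying $n-2$ spectator particles parked off the Y-subgraph, together with the $\beta_1(\Gamma)$ AB-cycles — against the corresponding $2$-particle spanning set, and to show that for $2$-connected $\Gamma$ the two sets of generators and relations coincide. The whole argument reduces to a single claim: for $2$-connected $\Gamma$, the phase $\phi_Y^{v_1,\ldots,v_{n-2}}$ of a fixed Y-subgraph does not depend on the positions $v_1,\ldots,v_{n-2}$ of the spectators. Granting this, each Y-subgraph contributes exactly one phase $\phi_Y$, precisely as in the $2$-particle analysis leading to (\ref{eq:2-particle-final}); and relation (\ref{eq:AB-2-1}) shows that the AB-phases for different distributions of the $n-1$ spectators differ only by (now spectator-independent) Y-phases, so there remain exactly $\beta_1(\Gamma)$ independent free AB-phases, as in the $2$-particle case. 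Since the generators and all the relations among them — Fact \ref{fact1}, relation (\ref{eq:AB-2}), and their $n$-particle analogues (\ref{eq:AB-1-1})--(\ref{eq:AB-2-1}) — are then in one-to-one correspondence with those used to compute $H_1(\mathcal{D}^2(\Gamma))$, we conclude $H_1(\mathcal{D}^n(\Gamma))=H_1(\mathcal{D}^2(\Gamma))$.

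To establish the claim I would move spectators one at a time. Suppose a spectator sits at a vertex $p$ and $p\leftrightarrow p'$ is an edge with $p'$ neither a vertex of the chosen Y-subgraph nor occupied by another particle. Because the Y-exchange is supported on the three vertices of the Y (and their subdivisions) while the slide $p\to p'$ is supported on $\{p,p'\}$, the two motions are carried on disjoint vertex sets and hence commute: the Y-exchange cycle and the edge $p\leftrightarrow p'$ span a product $2$-cell in $\mathcal{D}^n(\Gamma)$. Consequently the Y-cycle with the spectator at $p$ is homologous to the one with the spectator at $p'$, so $\phi_Y$ is unchanged. Iterating, $\phi_Y$ is invariant under sliding any spectator along any path that avoids the Y-subgraph. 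The task thus reduces to connecting an arbitrary admissible spectator configuration to a fixed reference configuration by such off-Y slides, and this is where connectivity enters: since the graph is sufficiently subdivided there is always a free vertex to receive a moving spectator, and by Menger's theorem (Theorem \ref{Menger}) $2$-connectivity provides the routing flexibility needed to steer each spectator around the Y to its reference position.

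The hard part is exactly this routing when no path disjoint from the Y-subgraph exists — removing the three vertices of the Y (rather than a single vertex) can disconnect $\Gamma$, so the naive product-cell argument can fail. The resolution is to exploit $2$-connectivity more carefully together with the transfer relations: rather than demanding that a spectator's entire path avoid the Y, one allows it to traverse the Y-region in coordination with the two active particles, using Lemma \ref{aspect1-1} and relation (\ref{eq:AB-2-1}) to bookkeep the phase acquired and to show that the net contribution is again expressible through spectator-independent Y-phases. Showing that this bookkeeping closes up — that a spectator forced to cross the Y introduces no genuinely new phase — is the crux of the proof, and it is precisely here that $2$-connectivity (Menger's two internally disjoint paths) is indispensable; for merely $1$-connected graphs the argument breaks down, consistent with the $n$-dependence of the statistics found in Section \ref{sub:One-connected-graphs}.
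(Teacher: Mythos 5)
Your reduction is the right one and coincides with the paper's: it suffices to show that the phase of a Y-cycle does not depend on the positions of the $n-2$ spectators, and your mechanism for moving a spectator along an edge disjoint from the exchange region --- the Y-exchange and the slide span a product $2$-cell, so the two Y-cycles are homologous --- is exactly the mechanism the paper invokes. The problem is what you do next: you declare the ``crux'' to be the case in which every path between two spectator positions meets the Y-subgraph, and for that case you only gesture at a bookkeeping argument via Lemma \ref{aspect1-1} and relation (\ref{eq:AB-2-1}) without carrying it out. As written, the step you yourself identify as the hardest is left unproven, so the proposal does not establish the theorem.

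Moreover, the case you flag as hard does not actually arise; recognizing this is the paper's (essentially one-line) observation. A spectator path need only avoid the \emph{central} vertex $c$ of the Y, not all of its vertices: in the sufficiently subdivided graph the three exchange edges are incident to $c$ and their other endpoints are degree-$2$ subdivision vertices, so a simple path in $\Gamma\setminus\{c\}$ whose endpoints lie off the Y cannot pass through any of those arm vertices (entering an arm, it could only leave through $c$), and a spectator initially sitting on an arm can first be slid outward along that arm. Since $\Gamma$ is $2$-connected, $\Gamma\setminus\{c\}$ is connected, so any two admissible spectator configurations are joined by off-Y slides and the product-cell argument you already have closes the proof. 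Your appeal to Menger's theorem and to a coordinated traversal of the Y-region is therefore unnecessary, and in its current unexecuted form it is where the genuine gap lies.
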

\begin{proof}
Using our spanning set it enough to show that phases on the
Y-cycles do not depend on the position of the remaining $n-2$ particles.
Notice that if any pair of the vertices not belonging to the chosen
Y-graph is connected by a path then clearly the corresponding Y-phases have this property. Since the graph $\Gamma$ is $2$-connected
it remains at least $1$-connected after removal of a vertex. Removing the central vertex of the Y (see figure
\ref{fig:stabilization}(b)), the theorem follows.\qed
\end{proof}

\section{N-particle statistics on $1$-connected graphs\label{sec:N-particle-statistics-on}}

By Theorem~\ref{thm: 5}, in order to fully characterize the first homology group of $\mathcal{D}^{n}(\Gamma)$ for an arbitrary graph $\Gamma$ we are left to calculate $H_1(\mathcal{D}^{n}(\Gamma))$ for graphs which are 1-connected but not 2-connected. This is achieved by considering $n$-particle star and fan graphs.

\subsection{Star graphs\label{sub:The-star-graphs}}

In the following we consider a particular family of $1$-connected graphs,
namely the star graphs $S_{E}$ with $E$ edges (see figure \ref{fig:The-star and fan}(a)).
Our aim is to provide a formula for the dimension of the first homology
group, $\beta_{n}^{E}$, of the $n$-particle configuration space $\mathcal{D}^{n}(S_{E})$. Let us recall that a graph $\Gamma$ is $1$-connected iff after deletion of one vertex it splits into at least two connected components.

\paragraph{Non-subdivided star graph}
It turns out that the computation of $\beta_n^E$ can be reduced to the case of $n$ particles on a non-subdivided star graph, so we consider this first.
Let $\bar{S}_{E}$ denote the star graph with $E+1$ vertices and $E$ edges each connecting the central vertex to a single vertex of valency $1$; such a star graph is not sufficiently subdivided for $n>2$ particles. As there are no pairs of disjoint edges (every edge contains the central vertex), there are no contractible cycles.
Therefore, the $n$-particle configuration
space, $\mathcal{D}^{n}(\bar{S}_E)$ is a graph, i.e. a one-dimensional cell complex. The number of independent cycles
in $\mathcal{D}^{n}(\bar{S}_E)$, denoted here and in what follows by $\gamma_n^E$,
is given by the first Betti number, $E_n - V_n + 1$, where $E_n$ and $V_n$ are the number of edges and vertices in $\mathcal{D}^{n}(S_E)$. It is
easy to see that  $V_{n}={E+1 \choose n}$
and $E_{n}=E\cdot{E-1 \choose n-1}$. Hence
\begin{gather}
\gamma_{n}^{E}=E {E-1 \choose n-1}-{E+1 \choose n}+1.\label{eq:2-particleY}
\end{gather}

\paragraph{Y-graph }

The simplest case of a sufficiently subdivided graph is a Y-graph where each arm has $n-1$ segments.  As there are no cycles on the Y-graph itself, cycles in the $n$-particle configuration space are generated by two-particle exchanges on the non-subdivided subgraph $\bar{Y}$ comprised of the three segments adjacent to the central vertex. A basis of independent cycles is obtained by taking all possible configurations of the $n-2$ particles amongst the three arms of the Y-graph.  As configurations which differ by shifting particles within the arms of the Y produce homotopic cycles, the number of distinct configurations is the number of partitions of $n-2$ indistinguishable particles amongst three distinguishable boxes, or ${(n-2) +(3-1) \choose n-2} = {n \choose n-2}$.  Therefore,

\begin{gather}
\beta_{n}^{3}= {n \choose n-2}\gamma_2^3=\frac{n(n-1)}{2}.\label{eq:Y}
\end{gather}

\paragraph{Star graph with five arms}

For star graphs with more than three arms, it is necessary to take account of relations between cycles involving two or more moving particles.  With this in mind, we introduce the following terminology: an $(n,m)$-cycle is a cycle of $n$ particles on which $m$ particles move and $(n-m)$ particles remain fixed.

The general case is well illustrated by considering the star graph with $E=5$ arms.  As above, we suppose that each arm of  $S_5$ has $(n-1)$ segments, and is therefore sufficiently subdivided to accommodate $n$ particles.  Let $\bar{S}_5$ denote the non-subdivided subgraph consisting of the five segments adjacent to the central vertex. As there are no cycles on $S_5$, a spanning set for the first homology group of the $n$-particle configuration space is provided by two-particle cycles on the Y's contained in $\bar{S}_5$.  The number of independent two-particle cycles on $\bar{S}_5$ is given by $\gamma_5^2$.  For each of these, we can distribute the remaining $(n-2)$ particles among the five edges of $S_5$  (cycles which differ by shifting particles within an edge  are homotopic).  Therefore, we obtain a spanning set consisting  of ${\beta''}_n^5$ $(n,2)$-cycles, where
\[ {\beta''}_n^5 :=  \binom{n+2}{4} \gamma_2^5 .\]

The preceding discussion of non-subdivided star graphs reveals that there are relations among the cycles in the spanning set.  In particular, a subset of the $(n,2)$-cycles can be replaced by a smaller number of $(n,3)$-cycles.  To see this, consider first the case of three particles on the non-subdivided star graph $\bar{S}_5$.
By definition, the number of independent $(3,3)$-cycles is $\gamma_3^5$.  However, the number of  $(3,2)$-cycles on $\bar{S}_5$ is larger; it is given by $\binom{5}{1} \gamma^4_2$, where the first factor represents the number of positions of the fixed particle, and the second factor represents the number of independent $(2,2)$-cycles on the remaining four edges of $\bar{S}_5$.  It is easily checked that $\gamma_3^5 - \binom{5}{1} \gamma^4_2 = -3$, so that there are three relations amongst the $(3,2)$-cycles on $\bar{S}_5$. 

For each $(3,3)$-cycle on $\bar{S}_5$, there are $\binom{n+1}{4}$ $(n,3)$-cycles on $S_5$; the factor  $\binom{n+1}{4}$ is the number of ways to distribute the $n-3$ fixed  particles on the five edges of $S_5$ outside of $\bar{S}_5$. The corresponding calculation of the number of $(n,2)$-cycles on $S_5$ obtained from $(3,2)$-cycles on $\bar{S}_5$ requires more care. The preceding reasoning would suggest that the number of such $(n,2)$-cycles is given by $\binom{n+1}{4} \binom{5 }{1} \gamma_2^4$.  However,  this expression introduces some double counting.  In particular,  $(n,2)$-cycles for which two of the fixed particles lie  in $\bar{S}_5$ are  counted twice, as each of these two fixed particles is separately regarded as the fixed particle in a $(3,2)$-cycle on ${\bar  S}_5$.  The correct expression is obtained by subtracting the number of doubly counted cycles, i.e.  $\binom{n}{4} \binom{5}{2} \gamma_2^3$. Thus we may replace  this subset of $(n,2)$-cycles by the $(n,3)$-cycles to which they are related to obtain a smaller spanning set with ${\beta'}_n^5 $ elements, where
\[ {\beta'}_n^5 =  {\beta''}_n^5 +  \binom{n+1}{4} \gamma_3^5 - \left( \binom{n+1}{4} \binom{5} {1} \gamma_2^4 - \binom{n}{4} \binom{5} {2}\gamma_2^3\right).\]

Finally, we must account for relations among the $(n,3)$-cycles.  Consider first the case of just four particles on $\bar{S}_5$.  The number of independent $(4,4)$-cycles is $\gamma_4^5$.  The number of $(4,3)$-cycles 
is $\binom{5}{1} \gamma^4_3$, where the first factor represents the number of positions of the fixed particle, and the second factor represents the number of independent $(3,3)$-cycles on the remaining four edges of $\bar{S}_5$.  For each  $(4,4)$-cycle on $\bar{S}_5$, there are $\binom{n}{4}$ $(n,4)$ cycles  on $S_5$. Similarly, for each $(4,3)$-cycle on $\bar{S}_5$, there are $\binom{n}{4}$ $(n,3)$-cycles on $S_5$ (there is no over-counting, as there are no five-particle cycles on $\bar{S}_5$).  Replacing this subset of $(n,3)$-cycles by the $(n,4)$-cycles to which they are related, we get a smaller spanning set of $\beta_n^5$ elements, where
\[ \beta_n^5 =  {\beta'}_n^5 +  \binom{n}{4}\left( \gamma_4^5 - \binom{5}{1} \gamma_3^4\right) = 6\binom{n+2}{4} - 4\binom{n+1}{4} + \binom{n}{4}.\]
As there are no five-particle cycles on $\bar{S}_5$, there are no additional relations, and the resulting spanning set constitutes a basis.

\paragraph{$n$ particles on a star graph with $E$ arms}

The formula in the general case of $E$ edges is obtained following a similar argument.  We start with a spanning set of $\binom{n+E-3}{E-1} \gamma_2^E$ $(n,2)$-cycles on $S_E$.  We then replace a subset of $(n,2)$-cycles by a smaller number of $(n,3)$-cycles, then replace a subset of these $(n,3)$-cycles by a smaller number of $(n,4)$-cycles, and so on, proceeding to $(n,E-1)$-cycles, thereby obtaining a basis. The number of elements in the basis is given by
\begin{equation} \label{eq: first beta expression} \beta_n^E = \sum_{m=2}^{E-1}\left( \binom{n - m + E -1}{E-1} \gamma_m^E + \sum_{j = 1}^{E-m}  (-1)^j \binom{n - m-j + E}{E-1} \binom{E}{j} \gamma_{m-1}^{E-j}\right).\end{equation}
The outer $m$-sum is taken over $(n,m)$-cycles.  The $m$th term is the difference between the number of $(n,m)$-cycles and the number of $(n,m-1)$-cycles to which they are related.  The inclusion-exclusion sum over $j$ compensates for over-counting $(n,m-1)$-cycles with $j$ fixed particles in ${\bar S}_E$.

It  turns out to be convenient to rearrange the sums in \eqref{eq: first beta expression} to obtain the following equivalent expression:
\begin{equation} \label{eq: second beta expression}
\beta_{n}^{E}=\sum_{k=2}^{E-1}{n-k+E-1 \choose E-1} \alpha^E_{k} \end{equation}
where
\begin{equation}\label{eq:alpha_k}
\alpha_{k}^{E}=\sum_{i=0}^{k-2}(-1)^{i}{E \choose i}\cdot\gamma_{k-i}^{E-i}.
\end{equation}
This is because the coefficients $\alpha_k^E$ turn out to have a simple expression.
First, straightforward manipulation yields
\begin{gather}
\alpha_{k}^{E}=\gamma_{k}^{E}-\sum_{i=1}^{k-2}{E \choose i}\alpha_{k-i}^{E-i}.\label{eq:alpha_k1}
\end{gather}
We then have the following:
%
%
%
%
%
%
%
%

\begin{lemma}
\label{lemma4}The coefficients $\alpha_{k}^{E}=(-1)^{k}{E-1 \choose k}$.\end{lemma}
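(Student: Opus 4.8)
The plan is to prove Lemma \ref{lemma4} by strong induction on $k$, using the recursion \eqref{eq:alpha_k1} as the engine. First I would establish the base case $k=2$: from \eqref{eq:alpha_k} the sum over $i$ collapses (since $k-2=0$) to $\alpha_2^E = \gamma_2^E$, and by the closed form \eqref{eq:2-particleY} one computes $\gamma_2^E = E(E-1) - \binom{E+1}{2} + 1 = \binom{E-1}{2}$, which matches $(-1)^2\binom{E-1}{2}$. I would also record $\gamma_1^E = 0$ (a single particle on a star produces no cycles), so that the claimed formula $\alpha_1^E = (-1)^1\binom{E-1}{1}$ is consistent with the degenerate case if needed.

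Next, assuming the formula $\alpha_{m}^{D}=(-1)^{m}\binom{D-1}{m}$ holds for all $m<k$ and all relevant $D$, I would substitute into the recursion \eqref{eq:alpha_k1}. This turns the claim into the purely binomial identity
\begin{gather*}
\gamma_k^E - \sum_{i=1}^{k-2}\binom{E}{i}(-1)^{k-i}\binom{E-i-1}{k-i} = (-1)^k\binom{E-1}{k}.
\end{gather*}
Equivalently, rearranging signs, I must verify
\begin{gather*}
\gamma_k^E = (-1)^k\sum_{i=0}^{k-2}\binom{E}{i}(-1)^{i}\binom{E-i-1}{k-i}+(-1)^{k-1}\binom{E-1}{k}\cdot(\text{adjustment}),
\end{gather*}
so the real content is to show that $\gamma_k^E$, given in closed form by \eqref{eq:2-particleY} as $E\binom{E-1}{k-1}-\binom{E+1}{k}+1$, agrees with a particular alternating binomial sum. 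I would handle this by expressing $\gamma_k^E$ through the original definition \eqref{eq:alpha_k} of $\alpha_k^E$ rather than the recursion: that is, I would prove directly that
\begin{gather*}
\sum_{i=0}^{k-2}(-1)^i\binom{E}{i}\gamma_{k-i}^{E-i}=(-1)^k\binom{E-1}{k}
\end{gather*}
is an identity in $E$ and $k$, substituting the closed form for each $\gamma_{k-i}^{E-i}$.

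The main obstacle will be evaluating the resulting triple of alternating sums cleanly. After inserting $\gamma_{k-i}^{E-i}=(E-i)\binom{E-i-1}{k-i-1}-\binom{E-i+1}{k-i}+1$, the expression splits into three sums over $i$, and each must be evaluated in closed form. The constant term gives $\sum_{i=0}^{k-2}(-1)^i\binom{E}{i}$, a truncated alternating binomial sum equal to $(-1)^{k-2}\binom{E-1}{k-2}$; the other two sums are Vandermonde-type convolutions that I expect to collapse via the identities $\binom{E}{i}\binom{E-i-1}{k-i-1}=\binom{E}{k-1}\binom{k-1}{i}$ (up to index shifts) followed by $\sum_i(-1)^i\binom{k-1}{i}$-type cancellations. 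The bookkeeping of the upper summation limits $k-2$ (rather than full range) is the delicate point, since the finite-range alternating sums do not vanish but telescope to single binomial coefficients. I would organize the computation so that the three contributions combine to $(-1)^k\binom{E-1}{k}$, and I expect the single nonzero surviving term to come precisely from the $(E-i)\binom{E-i-1}{k-i-1}$ piece evaluated at the boundary of the summation range.

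An alternative, possibly cleaner route would be generating functions: since $\gamma_k^E$ has a transparent meaning as a Betti number and \eqref{eq:alpha_k} is an inclusion–exclusion alternating sum, one can encode the $\gamma$'s via a product of simple generating functions and read off $\alpha_k^E$ as a coefficient, where the alternating $\binom{E}{i}$ factor corresponds to multiplication by $(1-x)^E$ or a similar operator. If the direct binomial manipulation proves unwieldy, I would switch to this generating-function viewpoint to make the cancellations automatic.
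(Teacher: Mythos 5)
Your proposal is correct in strategy, and I verified that the computation it outlines does close: inserting the closed form \eqref{eq:2-particleY} into the defining sum \eqref{eq:alpha_k} and splitting into three truncated alternating sums yields $(-1)^k k\binom{E}{k}$, $-(-1)^k\bigl[(k-1)\binom{E}{k}+\binom{E}{k-1}\bigr]$ and $(-1)^k\binom{E-1}{k-2}$, which combine via Pascal's rule to $(-1)^k\binom{E-1}{k}$. This is a genuinely different route from the paper's. The paper stays with the induction throughout: it substitutes the inductive hypothesis into the recursion \eqref{eq:alpha_k1}, applies upper negation $\binom{r}{k}=(-1)^k\binom{k-r-1}{k}$ to convert the truncated sum into a piece of the Vandermonde convolution $\sum_{i=0}^{k}\binom{E}{i}\binom{k-E}{k-i}=1$, and only has to peel off the three boundary terms $i=0,k-1,k$. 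Your plan instead proves the identity $\sum_{i=0}^{k-2}(-1)^i\binom{E}{i}\gamma_{k-i}^{E-i}=(-1)^k\binom{E-1}{k}$ directly from the definition, which makes the induction scaffolding you set up at the start redundant (a feature, not a bug): you get a non-inductive proof, at the price of evaluating three truncated sums rather than one. Two points need care in execution. First, your stated identity $\binom{E}{i}\binom{E-i-1}{k-i-1}=\binom{E}{k-1}\binom{k-1}{i}$ is not correct as written; the usable form is $(E-i)\binom{E-i-1}{k-i-1}=(k-i)\binom{E-i}{k-i}$ followed by the trinomial revision $\binom{E}{i}\binom{E-i}{k-i}=\binom{E}{k}\binom{k}{i}$, and similarly the middle term should be split as $\binom{E-i+1}{k-i}=\binom{E-i}{k-i}+\binom{E-i}{k-i-1}$ before revising. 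Second, your prediction that the surviving $\binom{E-1}{k}$ comes from the boundary of the first sum is not how the cancellation actually happens -- it emerges from combining $\binom{E}{k}-\binom{E}{k-1}+\binom{E-1}{k-2}$ at the end -- but this does not affect the validity of the method. One small improvement over the paper: you correctly note that the inductive hypothesis must hold for all superscripts $D$, since the recursion involves $\alpha_{k-i}^{E-i}$; the paper states the hypothesis only for fixed $E$.
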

\begin{proof}
We proceed by induction. Direct calculations give $\alpha_{2}={E-1 \choose 2}$.
Assume that $\alpha_{i}^{E}=(-1)^{i}{E-1 \choose i}$ for $i\in\{2,\ldots,k-1\}$
and $k\leq E$. Using this assumption and (\ref{eq:alpha_k1})

\begin{gather*}
\alpha_{k}=\gamma_{k}^{E}-(-1)^{k}\sum_{i=1}^{k-2}(-1)^{i}{E \choose i}{E-i-1 \choose k-i}.
\end{gather*}
Making use of the identity ${r \choose k}=(-1)^{k}{k-r-1 \choose k}$
and Vandermonde's convolution $\sum_{i=0}^{k}{E \choose i}{k-E \choose k-i}=1$, we get

\begin{gather*}
(-1)^{k}\sum_{i=1}^{k-2}(-1)^{i}{E \choose i}{E-i-1 \choose k-i}=\sum_{i=1}^{k-2}{E \choose i}{k-E \choose k-i}\\
=1-(-1)^{k}{E-1 \choose k}+(E-k){E \choose k-1}-{E \choose k}\,.
\end{gather*}
Using (\ref{eq:2-particleY}) for $\gamma_k^E$, we get
\begin{gather*}
\alpha_{k}=(-1)^{k}{E-1 \choose k}+E{E-1 \choose k-1}-{E+1 \choose k}-(E-k){E \choose k-1}+{E \choose k}\,.
\end{gather*}
Expanding ${E+1 \choose k}={E \choose k}+{E \choose k-1}$ and straightforward manipulations show
\begin{gather*}
\alpha_{k}=(-1)^{k}{E-1 \choose k},
\end{gather*}
which completes the argument.\qed
\end{proof}
\noindent By Lemma \ref{lemma4}
\begin{gather*}
\beta_{n}^{E}=\sum_{k=2}^{E-1}{n-k+E-1 \choose E-1}\cdot\alpha_{k}=\sum_{k=2}^{E-1}\left(-1\right)^{k}{E-1 \choose k}{n-k+E-1 \choose E-1}\\
=\sum_{k=2}^{E-1}\left(-1\right)^{k}{E-1 \choose k}{n-k+E-1 \choose n-k}=(-1)^{n}\sum_{k=2}^{E-1}{E-1 \choose k}{-E \choose n-k}\,.
\end{gather*}
By Vandermonde's convolution

\begin{gather*}
\sum_{k=0}^{E-1}{E-1 \choose k}{-E \choose n-k}=\sum_{k=0}^{n}{E-1 \choose k}{-E \choose n-k}={-1 \choose n}=(-1)^{n}.
\end{gather*}
Therefore

\begin{gather*}
\beta_{n}^{E}=1-{n+E-1 \choose E-1}+{n+E-2 \choose E-1}\left(E-1\right).
\end{gather*}
Notice that ${n+E-1 \choose E-1}={n+E-2 \choose E-1}+{n+E-2 \choose E-2}$
and thus

\begin{gather}
\beta_{n}^{E}={n+E-2 \choose E-1}\left(E-2\right)-{n+E-2 \choose E-2}+1.\label{eq:star-n}
\end{gather}

\begin{figure}[h]
\begin{center}~~~~~~~~~\includegraphics[scale=0.6]{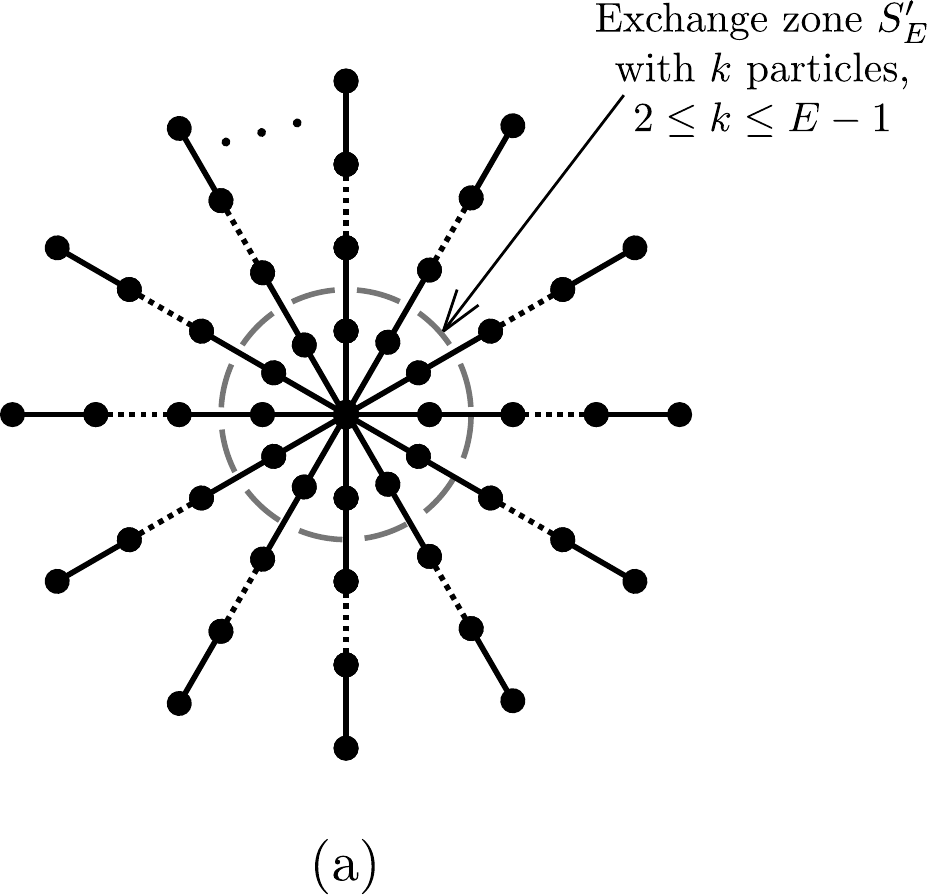}~~~~~~~~~~~\includegraphics[scale=0.6]{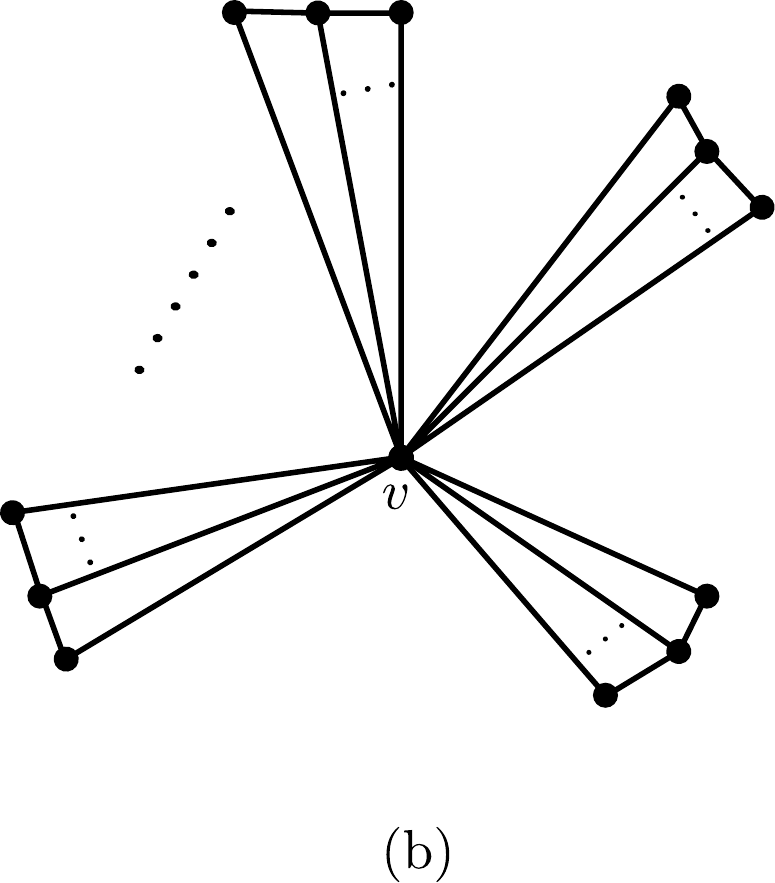}\end{center}

\caption{\label{fig:The-star and fan}(a) The star graph with $E$ arms and
$n$ particles. Each arm has $n$ vertices. The exchange zone $S_{E}^{\prime}$
can accommodate $2$, $3$,...,$E-1$ particles. (b) The fan graph
$F$.}
\end{figure}

\noindent Note finally that in contrast with $2$-connected graphs,
formula (\ref{eq:star-n}) indicates a strong dependence of the
quantum statistics on the number of particles, $n$.

\subsection{The fan graphs }

Following the argument presented in section \ref{sub:One-connected-graphs}
in order to treat a one-vertex cut $v$ we need to count the number
of the independent Y-phases which are lost due to the removal of $v$. As in Section \ref{sub:One-connected-graphs}, let $\mu = \mu(v)$ denote the number of connected components following the deletion of $v$, and denote these components by $\Gamma_1, \ldots, \Gamma_\mu$.  For Y-cycles with edges in three distinct components, the number  of independent phases, $\beta_n^\mu$, is given by the expression (\ref{eq:star-n}) for star graphs,
\begin{gather}
\beta_{n}^{\mu}={n+\mu-2 \choose \mu-1}\left(\mu-2\right)-{n+\mu-2 \choose \mu-2}+1.\label{eq:fan-star}
\end{gather}
We must also determine the number of independent Y-cycles with two edges in the same component $\Gamma_{i}$, denoted  $\gamma_n(v)$ .

Let us first consider a simple example, namely
 the graphs shown in figures \ref{fig:fan-ex}(a)
and \ref{fig:fan-ex}(b). Assume there are three particles. We 
calculate $\gamma_3(v)$ as follows. 
The $Y$ subgraphs we are interested in are denoted
by dashed lines and are $Y_{1}$ and $Y_{2}$ respectively. Note that
each of them contributes three phases corresponding to different
positions of the third particle $\{\phi_{Y_{1}}^{A},\phi_{Y_{1}}^{B},\phi_{Y_{1}}^{C},\phi_{Y_{2}}^{A},\phi_{Y_{2}}^{B},\phi_{Y_{2}}^{C}\}$.
They are, however, not independent. To see this, note that using Lemma
\ref{aspect1-1} we can write
\begin{gather*}
\phi_{c,3}=\phi_{Y_{1}}^{A}+\phi_{Y_{1}}^{B}+\phi_{c,1}^{B,B^{\prime}},\,\,\,\phi_{c,3}=\phi_{Y_{2}}^{A}+\phi_{Y_{2}}^{C}+\phi_{c,1}^{C,C^{\prime}}\,,\\
\phi_{c,2}^{B}=\phi_{Y_{1}}^{B}+\phi_{c,1}^{B,B^{\prime}},\,\,\,\phi_{c,2}^{B}=\phi_{Y_{2}}^{B}+\phi_{c,1}^{B,C}\,,\\
\phi_{c,2}^{C}=\phi_{Y_{1}}^{C}+\phi_{c,1}^{B,C},\,\,\,\phi_{c,2}^{C}=\phi_{Y_{2}}^{C}+\phi_{c,1}^{C,C^{\prime}}\,.
\end{gather*}
The phase $\phi_{c,3}$ is not lost when $v$ is cut. 
On the other hand, the five phases
\begin{gather}
\{\phi_{c,1}^{C,C^{\prime}},\,\phi_{c,1}^{B,B^{\prime}},\,\phi_{c,1}^{B,C},\,\phi_{c,2}^{B},\,\phi_{c,2}^{C}\},
\end{gather}
are lost. The knowledge of them and $\phi_{c}^{3}$ determines all
six $\phi_{Y}$ phases. Therefore, $\gamma_3(v)$ is the number of $1$ and $2$-particle exchanges on cycle
$c$ (which is $5$) rather than the number of $Y$ phases (which
is $6$).

\begin{figure}[h]
\begin{center}~~~~~~~~~~~~\includegraphics[scale=0.6]{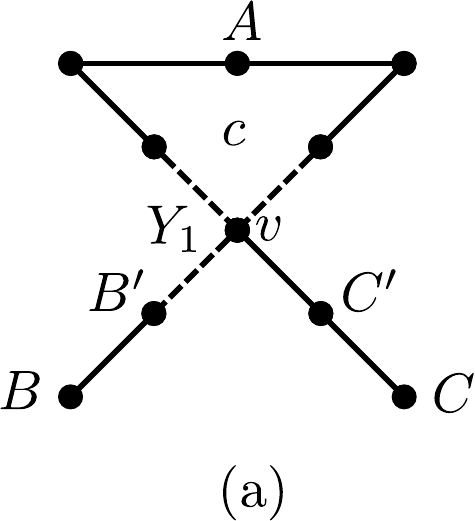}~~~~~~~~~~~~~~~~~~\includegraphics[scale=0.6]{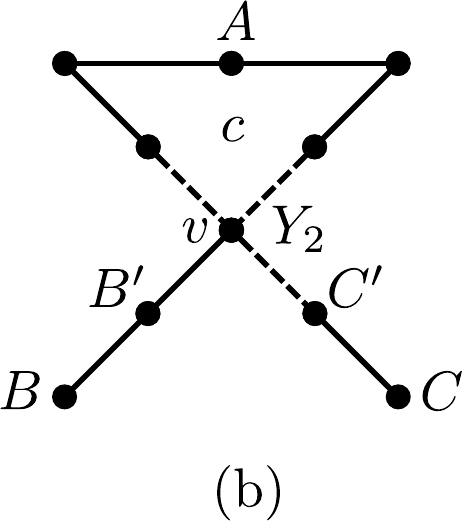}\end{center}

\caption{\label{fig:fan-ex}The $Y$ subgraphs (a) $Y_{1}$ and (b)$Y_{2}$.}
\end{figure}

For the general case, let $\nu_i$ denote the number of edges at $v$ which belong to $\Gamma_i$.  Since the $\Gamma_i$ are connected, there exist $\nu_i - 1$ independent cycles in $\Gamma_i$ which connect these edges.  Denote these by $C_{i,1},\ldots, C_{1,\nu_1-1}$.  Fan graphs (see Fig~\ref{fig:The-star and fan} (b)) provide the simplest realization.  Using arguments similar to those in the above example, one can show that Y-cycles with two edges in the same component  can be expressed in terms of two sets of cycles. The first set contains cycles which are wholly contained in just one of the connected components. These cycles are not lost when $v$ is cut, and therefore do not contribute to $\gamma_n(v)$.  The second type of cycle is characterised as follows:  Consider a partition $\{n_i\}_{i=1}^{\mu}$ of the particles amongst the components $\Gamma_i$.  For each partition, we can construct cycles where all of the particles in $\Gamma_i$ -- assuming $\Gamma_i$ contains at least one particle, i.e.~that $n_i > 0$ -- are taken to move once around  $C_{i,j}$ while the other particles remain fixed.  Excluding the cases in which all of the particles belong to a single component,
 the number of such cycles is given by the following sum over partitions $n_1 +\cdots + n_\mu = n$:
\[ \gamma_n(v) =  \sum_{\scriptstyle
n_1,\ldots,n_\mu = 0 
\\ \atop \scriptstyle
n_1 + \cdots + n_\mu = n}^n \
\sum_{\scriptstyle
i = 1   
\\ \atop \scriptstyle
0 < n_i < n}^\mu (\nu_i - 1).\]
Noting that
\[ \sum_{\scriptstyle
i = 1   
\\ \atop \scriptstyle
0 < n_i < n}^\mu  =
\sum_{i = 1}^\mu \ - \
\sum_{\scriptstyle
i = 1   
\\ \atop \scriptstyle  n_i = 0}^\mu
\ - \
\sum_{\scriptstyle
i = 1  
\\ \atop \scriptstyle n_i = n}^\mu\]
and $\sum_{i = 1}^\mu (\nu_i - 1) = \nu - \mu$,
we readily obtain
\[  \gamma_n(v) = \left( \binom{n+\mu -1}{n} - \binom{n+\mu-2}{n} - 1\right)(\nu-\mu) = \left( \binom{n+\mu -2}{n-1} - 1\right)(\nu-\mu).\]
Hence the  number of the phases lost when $v$ is cut is given by
\begin{gather}
N_{1}(v,n)= \beta^\mu_n + \gamma_n(v) = {n+\mu-2 \choose \mu-1}\left(\nu-2\right)-{n+\mu-2 \choose \mu-2}-\left(\nu-\mu-1\right).\label{eq:one-connected-n}
\end{gather}

\paragraph{The final formula for $H_1(\mathcal{D}^{n}(\Gamma))$}

By the repeated application of the one-vertex cuts the resulting components
of $\Gamma$ become finally $2$-connected graphs. Let $v_{1},\ldots,v_{l}$
be the set of cut vertices such that components $\Gamma_{v_{i},k}$
are $2$-connected. Making use of formula (\ref{eq:2-connected})
we write

\begin{gather}
H_{1}(\mathcal{D}^{n}(\Gamma))=\mathbb{Z}^{\beta(\Gamma)+N_{1}+N_{2}+N_{3}}\oplus\mathbb{Z}_{2}^{N_{3}^{\prime}},\label{eq:2-particle-final-1}
\end{gather}
 where $N_{1}=\sum_{i}N_{1}(v_{i},n)$, the coefficients $N_{1}(v_i,n)$
are given by (\ref{eq:one-connected-n}) and $N_{2}$, $N_{3}$, $N_{3}^{\prime}$
are defined as in section \ref{sec:Two-particle-quantum-statistics}.

\section{Gauge potentials for $2$-connected graphs}

In this section we give a prescription for the $n$-particle topological gauge potential on $\mathcal{D}^n(\Gamma)$ in terms of the $2$-particle topological gauge potential. For $2$-connected graphs all choices of $n$-particle topological gauge potentials on $\mathcal{D}^n(\Gamma)$  are realized by this prescription. The discussion is divided into three parts: i) separation of a $2$-particle topological gauge potential into AB and quantum statistics components, ii) topological gauge potentials for 2-particles on a subdivided graph,  iii) $n$-particle topological gauge potentials.

We start with some relevant background. Assume as previously that $\Gamma$ is sufficiently subdivided. Recall that directed edges or $1$-cells of $\mathcal{D}^n(\Gamma)$ are of the form $v_1\times\ldots\times v_{n-1}\times e$ up to permutations, where $v_j$ are vertices of $\Gamma$ and $e=j\rightarrow k$ is an edge of $\Gamma$ whose endpoints are not $\{v_1,\ldots, v_{n-1}\}$. For simplicity we will use the following notation
\[
\{v_1,\ldots,v_{n-1},j\rightarrow k\}:=v_1\times\ldots\times v_{n-1}\times e.
\]
An $n$-particle gauge potential is a function $\Omega^{(n)}$ defined on the directed edges of $\mathcal{D}^n(\Gamma)$ with the values in $\mathbb{R}_{+}$ modulo $2\pi$ such that
\begin{equation}\label{asym}
\Omega^{(n)}(\{v_1,\ldots,v_{n-1},k\rightarrow j\})=-\Omega^{(n)}(\{v_1,\ldots,v_{n-1},j\rightarrow k\}).
\end{equation}
In order to define $\Omega$ on linear combinations of directed edges we extend (\ref{asym}) by linearity.

For a given gauge potential, $\Omega^{(n)}$ the sum of its values calculated on the directed edges of an oriented cycle $C$ will be called the flux of $\Omega$ through $C$ and denoted $\Omega(C)$. Two gauge potentials $\Omega_1^{(n)}$ and $\Omega_2^{(n)}$ are called equivalent if for any oriented cycle $C$ the fluxes $\Omega_1^{(n)}(C)$ and $\Omega_2^{(n)}(C)$ are equal modulo $2\pi$.

The $n$-particle gauge potential $\Omega^{(n)}$ is called a {\it topological gauge potential} if for any contractible oriented cycle $C$ in $\mathcal{D}^n(\Gamma)$ the flux $\Omega^{(n)}(C)=0\,\mathrm{mod}\,2\pi$. It is thus clear that equivalence classes of topological gauge potentials are in 1-1 correspondence with the equivalence classes in $H_1(\mathcal{D}^n(\Gamma))$.

\paragraph{Pure Aharonov-Bhom  and pure quantum statistics topological gauge potentials}
Let $\Gamma$ be a graph with $V$ vertices. We say that a 2-particle gauge potential $\Omega^{(2)}_{AB}$ is a {\it pure Aharonov-Bohm  gauge potential} if and only if
\begin{equation}
\label{ eq: AB gauge potential}
\Omega^{(2)}_{AB}(\{i, j\rightarrow k\}) = \omega^{(1)}(j \rightarrow k), \text { for all distinct vertices $i,j,k$ of $\Gamma$}.
\end{equation}
Here $\omega^{(1)}$ can be regarded as a gauge potential on $\Gamma$. Thus, for a pure AB gauge potential, the phase associated with one particle moving from $j$ to $k$ does not depend on where the other particle is. We say that a 2-particle gauge potential $\Omega^{(2)}_{S}$ is a {\it pure statistics gauge potential} if and only if
\begin{equation}
\label{ eq: AB gauge potential}
\sum_{\scriptstyle i\atop \scriptstyle i \neq j,k} \Omega^{(2)}_{S}(\{i, j\rightarrow k\}) = 0, \text { for all distinct vertices $j,k$ of $G$}.
\end{equation}
That is, the phase associated with one particle moving from $j$ to $k$ averaged over all possible positions of the other particle is zero.  It is clear that an arbitrary gauge potential $\Omega^{(2)}$ has a unique decomposition into a pure AB and pure statistics gauge potentials, i.e.
\begin{equation}
\label{eq: decomposition }
\Omega^{(2)} = \Omega^{(2)}_{AB} + \Omega^{(2)}_{S},
\end{equation}
where
\begin{equation}
\label{eq: decomposition 2}
 \Omega^{(2)}_{AB}(\{i, j\rightarrow k\}) = \frac{1}{V-2} \sum_{\scriptstyle p\atop \scriptstyle p \neq j,k} \Omega^{(2)}(\{p,j\rightarrow k\}), \quad \Omega^{(2)}_S =\Omega^{(2)} - \Omega^{(2)}_{AB}.
\end{equation}
It is straightforward to verify that if $\Omega^{(2)}$ is a topological gauge potential, then so are $\Omega^{(2)}_{AB}$ and $\Omega^{(2)}_S$, and vice versa. Moreover, one can easily check that $\Omega_{AB}^{(2)}$ vanishes on any Y-cycle of $\mathcal{D}^2(\Gamma)$. Note, however, that for a given cycle $C$ of $\Gamma$ the AB-phase, $\phi^{v}_{C,1}$ considered in the previous sections is not $\Omega^{(2)}_{AB}(v\times C)$ but rather $\Omega^{(2)}(v\times C)$ as AB-phases can depend on the position of the stationary particle.

\paragraph{Gauge potential for a subdivided 2-particle graph}\label{sec: gauge potential}
Let $\Gbar$ be a graph with vertices $\Vcalbar = \{1,\ldots, \Vbar\}$. Let $\Omegabar^{(2)}$ be a gauge potential on $\mathcal{D}^2(\Gbar)$.

We assume that $\Omegabar^{(2)}$ is topological, that is, for every pair of disjoint edges of $\Gbar$, $i\leftrightarrow k$ and $j\leftrightarrow l$ we have
\begin{equation}
\label{eq: relation 0 }
\Omegabar^{(2)}(i,j\rightarrow l) + \Omegabar^{(2)}(l,i\rightarrow k) +\Omegabar^{(2)}(k,l\rightarrow j) + \Omegabar^{(2)}(j,k\rightarrow i) = 0.
\end{equation}
Assume we add a vertex to $\Gbar$ by subdividing an edge.  Let $p$ and $q$ denote the vertices of this edge, and denote the new graph by $\Gamma$ and  the added vertex by $a$. Since subdividing an edge does not change the topology of a graph, it is clear that we can find a gauge potential, $\Omega^{(2)}$, on $\mathcal{D}^2(\Gamma)$ that is, in some sense, equivalent to $\Omegabar^{(2)}$.


For the sake of  completeness, we first give a precise definition of what it means for gauge potentials on $\mathcal{D}^2(\Gamma)$ and $\mathcal{D}^2(\Gbar)$ to be equivalent. Given a path $\Cbar$ on $\mathcal{D}^2(\Gbar)$, we can construct a path $P$ on $\mathcal{D}^2(\Gamma)$ by making the replacements
\begin{align}\label{eq: subs C_0 to C}
\{i,p\rightarrow q\} &\mapsto \{i,p\rightarrow a\rightarrow q\},\nonumber\\
\{i,q\rightarrow p\} &\mapsto \{i,q\rightarrow a\rightarrow p\}.
\end{align}
Similarly, given a path $P$ on $\mathcal{D}^2(\Gamma)$  we can construct a path $\Cbar$ on $\mathcal{D}^2(\Gbar)$ by making the following substitutions:
\begin{align}\label{eq: subs C to C_0}
\{i,p\rightarrow a\rightarrow p\} &\mapsto  \{i,p\},\nonumber\\
\{i,p\rightarrow a\rightarrow q\} &\mapsto \{i,p\rightarrow q\},\nonumber\\
\{i,q\rightarrow a\rightarrow p\} &\mapsto \{i,q\rightarrow p\},\nonumber\\
\{i,q\rightarrow a\rightarrow q\} &\mapsto \{i,q\}.
\end{align}
We say that $\Omega^{(2)}$ and $\Omegabar^{(2)}$ are equivalent if
\begin{equation}
\label{eq: equiv Omega Omega^0 }
 \Omega^{(2)}(P) = \Omegabar^{(2)}(\Cbar)
\end{equation}
whenever $P$ and $\Cbar$ are related as above.

Next we give an explicit prescription for $\Omega^{(2)}$.  For edges in $\mathcal{D}^2(\Gamma)$ that do not involve vertices on the subdivided edge, we take $\Omega^{(2)}$ to coincide with $\Omegabar^{(2)}$.  That is,  for $i, j, k$ all distinct from $p,a,q$, we take
\begin{equation}
\label{eq: Omega ijk }
\Omega^{(2)}(\{i,j\rightarrow k\}) = \Omegabar^{(2)}(\{i,j\rightarrow k\}).
\end{equation}
As $p$ and $q$ are not adjacent on $\Gamma$, we take
\begin{equation}
\label{eq: Omega ijk }
\Omega^{(2)}(\{i,p\rightarrow q\}) = 0.
\end{equation}
For edges on $\mathcal{D}^2(\Gamma)$ involving the subdivided segments $p\rightarrow a$ and $a \rightarrow q$,  we require that $\Omega^{(2)}(\{i,p\rightarrow a\})$ and  $\Omega^{(2)}(\{i,a\rightarrow q\})$ add up to give the  phase  $\Omegabar^{(2)}(i,p\rightarrow q)$ on the original edge.  The partitioning of the original phase between the subdivided segments amounts to a choice of gauge.  For definiteness, we will take the phases on the two halves of the subdivided edge to be the same, so that
\begin{equation} \label{eq: Omega p,q-> a}
\Omega^{(2)}(\{i, p\rightarrow a\})  =
\Omega^{(2)}(\{i, a\rightarrow q\})  = \frac12 \Omegabar^{(2)}(\{i,p\rightarrow q\}). 
\end{equation}

It remains to determine $\Omega^{(2)}$ for edges of $C_2(G)$ on which the stationary particle sits at the new vertex $a$.  This follows from requiring that $\Omega^{(2)}$ satisfy the relations
\begin{align}
\label{eq: relations}
\Omega^{(2)}(\{a,i\rightarrow j\}) + \Omega^{(2)}(\{j,a\rightarrow p\}) + \Omega^{(2)}(\{p,j\rightarrow i\}) + \Omega^{(2)}(\{i,p\rightarrow a\}) &= 0,\nonumber\\
\Omega^{(2)}(\{a,i\rightarrow j\}) + \Omega^{(2)}(\{j,a\rightarrow q\}) + \Omega^{(2)}(\{q,j\rightarrow i\}) + \Omega^{(2)}(\{i,q\rightarrow a\}) &= 0.
\end{align}
From  (\ref{eq: Omega p,q-> a}) and the antisymmetry property $\Omega^{(2)}(\{i,j\rightarrow k\}) = -\Omega(\{i,k\rightarrow j\})$, along with the relations \eqref{eq: relation 0 } satisfied by $\Omegabar^{(2)}$, it follows that these conditions are equivalent, and both are satisfied by taking
\begin{equation}\label{eq: Omega a i-> j}
    \Omega^{(2)}(a, i\rightarrow j)  = \half \left(\Omegabar^{(2)}(p,i\rightarrow j) + \Omegabar^{(2)}(q,i\rightarrow j)\right).
\end{equation}
Finally, when $i$ or $j$ coincide with one of the vertices $p$ or $q$ the expression should be
\begin{equation}\label{eq: Omega a p-> j}
    \Omega^{(2)}(\{a, q\rightarrow j\})  = \left(\Omegabar^{(2)}(\{p,q\rightarrow j\}) + \half \Omegabar^{(2)}(\{j,q\rightarrow p\})\right).
\end{equation}
It is then straightforward to verify that $\Omega^{(2)}(P) = \Omegabar^{(2)}(\Cbar)$ whenever $P$ and $\Cbar$ are related as in \eqref{eq: subs C_0 to C} and \eqref{eq: subs C to C_0} and that $\Omega^{(2)}$ is a topological gauge potential.

\paragraph{Construction of $n$-particle topological gauge potential}

Let $\Omegabar^{(2)}$ be a gauge potential on $\mathcal{D}^2(\Gbar)$.  By repeatedly applying the procedure from the previous paragraph, we can construct an equivalent gauge potential $\Omega^{(2)}$ on $\mathcal{D}^2(\Gamma)$, where $\Gamma$ is a sufficiently subdivided version of $\Gbar$, in which $n-2$ vertices are added to each edge of $\Gbar$.
We resolve $\Omega^{(2)}$ into its AB and statistics components $\Omega^{(2)}_{AB}$ and $\Omega^{(2)}_S$, as in \eqref{eq: decomposition }.  Suppose the pure AB component is described by the gauge potential $\omega^{(1)}$ on $\Gamma$.  We define the $n$-particle gauge potential, $\Omega^{(n)} $, on $\mathcal{D}^{n}(\Gamma)$ as follows. Given $(n+1)$ vertices of $\Gamma$, denoted $\{ v_1,\ldots, v_{n-1},i,j\}$, with $i\sim j$, we take
\begin{equation}
\label{eq: Omega^n }
\Omega^{(n)} \left(
\{ v_1,\ldots,v_{n-1}, i\rightarrow j\}
\right)
= \omega^{(1)}(i\rightarrow j) +
\sum_{r=1}^{n-1} \Omega^{(2)}_S(\{v_r, i\rightarrow j\}).
\end{equation}
That is, the phase associated with the one-particle move $i\rightarrow j$ is the sum of the AB-phase $\omega^{(1)}(i,j)$ and the two-particle statistics phases $\Omega_S^{(2)}(\{v_r, i\rightarrow j\})$ summed over the positions of the other particles.

Given that $\Omega^{(2)}$ is a topological gauge potential, let us verify that $\Omega^{(n)} $ is a topological gauge potential. Let $i\rightarrow k$ and $j\rightarrow l$ be distinct edges of $\Gamma$, and let $\{v_1,\ldots, v_{n-2}\}$ denote $(n-2)$ vertices of $\Gamma$ that  are distinct from $i$, $j$, $k$, $l$.  We need to verify if
\begin{gather*}
\Omega^{(n)} \left( \{v_1,\ldots, v_{n-2},i, j\rightarrow l\}\right) +
\Omega^{(n)} \left( \{v_1,\ldots, v_{n-2},l, i\rightarrow k\}\right) + \\
+\Omega^{(n)} \left( \{v_1,\ldots, v_{n-2},k, l\rightarrow j\}\right) +
\Omega^{(n)} \left( \{v_1,\ldots, v_{n-2},j, k\rightarrow i\}\right)=0.
\end{gather*}
Using \eqref{eq: Omega^n } it reduces to
\begin{small}
\begin{gather*}
\omega^{(1)} (i\rightarrow k) + \omega^{(1)} (k\rightarrow i) + \omega^{(1)} (j \rightarrow l) + \omega^{(1)} (l\rightarrow k) + \\
+
\left( \sum_{r = 1}^{n-2} \Omega^{(2)}_S(\{v_r, j\rightarrow l\}) +  \Omega^{(2)}_S(\{i, j\rightarrow l\}) \right) +
\left( \sum_{r = 1}^{n-2} \Omega^{(2)}_S(\{v_r, i\rightarrow k\}) +  \Omega^{(2)}_S(\{l, i\rightarrow k\}) \right)+ \\
+\left( \sum_{r = 1}^{n-2} \Omega^{(2)}_S(\{v_r, l\rightarrow j\}) +  \Omega^{(2)}_S(\{k, l\rightarrow j\}) \right) +
\left( \sum_{r = 1}^{n-2} \Omega^{(2)}_S(\{v_r, k\rightarrow i\}) +  \Omega^{(2)}_S(\{j, k\rightarrow i\}) \right). \\
\end{gather*}
\end{small}
Next, using the antisymmetry property $\Omega^{(2)}_S(\{v_r, i\rightarrow k\}) = -\Omega^{(2)}_S(\{v_r, k\rightarrow i\})$ and the fact that $\Omega^{(2)}_S$ is a topological gauge potential we get\begin{small}
\begin{gather*}
\sum_{r=1}^{n-2}  \left(\Omega^{(2)}_S(\{v_r, j\rightarrow l\}) + \Omega^{(2)}_S(\{v_r, l\rightarrow j\})\right) +
\left(\Omega^{(2)}_S(\{v_r, i\rightarrow k\}) + \Omega^{(2)}_S(\{v_r, k\rightarrow i\})\right) +\\
+ \Omega^{(2)}_S(\{i, j\rightarrow l\}) +  \Omega^{(2)}_S(\{l, i\rightarrow k\}) +  \Omega^{(2)}_S(\{k, l\rightarrow j\}) + \Omega^{(2)}_S(\{j, k\rightarrow i\})  = 0.
\end{gather*}\end{small}

Therefore, the gauge potential defined by \eqref{eq: Omega^n } is topological. Equivalence classes of n-particle topological gauge potentials are essentially elements of the first homology group $H_1(\mathcal{D}^2(\Gamma))$. By Theorem~\ref{thm: 5} the equivalence classes in $H_1(\mathcal{D}^n(\Gamma))$ are in 1-1 correspondence with equivalence classes in $H_1(\mathcal{D}^2(\Gamma))$. Hence, for $2$-connected graphs all choices of $n$-particle topological gauge potential on $\mathcal{D}^n(\Gamma)$  can be realized  by \eqref{eq: Omega^n }. Finally, note  that, as explained in \cite{JHJKJR}, having an $n$-particle topological gauge potential one can easily construct a tight-binding Hamiltonian which supports quantum statistics represented by it (see \cite{JHJKJR} for more details).

\section*{Acknowledgments}

We would like to thank Ram Band for helpful discussions. AS is supported by a University of Bristol
Postgraduate Research Scholarship and Polish Ministry of Science and Higher Education grant no. N N202
085840.  JMH would like to thank Bristol University for their hospitality during his sabbatical supported by the Baylor University research leave and
summer sabbatical programs during which
some of the work was carried out.

\section*{Appendix}

We present an argument which shows that the $n$-particle cycles given in sections \ref{sub:An-over-complete-basis} and
\ref{sub:An-over-complete-basis-2} form an over-complete spanning set of the first homology group $H_{1}(\mathcal{D}^{n}(\Gamma))$.  The argument follows the characterization of the fundamental group using discrete Morse theory by Farley and Sabalka \cite{FS05,FS08,FS12} or alternatively the characterization of the discrete Morse function for the $n$-particle graph \cite{S12}.  Here, however, we present the central idea in a way that does not assume a familiarity with discrete Morse theory in order to remain accessible.  For a rigorous proof we refer to the articles cited above.

Given a sufficiently subdivided graph $\Gamma$ we identify some maximal spanning subtree $T$ in $\Gamma$; $T$ is obtained by omitting exactly $\beta_1 (\Gamma)$ of the edges in $\Gamma$ such that $T$ remains connected but contains no loops.  The tree can then be drawn in the plane to fix an orientation.  A single vertex of degree $1$ in $T$ is identified as the root and the vertices of $T$ are labeled $1,2,\dots,|V|$ starting with $1$ for the root and labeling each vertex in turn traveling from the root around the boundary of $T$ clockwise, see figure \ref{fig:graphtotree}.

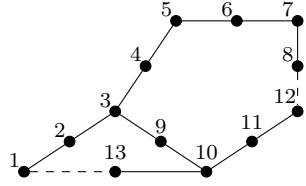
\begin{figure}[tbh]
\begin{center}
\begin{tikzpicture}[scale=0.2]
    \tikzstyle{every node}=[draw,circle,fill=black,minimum size=4pt,
                            inner sep=0pt]
    \draw (0,0) node (1) [label=120:$1$] {}
        -- (3,2) node (2) [label=120:$2$] {}
        -- (6,4) node (3) [label=120:$3$] {}
        -- (8,7) node (4) [label=120:$4$] {}
        -- (10,10) node (5) [label=120:$5$] {}
        -- (14,10) node (6) [label=120:$6$] {}
        -- (18,10) node (7) [label=120:$7$] {}
        -- (18,7) node (8) [label=120:$8$] {};
    \draw (3) -- (9,2) node (9) [label=90:$9$] {}
        -- (12,0) node (10) [label=90:$10$] {}
        -- (15,2) node (11) [label=90:$11$] {}
        -- (18,4) node (12) [label=120:$12$] {};
    \draw (10) -- (6,0) node (13) [label=90:$13$] {};

    \draw[dashed] (1) -- (13);
    \draw[dashed] (8) -- (12);
\end{tikzpicture}
\end{center}
\caption{\label{fig:graphtotree}A sufficiently subdivided graph for $3$ particles, edges in a maximal spanning tree are shown with solid lines and edges omitted to obtain the tree are shown with dashed lines.  Vertices are labeled following the boundary of the tree clockwise from the root vertex $1$.}
\end{figure}

To characterize a basis of $n$-particle cycles for the first homology group we fix a root configuration $\vx_0=\{1,2,\dots,n\}$ where the particles are lined up as close to the root as possible, see figure \ref{fig:Ybasis}(a).  The tree $T$ is used to establish a set of contactable paths between $n$-particle configurations on the graph (a discrete vector field). Given an $n$-particle configuration $\vx=\{v_1,\dots,v_n\}$ on the graph a path from $\vx$ to $\vx_0$ is a sequence of one-particle moves, where a single particle hops to an adjacent vacant vertex with the remaining $n-1$ particles remaining fixed.  This is a $1$-cell $\{v_1,\dots,v_{n-1},u\rightarrow v\}$ where $u$ and $v$ are the locations of the moving particle.  The labeling of the vertices in the tree provides a discrete vector field on the configuration space.  A particle moves according to the vector field if $n+1\rightarrow n$, i.e. the particle moves towards the root along the tree.  This allows a particle to move through a non-trivial vertex (a vertex of degree $\geq 3$) if the particle is coming from the direction clockwise from the direction of the root.  To define a flow that takes any configuration back to $\vx_0$ we also define a set of priorities at the non-trivial vertices that avoids $n$-particle paths crossing.  A particle may also move onto a non-trivial vertex $u$ according to the vector field if the $1$-cell $\{v_1,\dots,v_{n-1},u\rightarrow v\}$ does not contain a vertex $v_j$ with $v<v_j<u$; i.e. moving into a nontrivial vertex particles give way (yield) to the right.  So a particle can only move into the nontrivial vertex if there are no particles on branches of the graph between the branch the particle is on and the root direction clockwise from the root.  With this set of priorities it is clear that a path (sequence of $1$-cells) exists that takes any configuration $\vx$ to $\vx_0$ using only $1$-cells in the discrete vector field.  Equivalently by reversing the direction of edges in $1$-cells so that particles move away from the root we can move particles from the reference configuration $\vx_0$ to any configuration $\vx$ according to the flow.  As $n$-particle paths following this discrete flow do not cross these paths are contractible; equivalently we can always choose a basis of loops so that the phase around closed loops combining paths following the discrete flow is zero.

It remains to find a basis for the cycles that use $1$-cells not in the discrete vector field.  We see now that there are only two types of $1$-cells that are excluded; those where the edge $u\leftrightarrow v$ is one of the $\beta_1 (\Gamma)$ edges omitted from $\Gamma$ to construct $T$, and those where a particle moves through a non-trivial vertex out of order - without giving way to the right.

We first consider a $1$-cell $c_{u\rightarrow v}=\{v_1, \dots,v_{n-1}, u\rightarrow v\}$ where $u\leftrightarrow v$ is an omitted edge.  Such a $1$-cell is naturally associated with a cycle where the particles move from $\vx_0$ to $\{v_1,\dots,v_{n-1},u\}$ according to the vector field, then follow $c_{u\rightarrow v}$ and finally move back from $\{v_1,\dots,v_{n-1},v\}$ to $\vx_0$ according to the vector field.  These $n$-particle cycles are typically the AB-cycles where one particle moves around a loop in $\Gamma$ with the other particles at a given configuration.  We saw in section \ref{sub:An-over-complete-basis} that while the phase associated with an AB-cycle can depend on the position of the other particles, these phases can be parameterized by only $\beta_1(\Gamma)$ independent parameters; one parameter for those cycles using each omitted edge.

We now consider, instead, cycles that include a $1$-cell $c=\{v_1, \dots,v_{n-1},u \rightarrow v)\}$ where a particle moves out of order at a nontrivial vertex.  Again each such $1$-cell is naturally associated to a cycle $C$ through $\vx_0$ where the particle moves according to the vector field except when it uses the $1$-cell $c$.  Such a cycle is shown in figure \ref{fig:Ybasis}.

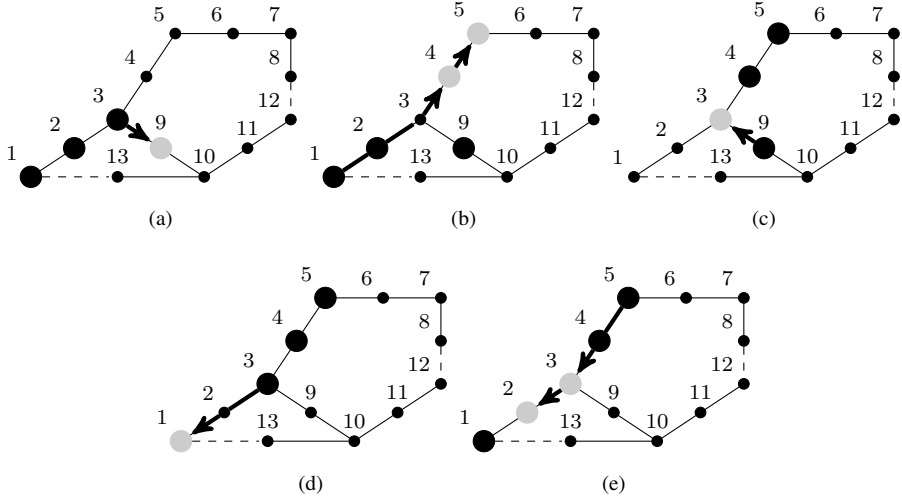
\begin{figure}[tbh]
\begin{center}
\begin{tikzpicture}[scale=0.19,>=stealth',shorten >=1pt]
    \tikzstyle{graph node}=[draw,circle,fill=black,minimum size=4pt,inner sep=0pt]

    \tikzstyle{blue node}=[draw,circle,fill=black,minimum size=8pt,inner sep=0pt]
    \tikzstyle{green node}=[draw,circle,fill=black,minimum size=8pt,inner sep=0pt]
    \tikzstyle{red node}=[draw,circle,fill=black,minimum size=8pt,inner sep=0pt]
    \tikzstyle{fblue node}=[draw=black!20,circle,fill=black!20,minimum size=8pt,inner sep=0pt]
    \tikzstyle{fgreen node}=[draw=black!20,circle,fill=black!20,minimum size=8pt,inner sep=0pt]
    \tikzstyle{fred node}=[draw=black!20,circle,fill=black!20,minimum size=8pt,inner sep=0pt]

    \draw (9,-3) node {(a)};

    \draw (0,0) node[blue node] (1) [label=120:$1$] {}
        -- (3,2) node[green node] (2) [label=120:$2$] {}
        -- (6,4) node[red node] (3) [label=120:$3$] {}
        -- (8,7) node[graph node] (4) [label=120:$4$] {}
        -- (10,10) node[graph node] (5) [label=120:$5$] {}
        -- (14,10) node[graph node] (6) [label=120:$6$] {}
        -- (18,10) node[graph node] (7) [label=120:$7$] {}
        -- (18,7) node[graph node] (8) [label=120:$8$] {};
    \draw (3) -- (9,2) node[fred node] (9) [label=90:$9$] {}
        -- (12,0) node[graph node] (10) [label=90:$10$] {}
        -- (15,2) node[graph node] (11) [label=90:$11$] {}
        -- (18,4) node[graph node] (12) [label=120:$12$] {};
    \draw (10) -- (6,0) node[graph node] (13) [label=90:$13$] {};

    \draw[dashed] (1) -- (13);
    \draw[dashed] (8) -- (12);

    \draw[black, ultra thick,->] (3) -- (9);
\end{tikzpicture}
\begin{tikzpicture}[scale=0.19,>=stealth',shorten >=1pt]
    \tikzstyle{graph node}=[draw,circle,fill=black,minimum size=4pt,
                            inner sep=0pt]

    \tikzstyle{blue node}=[draw,circle,fill=black,minimum size=8pt,inner sep=0pt]
    \tikzstyle{green node}=[draw,circle,fill=black,minimum size=8pt,inner sep=0pt]
    \tikzstyle{red node}=[draw,circle,fill=black,minimum size=8pt,inner sep=0pt]
    \tikzstyle{fblue node}=[draw=black!20,circle,fill=black!20,minimum size=8pt,inner sep=0pt]
    \tikzstyle{fgreen node}=[draw=black!20,circle,fill=black!20,minimum size=8pt,inner sep=0pt]
    \tikzstyle{fred node}=[draw=black!20,circle,fill=black!20,minimum size=8pt,inner sep=0pt]

    \draw (9,-3) node {(b)};

    \draw (0,0) node[blue node] (1) [label=120:$1$] {}
        -- (3,2) node[green node] (2) [label=120:$2$] {}
        -- (6,4) node[graph node] (3) [label=120:$3$] {}
        -- (8,7) node[fblue node] (4) [label=120:$4$] {}
        -- (10,10) node[fgreen node] (5) [label=120:$5$] {}
        -- (14,10) node[graph node] (6) [label=120:$6$] {}
        -- (18,10) node[graph node] (7) [label=120:$7$] {}
        -- (18,7) node[graph node] (8) [label=120:$8$] {};
    \draw (3) -- (9,2) node[red node] (9) [label=90:$9$] {}
        -- (12,0) node[graph node] (10) [label=90:$10$] {}
        -- (15,2) node[graph node] (11) [label=90:$11$] {}
        -- (18,4) node[graph node] (12) [label=120:$12$] {};
    \draw (10) -- (6,0) node[graph node] (13) [label=90:$13$] {};

    \draw[dashed] (1) -- (13);
    \draw[dashed] (8) -- (12);

    \draw[black, ultra thick,->] (4) -- (5);
    \draw[black, ultra thick,->] (3) -- (4);
    \draw[black, ultra thick] (1) -- (2) -- (3);
\end{tikzpicture}
\begin{tikzpicture}[scale=0.19,>=stealth',shorten >=1pt]
    \tikzstyle{graph node}=[draw,circle,fill=black,minimum size=4pt,
                            inner sep=0pt]

    \tikzstyle{blue node}=[draw,circle,fill=black,minimum size=8pt,inner sep=0pt]
    \tikzstyle{green node}=[draw,circle,fill=black,minimum size=8pt,inner sep=0pt]
    \tikzstyle{red node}=[draw,circle,fill=black,minimum size=8pt,inner sep=0pt]
    \tikzstyle{fblue node}=[draw=black!20,circle,fill=black!20,minimum size=8pt,inner sep=0pt]
    \tikzstyle{fgreen node}=[draw=black!20,circle,fill=black!20,minimum size=8pt,inner sep=0pt]
    \tikzstyle{fred node}=[draw=black!20,circle,fill=black!20,minimum size=8pt,inner sep=0pt]

    \draw (9,-3) node {(c)};

    \draw (0,0) node[graph node] (1) [label=120:$1$] {}
        -- (3,2) node[graph node] (2) [label=120:$2$] {}
        -- (6,4) node[fred node] (3) [label=120:$3$] {}
        -- (8,7) node[blue node] (4) [label=120:$4$] {}
        -- (10,10) node[green node] (5) [label=120:$5$] {}
        -- (14,10) node[graph node] (6) [label=120:$6$] {}
        -- (18,10) node[graph node] (7) [label=120:$7$] {}
        -- (18,7) node[graph node] (8) [label=120:$8$] {};
    \draw (3) -- (9,2) node[red node] (9) [label=90:$9$] {}
        -- (12,0) node[graph node] (10) [label=90:$10$] {}
        -- (15,2) node[graph node] (11) [label=90:$11$] {}
        -- (18,4) node[graph node] (12) [label=120:$12$] {};
    \draw (10) -- (6,0) node[graph node] (13) [label=90:$13$] {};

    \draw[dashed] (1) -- (13);
    \draw[dashed] (8) -- (12);

    \draw[black, ultra thick,->] (9) -- (3);
\end{tikzpicture}
\end{center}
\begin{center}
\begin{tikzpicture}[scale=0.19,>=stealth',shorten >=1pt]
    \tikzstyle{graph node}=[draw,circle,fill=black,minimum size=4pt,
                            inner sep=0pt]

    \tikzstyle{blue node}=[draw,circle,fill=black,minimum size=8pt,inner sep=0pt]
    \tikzstyle{green node}=[draw,circle,fill=black,minimum size=8pt,inner sep=0pt]
    \tikzstyle{red node}=[draw,circle,fill=black,minimum size=8pt,inner sep=0pt]
    \tikzstyle{fblue node}=[draw=black!20,circle,fill=black!20,minimum size=8pt,inner sep=0pt]
    \tikzstyle{fgreen node}=[draw=black!20,circle,fill=black!20,minimum size=8pt,inner sep=0pt]
    \tikzstyle{fred node}=[draw=black!20,circle,fill=black!20,minimum size=8pt,inner sep=0pt]
    \tikzstyle{plain node}=[draw]

    \draw (9,-3) node {(d)};

    \draw (0,0) node[fred node] (1) [label=120:$1$] {}
        -- (3,2) node[graph node] (2) [label=120:$2$] {}
        -- (6,4) node[red node] (3) [label=120:$3$] {}
        -- (8,7) node[blue node] (4) [label=120:$4$] {}
        -- (10,10) node[green node] (5) [label=120:$5$] {}
        -- (14,10) node[graph node] (6) [label=120:$6$] {}
        -- (18,10) node[graph node] (7) [label=120:$7$] {}
        -- (18,7) node[graph node] (8) [label=120:$8$] {};
    \draw (3) -- (9,2) node[graph node] (9) [label=90:$9$] {}
        -- (12,0) node[graph node] (10) [label=90:$10$] {}
        -- (15,2) node[graph node] (11) [label=90:$11$] {}
        -- (18,4) node[graph node] (12) [label=120:$12$] {};
    \draw (10) -- (6,0) node[graph node] (13) [label=90:$13$] {};

    \draw[dashed] (1) -- (13);
    \draw[dashed] (8) -- (12);

    \draw[black, ultra thick, ->] (3) -- (2) -- (1);
\end{tikzpicture}
\begin{tikzpicture}[scale=0.19,>=stealth',shorten >=1pt]
    \tikzstyle{graph node}=[draw,circle,fill=black,minimum size=4pt,
                            inner sep=0pt]

    \tikzstyle{blue node}=[draw,circle,fill=black,minimum size=8pt,inner sep=0pt]
    \tikzstyle{green node}=[draw,circle,fill=black,minimum size=8pt,inner sep=0pt]
    \tikzstyle{red node}=[draw,circle,fill=black,minimum size=8pt,inner sep=0pt]
    \tikzstyle{fblue node}=[draw=black!20,circle,fill=black!20,minimum size=8pt,inner sep=0pt]
    \tikzstyle{fgreen node}=[draw=black!20,circle,fill=black!20,minimum size=8pt,inner sep=0pt]
    \tikzstyle{fred node}=[draw=black!20,circle,fill=black!20,minimum size=8pt,inner sep=0pt]
    \tikzstyle{plain node}=[draw]

    \draw (9,-3) node {(e)};

    \draw (0,0) node[red node] (1) [label=120:$1$] {}
        -- (3,2) node[fblue node] (2) [label=120:$2$] {}
        -- (6,4) node[fgreen node] (3) [label=120:$3$] {}
        -- (8,7) node[blue node] (4) [label=120:$4$] {}
        -- (10,10) node[green node] (5) [label=120:$5$] {}
        -- (14,10) node[graph node] (6) [label=120:$6$] {}
        -- (18,10) node[graph node] (7) [label=120:$7$] {}
        -- (18,7) node[graph node] (8) [label=120:$8$] {};
    \draw (3) -- (9,2) node[graph node] (9) [label=90:$9$] {}
        -- (12,0) node[graph node] (10) [label=90:$10$] {}
        -- (15,2) node[graph node] (11) [label=90:$11$] {}
        -- (18,4) node[graph node] (12) [label=120:$12$] {};
    \draw (10) -- (6,0) node[graph node] (13) [label=90:$13$] {};

    \draw[dashed] (1) -- (13);
    \draw[dashed] (8) -- (12);

    \draw[black, ultra thick,->] (5) -- (4) -- (3);
    \draw[black, ultra thick, ->] (3) -- (2);

\end{tikzpicture}
\end{center}
\caption{\label{fig:Ybasis} An exchange cycle starting from the root configuration $\{1,2,3\}$ and using a single $1$-cell (c) that does not respect the flow at the non-trivial vertex $3$.  Large bold nodes indicate the initial positions of particles and light nodes their final positions.  In paths (a),(b),(d) and (e) particles move according to the vector field.}
\end{figure}

Such a cycle can be broken down into a product of $Y$-cycles in which pairs of particles are exchanged using three arms of the tree connected to the nontrivial vertex $v$ identified by $u,1$ and some $v_j$ where $v_j$ is a vertex in $c$ with $v<v_j<u$.  Figure \ref{fig:Ybasis2} shows a cycle homotopic to the cycle in figure \ref{fig:Ybasis} broken into the product of two $Y$-cycles; paths (a) through (c) and (d) through (e) respectively.  Notice that moving according to the vector field one returns from the initial configuration in figure \ref{fig:Ybasis2}(a) to the root configuration in figure \ref{fig:Ybasis}(a) and similarly one returns from the final configuration in figure \ref{fig:Ybasis2}(e) to the final configuration figure \ref{fig:Ybasis2}(d).  Then by contracting adjacent $1$-cells in the paths where the direction of the edge has been reversed it is straight forward to verify that the cycles in figures \ref{fig:Ybasis} and \ref{fig:Ybasis2} are indeed homotopic.

\begin{figure}[tbh]
\begin{center}

\begin{tikzpicture}[scale=0.19,>=stealth',shorten >=1pt]
    \tikzstyle{graph node}=[draw,circle,fill=black,minimum size=4pt,
                            inner sep=0pt]

    \tikzstyle{blue node}=[draw,circle,fill=black,minimum size=8pt,inner sep=0pt]
    \tikzstyle{green node}=[draw,circle,fill=black,minimum size=8pt,inner sep=0pt]
    \tikzstyle{red node}=[draw,circle,fill=black,minimum size=8pt,inner sep=0pt]
    \tikzstyle{fblue node}=[draw=black!20,circle,fill=black!20,minimum size=8pt,inner sep=0pt]
    \tikzstyle{fgreen node}=[draw=black!20,circle,fill=black!20,minimum size=8pt,inner sep=0pt]
    \tikzstyle{fred node}=[draw=black!20,circle,fill=black!20,minimum size=8pt,inner sep=0pt]

    \draw (9,-3) node {(a)};

    \draw (0,0) node[blue node] (1) [label=120:$1$] {}
        -- (3,2) node[graph node] (2) [label=120:$2$] {}
        -- (6,4) node[graph node] (3) [label=120:$3$] {}
        -- (8,7) node[fblue node] (4) [label=120:$4$] {}
        -- (10,10) node[green node] (5) [label=120:$5$] {}
        -- (14,10) node[graph node] (6) [label=120:$6$] {}
        -- (18,10) node[graph node] (7) [label=120:$7$] {}
        -- (18,7) node[graph node] (8) [label=120:$8$] {};
    \draw (3) -- (9,2) node[red node] (9) [label=90:$9$] {}
        -- (12,0) node[graph node] (10) [label=90:$10$] {}
        -- (15,2) node[graph node] (11) [label=90:$11$] {}
        -- (18,4) node[graph node] (12) [label=120:$12$] {};
    \draw (10) -- (6,0) node[graph node] (13) [label=90:$13$] {};

    \draw[dashed] (1) -- (13);
    \draw[dashed] (8) -- (12);

    \draw[black, ultra thick,->] (1) -- (2) -- (3) -- (4);

\end{tikzpicture}
\begin{tikzpicture}[scale=0.19,>=stealth',shorten >=1pt]
    \tikzstyle{graph node}=[draw,circle,fill=black,minimum size=4pt,
                            inner sep=0pt]

    \tikzstyle{blue node}=[draw,circle,fill=black,minimum size=8pt,inner sep=0pt]
    \tikzstyle{green node}=[draw,circle,fill=black,minimum size=8pt,inner sep=0pt]
    \tikzstyle{red node}=[draw,circle,fill=black,minimum size=8pt,inner sep=0pt]
    \tikzstyle{fblue node}=[draw=black!20,circle,fill=black!20,minimum size=8pt,inner sep=0pt]
    \tikzstyle{fgreen node}=[draw=black!20,circle,fill=black!20,minimum size=8pt,inner sep=0pt]
    \tikzstyle{fred node}=[draw=black!20,circle,fill=black!20,minimum size=8pt,inner sep=0pt]
    \tikzstyle{plain node}=[draw]

    \draw (9,-3) node {(b)};

    \draw (0,0) node[fred node] (1) [label=120:$1$] {}
        -- (3,2) node[graph node] (2) [label=120:$2$] {}
        -- (6,4) node[graph node] (3) [label=120:$3$] {}
        -- (8,7) node[blue node] (4) [label=120:$4$] {}
        -- (10,10) node[green node] (5) [label=120:$5$] {}
        -- (14,10) node[graph node] (6) [label=120:$6$] {}
        -- (18,10) node[graph node] (7) [label=120:$7$] {}
        -- (18,7) node[graph node] (8) [label=120:$8$] {};
    \draw (3) -- (9,2) node[red node] (9) [label=90:$9$] {}
        -- (12,0) node[graph node] (10) [label=90:$10$] {}
        -- (15,2) node[graph node] (11) [label=90:$11$] {}
        -- (18,4) node[graph node] (12) [label=120:$12$] {};
    \draw (10) -- (6,0) node[graph node] (13) [label=90:$13$] {};

    \draw[dashed] (1) -- (13);
    \draw[dashed] (8) -- (12);

    \draw[black, ultra thick, ->] (9) -- (3) -- (2) -- (1);
\end{tikzpicture}
\begin{tikzpicture}[scale=0.19,>=stealth',shorten >=1pt]
    \tikzstyle{graph node}=[draw,circle,fill=black,minimum size=4pt,
                            inner sep=0pt]

    \tikzstyle{blue node}=[draw,circle,fill=black,minimum size=8pt,inner sep=0pt]
    \tikzstyle{green node}=[draw,circle,fill=black,minimum size=8pt,inner sep=0pt]
    \tikzstyle{red node}=[draw,circle,fill=black,minimum size=8pt,inner sep=0pt]
    \tikzstyle{fblue node}=[draw=black!20,circle,fill=black!20,minimum size=8pt,inner sep=0pt]
    \tikzstyle{fgreen node}=[draw=black!20,circle,fill=black!20,minimum size=8pt,inner sep=0pt]
    \tikzstyle{fred node}=[draw=black!20,circle,fill=black!20,minimum size=8pt,inner sep=0pt]
    \tikzstyle{plain node}=[draw]

    \draw (9,-3) node {(c)};

    \draw (0,0) node[red node] (1) [label=120:$1$] {}
        -- (3,2) node[graph node] (2) [label=120:$2$] {}
        -- (6,4) node[graph node] (3) [label=120:$3$] {}
        -- (8,7) node[blue node] (4) [label=120:$4$] {}
        -- (10,10) node[green node] (5) [label=120:$5$] {}
        -- (14,10) node[graph node] (6) [label=120:$6$] {}
        -- (18,10) node[graph node] (7) [label=120:$7$] {}
        -- (18,7) node[graph node] (8) [label=120:$8$] {};
    \draw (3) -- (9,2) node[fblue node] (9) [label=90:$9$] {}
        -- (12,0) node[graph node] (10) [label=90:$10$] {}
        -- (15,2) node[graph node] (11) [label=90:$11$] {}
        -- (18,4) node[graph node] (12) [label=120:$12$] {};
    \draw (10) -- (6,0) node[graph node] (13) [label=90:$13$] {};

    \draw[dashed] (1) -- (13);
    \draw[dashed] (8) -- (12);

    \draw[black, ultra thick, ->] (4) -- (3) -- (9);
\end{tikzpicture}
\end{center}
\begin{center}
\begin{tikzpicture}[scale=0.19,>=stealth',shorten >=1pt]
    \tikzstyle{graph node}=[draw,circle,fill=black,minimum size=4pt,inner sep=0pt]
    \tikzstyle{blue node}=[draw,circle,fill=black,minimum size=8pt,inner sep=0pt]
    \tikzstyle{green node}=[draw,circle,fill=black,minimum size=8pt,inner sep=0pt]
    \tikzstyle{red node}=[draw,circle,fill=black,minimum size=8pt,inner sep=0pt]
    \tikzstyle{fblue node}=[draw=black!20,circle,fill=black!20,minimum size=8pt,inner sep=0pt]
    \tikzstyle{fgreen node}=[draw=black!20,circle,fill=black!20,minimum size=8pt,inner sep=0pt]
    \tikzstyle{fred node}=[draw=black!20,circle,fill=black!20,minimum size=8pt,inner sep=0pt]
    \tikzstyle{plain node}=[draw]

    \draw (9,-3) node {(d)};

    \draw (0,0) node[red node] (1) [label=120:$1$] {}
        -- (3,2) node[fblue node] (2) [label=120:$2$] {}
        -- (6,4) node[graph node] (3) [label=120:$3$] {}
        -- (8,7) node[graph node] (4) [label=120:$4$] {}
        -- (10,10) node[green node] (5) [label=120:$5$] {}
        -- (14,10) node[graph node] (6) [label=120:$6$] {}
        -- (18,10) node[graph node] (7) [label=120:$7$] {}
        -- (18,7) node[graph node] (8) [label=120:$8$] {};
    \draw (3) -- (9,2) node[blue node] (9) [label=90:$9$] {}
        -- (12,0) node[graph node] (10) [label=90:$10$] {}
        -- (15,2) node[graph node] (11) [label=90:$11$] {}
        -- (18,4) node[graph node] (12) [label=120:$12$] {};
    \draw (10) -- (6,0) node[graph node] (13) [label=90:$13$] {};

    \draw[dashed] (1) -- (13);
    \draw[dashed] (8) -- (12);

    \draw[black, ultra thick, ->] (9) -- (3) -- (2);
\end{tikzpicture}
\begin{tikzpicture}[scale=0.19,>=stealth',shorten >=1pt]
    \tikzstyle{graph node}=[draw,circle,fill=black,minimum size=4pt,inner sep=0pt]
    \tikzstyle{blue node}=[draw,circle,fill=black,minimum size=8pt,inner sep=0pt]
    \tikzstyle{green node}=[draw,circle,fill=black,minimum size=8pt,inner sep=0pt]
    \tikzstyle{red node}=[draw,circle,fill=black,minimum size=8pt,inner sep=0pt]
    \tikzstyle{fblue node}=[draw=black!20,circle,fill=black!20,minimum size=8pt,inner sep=0pt]
    \tikzstyle{fgreen node}=[draw=black!20,circle,fill=black!20,minimum size=8pt,inner sep=0pt]
    \tikzstyle{fred node}=[draw=black!20,circle,fill=black!20,minimum size=8pt,inner sep=0pt]
    \tikzstyle{plain node}=[draw]

    \draw (9,-3) node {(e)};

    \draw (0,0) node[red node] (1) [label=120:$1$] {}
        -- (3,2) node[blue node] (2) [label=120:$2$] {}
        -- (6,4) node[graph node] (3) [label=120:$3$] {}
        -- (8,7) node[graph node] (4) [label=120:$4$] {}
        -- (10,10) node[green node] (5) [label=120:$5$] {}
        -- (14,10) node[graph node] (6) [label=120:$6$] {}
        -- (18,10) node[graph node] (7) [label=120:$7$] {}
        -- (18,7) node[graph node] (8) [label=120:$8$] {};
    \draw (3) -- (9,2) node[fgreen node] (9) [label=90:$9$] {}
        -- (12,0) node[graph node] (10) [label=90:$10$] {}
        -- (15,2) node[graph node] (11) [label=90:$11$] {}
        -- (18,4) node[graph node] (12) [label=120:$12$] {};
    \draw (10) -- (6,0) node[graph node] (13) [label=90:$13$] {};

    \draw[dashed] (1) -- (13);
    \draw[dashed] (8) -- (12);

    \draw[black, ultra thick, ->] (5) -- (4) -- (3) -- (9);

\end{tikzpicture}
\begin{tikzpicture}[scale=0.19,>=stealth',shorten >=1pt]
    \tikzstyle{graph node}=[draw,circle,fill=black,minimum size=4pt,inner sep=0pt]
    \tikzstyle{blue node}=[draw,circle,fill=black,minimum size=8pt,inner sep=0pt]
    \tikzstyle{green node}=[draw,circle,fill=black,minimum size=8pt,inner sep=0pt]
    \tikzstyle{red node}=[draw,circle,fill=black,minimum size=8pt,inner sep=0pt]
    \tikzstyle{fblue node}=[draw=black!20,circle,fill=black!20,minimum size=8pt,inner sep=0pt]
    \tikzstyle{fgreen node}=[draw=black!20,circle,fill=black!20,minimum size=8pt,inner sep=0pt]
    \tikzstyle{fred node}=[draw=black!20,circle,fill=black!20,minimum size=8pt,inner sep=0pt]
    \tikzstyle{plain node}=[draw]

    \draw (9,-3) node {(f)};

    \draw (0,0) node[red node] (1) [label=120:$1$] {}
        -- (3,2) node[blue node] (2) [label=120:$2$] {}
        -- (6,4) node[graph node] (3) [label=120:$3$] {}
        -- (8,7) node[graph node] (4) [label=120:$4$] {}
        -- (10,10) node[fblue node] (5) [label=120:$5$] {}
        -- (14,10) node[graph node] (6) [label=120:$6$] {}
        -- (18,10) node[graph node] (7) [label=120:$7$] {}
        -- (18,7) node[graph node] (8) [label=120:$8$] {};
    \draw (3) -- (9,2) node[green node] (9) [label=90:$9$] {}
        -- (12,0) node[graph node] (10) [label=90:$10$] {}
        -- (15,2) node[graph node] (11) [label=90:$11$] {}
        -- (18,4) node[graph node] (12) [label=120:$12$] {};
    \draw (10) -- (6,0) node[graph node] (13) [label=90:$13$] {};

    \draw[dashed] (1) -- (13);
    \draw[dashed] (8) -- (12);

    \draw[black, ultra thick, ->] (2) -- (3) -- (4) -- (5);
\end{tikzpicture}
\end{center}
\caption{\label{fig:Ybasis2} Examples of paths that form $Y$-cycles in the over complete basis; large bold nodes indicate the initial positions of particles on the path and light nodes the final position a particle moves to. (a),(b) and (c) together form a $Y$-cycle, exchanging two particles at the non-trivial vertex $3$, similarly (c),(d) and (e) also form a $Y$-cycle.  Paths (a) through (e) together in order is a cycle homotopic to the exchange cycle starting from the root configuration shown in figure \ref{fig:Ybasis}.}
\end{figure}

Given a cycle $C$ from $\vx_0$ associated with a $1$-cell $\mathcal{c}$ that does not respect the ordering at a nontrivial vertex to obtain a factorization of $C$ as a product of $Y$-cycles one need only start from $c$ and follow $C$ until it is necessary to move a third particle.  Instead of moving the third particle close the path to make a $Y$-cycle, which requires moving only one of the two particles moved so far.  Then retrace ones steps to rejoin $C$ and move the third particle through the nontrivial vertex again close a $Y$-cycle and repeat.  As any permutation can be written as the product of exchanges any such cycle $C$ can be factored as a product of $Y$-cycles.

Finally, as any $n$-particle cycle can be written as a closed sequence of $1$-cells and between $1$-cells we can add contactable paths according to the vector field without changing the phase associated with a cycle, we see that the AB-cycles and the cycles associated with $Y$ subgraphs centered at the nontrivial vertices form a basis for the $n$-particle cycles.  Clearly this spanning set will, in general, be over-complete as many relations between these cycles exist in a typical graph, in fact the full discrete Morse theory argument shows that all such relations are determined by critical $2$-cells \cite{FS05}.

\end{document}